\theoremstyle{definition}
\newtheorem{definition}{Definition}
\newtheorem{theorem}{Theorem}
\newtheorem{proposition}{Proposition}
\newtheorem{corollary}{Corollary}
\newcommand{\cA}{{\mathcal A}}
\newcommand{\cB}{{\mathcal B}}
\newcommand{\cF}{{\mathcal F}}
\newcommand{\cG}{{\mathcal G}}
\newcommand{\cH}{{\mathcal H}}
\newcommand{\cL}{{\mathcal L}}
\newcommand{\cM}{{\mathcal M}}
\newcommand{\cS}{{\mathcal S}}
\newcommand{\cT}{{\mathcal T}}
\newcommand{\cZ}{{\mathcal Z}}
\def\inv{{\mbox{\tiny -1}}}
\newcommand\beq{\begin{equation}}
\newcommand\eeq{\end{equation}}
\newcommand{\be}{\begin{equation}}
\newcommand{\ee}{\end{equation}}
\newcommand{\bes}{\begin{eqnarray}}
\newcommand{\ees}{\end{eqnarray}}
\newcommand{\bea}{\begin{eqnarray}}
\newcommand{\eea}{\end{eqnarray}}
\def\vphi{{\varphi}}
\def\vphib{\overline{{\varphi}}}
\newcommand{\one}{\mbox{$1 \hspace{-1.0mm}  {\bf l}$}}
      \def\nn{{\nonumber}}
\newcommand{\su}{\mathfrak{su}}
\newcommand{\SU}{\mathrm{SU}}
\def\extd{\mathrm {d}}
\newcommand{\e}{\epsilon}
\newcommand\acts\triangleright
\newcounter{letter} \newcounter{numeral} \newcounter{Numeral}
\def\vphi{\varphi}
\def\e{\mbox{e}}
\def\extd{\mathrm {d}}
\newtheorem{theo}{Theorem}
\newtheorem{lemma}[theo]{Lemma}
\newcommand{\prf }{\medskip\noindent{\bf Proof\ }}
\begin{document}


\begin{titlepage}
\begin{flushright}
LPT-ORSAY 13-25, AEI-2013-167\\
\end{flushright}

\vspace{20pt}

\begin{center}

{\Large\bf Renormalization of an $SU(2)$ Tensorial Group Field Theory} \\
\medskip
{\Large \bf in Three Dimensions}
\vspace{15pt}

{\large Sylvain Carrozza$^{a,b}$, Daniele Oriti$^{b}$ and Vincent Rivasseau$^{a,c} $}

\vspace{15pt}

$^{a}${\sl Laboratoire de Physique Th\'{e}orique, CNRS UMR 8627,\\
 Universit\'{e} Paris Sud, F-91405 Orsay Cedex, France, EU\\
}
\vspace{5pt}

$^{b}${\sl Max Planck Institute for Gravitational Physics,\\
Albert Einstein Institute, Am M\"uhlenberg 1, 14476 Golm, Germany, EU\\
}
\vspace{5pt}
$^{c}${\sl Perimeter Institute, Waterloo, Canada\\
}
\vspace{5pt}

{\sl Emails:   sylvain.carrozza@aei.mpg.de,  daniele.oriti@aei.mpg.de,  rivass@th.u-psud.fr
}

\vspace{10pt}

\begin{abstract}
We address in this paper the issue of renormalizability for SU(2) Tensorial Group Field
Theories (TGFT) with geometric Boulatov-type conditions in three dimensions. 
We prove that interactions up to $\phi^6$-tensorial type are 
just renormalizable without any anomaly.
Our new models define the renormalizable TGFT version of the Boulatov model and
provide therefore a new approach to quantum gravity in three dimensions.
Among the many new technical results established in this paper are a general classification of just renormalizable models with gauge invariance condition, and in particular concerning properties of melonic graphs, the second order expansion 
of melonic two point subgraphs needed for wave-function renormalization. 
\end{abstract}

\end{center}

\noindent  Pacs numbers:  11.10.Gh, 04.60.-m
\\
\noindent  Key words: Renormalization, group field theory, tensor models,
quantum gravity, lattice gauge theory. 

\setcounter{footnote}{0}

\end{titlepage}


\section*{Introduction}
\addcontentsline{toc}{section}{Introduction}

Tensorial group field theories (TGFTs)  \cite{GFT1,GFT2,GFT3, vincentTensor} are promising candidates for a background independent formulation of quantum gravity. They represent the convergence of developments in loop quantum gravity \cite{LQG}, in its covariant, simplicial implementation in terms of spin foam models \cite{SF,zakopane}, and of the extension of the formalism of matrix models for 2d gravity \cite{MM} to higher dimensions. Group field theories (GFTs) \cite{GFT1,GFT2,GFT3} can be seen as a second quantization of  loop quantum gravity, adapted to a discrete setting, such that spin networks (the quantum states of geometry in LQG) are created/annihilated with their interaction processes being assigned a Feynman amplitude which corresponds to the definition of a spin foam model. Accordingly, the data labeling field, states and histories (Feynman diagrams) of GFTs are group elements, Lie algebra elements or group representations. These data are very useful to extract geometric content from GFT structures, beside their combinatorial aspects and to characterize better their quantum dynamics. In this context promising models for 4d quantum gravity have been developed, e.g. \cite{EPRL, BO-Immirzi}. These are models based on the group manifold $SU(2)$ or $SO(3,1)$, constructed by imposing additional \lq\lq simplicity\rq\rq conditions, motivated by simplicial geometry and classical continuum gravity, onto GFT models describing topological BF theory, already characterized by a {\it gauge invariance} condition imposed on the GFT fields. 

The Feynman diagrams of GFTs are cellular complexes, and the perturbative GFT dynamics is defined by the sum over them, in principle extended to include arbitrary topologies. Recently, work on (colored) tensor models \cite{tensor, tensorReview,universality}, generalizing matrix models to define a perturbative sum over cellular complexes of arbitrary dimension, have led to a detailed understanding of the combinatorial features, statistical properties and universality aspects of such sums. The progress has been remarkable, leading for example to: 1) the definition of a large-N expansion \cite{large-N} (where $N$ is the size of the tensor index set), and the identification of the dominant configurations in this expansion, which turn out to be special types of spherical complexes called \lq\lq melons\rq\rq; 2) the proof that random un-symmetric rank-d tensors have natural polynomial interactions based on  $U(N)^{\otimes d}$ invariance\footnote{There have been also interesting applications to statistical 
physics, in particular dimers \cite{bonzomerbin} and spin glasses \cite{BGS}}. The incorporation of these key insights, coming from simpler tensor models, into the GFT formalism defines what we call {\it tensorial group field theories} possessing the richer pre-geometric content suggested by loop quantum gravity and spin foam models, added to the solid mathematical backbone of tensor models.

All these approaches define a fundamental quantum dynamics for degrees of freedom which are discrete, characterized by algebraic and combinatorial data only, thus pre-geometric. The key open issue is to extract from this the microscopic quantum dynamics and effective continuous limit of spacetime and geometry, with an effective dynamics that has to be related to (some modified form of) General Relativity. This transition to a continuum, geometric description has been dubbed \lq\lq geometrogenesis\rq\rq, and suggested to be associated to one or several phase transitions of the underlying quantum gravity system, with the further suggestion that the relevant phase corresponds to a {\it condensate} of the microscopic degrees of freedom \cite{GFTfluid, lorenzoGFT, vincentTensor}. This picture is even partially realized in \cite{GFTcosmo}. The problem can be approached in purely statistical terms in tensor models \cite{critical}, but the extra data of GFTs allow one to make use of the results of loop quantum gravity \cite{LQG} to read out continuum physics from specific models. 

In fact, as non-trivial quantum field theories, TGFTs offer a very convenient setting to approach this problem. Effective continuum physics can be looked for in their symmetries \cite{joseph, GFTdiffeos, virasoro}, or in collective effects to be extracted, for example, via mean field techniques \cite{danielelorenzo, danieleflorianetera,effHamilt}, or encoded in simplified models \cite{gfc}. The most powerful tool they offer, however, is the renormalization group. It is indeed the renormalization group that should govern the flow from the microscopic dynamics of few pre-geometric TGFT degrees of freedom to their effective macroscopic dynamics, involving an infinite number of them (modulo, of course, further approximation of the resulting continuum theory).  

The study of (perturbative) renormalizability of TGFTs has been one of the main directions of developments in recent years. This includes important, if preliminary calculations of radiative corrections in TGFTs with a direct interpretation in terms of quantum gravity, in both 3d \cite{josephvalentin} and 4d \cite{aldo}, and various steps in a systematic program \cite{GFTrenorm,lin,valentinmatteo,bgriv,addendum,josephsamary,josephaf,scaling,COR,gelounlivine,SVT} whose goal is a complete proof of renormalizability of realistic TGFT models for 4d quantum gravity, including (or reproducing at some effective level) all the ingredients and data that seem to be relevant for a proper encoding of quantum geometry. The next step in the same program would be a full characterization of the renormalization group flow of the same models, as encoded in the RG equations 
and in particular their beta functions. Important results on this second point have been obtained in \cite{josephsamary,josephaf}, where asymptotic freedom has been established for some simple TGFT models, but also argued to be a general feature in the TGFT formalism. Indeed wave function renormalization seems generically stronger in the tensorial context than in the scalar, vector or matrix case. This feature would make them prime candidates for a geometrogenesis scenario, as a quantum gravity analog of quark confinement in QCD. 
The last step would finally be a detailed study of their constructive aspects\footnote{Indeed constructibility of TGFTs can be assessed via rigorous \emph{constructive}
analysis in their dilute perturbative phase, through the \emph{loop vertex expansion} \cite{Rivasseau:2007fr}. This tool has been already applied to tensor models \cite{Magnen:2009at, universality} and has indeed a very general range of applicability as far as field theories are concerned \cite{Rivasseau:2010ke}.}.  

Such systematic renormalization analysis requires first of all a clear definition of the TGFT models one is working with. As field theories, TGFTs involve a choice of a propagator and  of a class of interactions.    

Concerning the kinetic term, the usual quantum gravity TGFT models suggested by loop quantum gravity are {\it ultralocal} with trivial kinetic operators (delta functions or simple projectors). These seem appropriate from the perspective of simplicial gravity path integrals, but generally do not allow the definition of renormalization group scales. It is also true that these models are still highly non-trivial due to specific symmetries and other conditions imposed on the fields and to the peculiar non-local nature of the interactions, thus it is possible that they can provide an alternative, less direct definition of such scales. This possibility however has not been explored yet. Such scales are instead defined in a very straightforward manner in proper {\it dynamical} TGFTs (first considered, with different motivations, in \cite{generalisedGFT}), characterized by kinetic operators given by differential operators on the group manifold, such as the Laplace-Beltrami operator. There are even indications \cite{josephvalentin} that ultralocal models turn into dynamical models as soon as radiative corrections are considered, since the kinetic terms with Laplacian operators are required as counter-terms. For these reasons, we consider these dynamical models in this paper.

As for the interactions, in usual quantum field theories, these are specified by the requirement of {\it locality}, which in turns translates into the simple identification of field arguments in the interaction terms entering the action. From this formal perspective, TGFTs are {\it non-local}, in that the field arguments in the interaction terms generically have a combinatorially non-trivial pattern of convolutions. Indeed, they fall into two classes, each corresponding to a suggested alternative notion of locality. TGFT models corresponding to spin foam models and inspired by LQG impose {\it simpliciality} of the interactions, whereby the combinatorics of field convolutions describes the gluing of $(d-1)$-simplices across shared $(d-2)$-simplices to form $d$-simplices. This comes from the wish to have Feynman diagrams corresponding to simplicial complexes and weighted by a group-theoretic version of a simplicial gravity path integral. In turn, work on tensor model universality and on TGFT renormalization has suggested the notion of {\it traciality}, in turn coming from the mentioned $U(N)^d$ invariance. We detail this notion in the following, as we are going to work with interactions incorporating it. Once more, these two notions of locality and the resulting types of interactions are not disconnected, even though their exact relation is not yet understood: integration of fields in a path integral for TGFTs based on simpliciality does in fact result in effective interactions (for the remaining fields) characterized by $U(N)^d$ invariance \cite{uncoloring}. Moreover, the combinatorics of such tensor invariant can be represented by polytopes with triangular faces (in turn obtainable by gluing tetrahedra around common vertices) \cite{uncoloring}. 

The first TGFT models in 3d and 4d were shown to be perturbatively renormalizable at all orders in \cite{bgriv,josephsamary}. These were Abelian models with tensor invariance and Laplacian kinetic term, with no additional constraints on the fields. The next step was to include {\it gauge invariance}, which in turns results in the presence of a discrete gauge connection at the level of the Feynman amplitudes of the theory. This step was taken in \cite{COR} where an Abelian TGFT model in 4d incorporating such condition was shown to be super-renormalizable, and a general classification of Abelian models in any dimension in terms of their divergences was defined. The generalization to gauge invariant TGFTs required several non-trivial adaptations of standard notions from the renormalization of local quantum field theories to be achieved. We take advantage of such refined, generalized notions in this paper. Indeed, we take here a further step towards renormalization of realistic TGFT models for 4d gravity, and study for the first time the renormalizability of a  {\it non-Abelian} TGFT model, specializing to the 3d case and to the group manifold $SU(2)$. Other just renormalizable models of Abelian type in 5 and 6 dimensions have been shown renormalizable in \cite{SVT}.

We define the models we work with in section \ref{models}. We first discuss generic non-Abelian models, which include a gauge invariance condition under the diagonal action of $SU(2)$ on this group manifold, use a Laplacian kinetic term and tensor invariant interactions. In the same section, we define all the generalized QFT notions that are needed for the renormalization analysis, e.g. {\it face-connectedness} and (quasi-)locality, recalling or further generalizing the definitions given in \cite{COR}. We recall as well, in section \ref{powercounting} the Abelian power counting of divergences, for arbitrary dimension and Abelian group, obtained in \cite{COR}. We analyze further this divergence structure, as it will be relevant for the non-Abelian case as well, and use this classification to identify just-renormalizable models
in this category.

In section \ref{su2_model} we introduce the non-Abelian model. It is a model in the same class as the previous Abelian ones, but based on the group manifold $SU(2)^3$. It is a modification of the Boulatov model \cite{boulatov} in two key aspects. First the interaction is based on tensor-invariant colored gluings rather 
than the initial interaction proposed by Boulatov. Second it has a Laplacian term which changes the amplitudes. This term 
has not been given yet a clear geometric interpretation in terms of discrete gravity actions.
Without it, the model would correspond to a quantization of topological BF theory discretized on a cellular complex 
described by gluing generalized polytopes with triangular faces. 

 We introduce all the relevant interactions and the needed counter-terms, and identify all the divergent subgraphs. 

We perform the renormalization of this model in section \ref{multiscale}, via rigorous multi-scale expansion in the style of \cite{Riv}. 
The model turns out to be just-renormalizable (in contrast to the super-renormalizability of the Abelian case) up to interactions of degree 6.
The renormalizability analysis involves a number of interesting technical discoveries, in particular about various properties of melonic graphs. Among them, 
the structure of external faces of melonic diagrams, their inclusion relations, and the central role shown for the notion of face-connectedness, in particular concerning the expansion of divergences around their local contributions.

Finally, in section \ref{finiteness} we prove the finiteness of the renormalized series, that is we establish a BPHZ theorem for our TGFT model.

\section{TGFT models with closure constraint}
\label{models}

In this section, we recall general properties of TGFT's with closure constraint (gauge invariance) and Laplacian propagator, as defined in \cite{COR}. We then review the main conclusions of this first study that are
relevant to the present paper. These include a refined notion of connectedness, hence of quasi-locality, as well as an optimal Abelian power-counting. We finally discuss the relevance of this Abelian
power-counting in a generic non-Abelian context. 

\subsection{Definition and Feynman amplitudes}



A generic TGFT is a quantum field theory of a tensorial field, with entries in a Lie group. In this paper we assume $G$ to be compact, and the field to be a rank-$d$\footnote{Throughout this article, we assume $d \geq 3$.} complex function $\vphi( g_1 ,
\dots , g_d )$. The statistics is then defined by a partition function
\beq
\cZ = \int \extd \mu_C (\vphi , \vphib) \, \e^{- S(\vphi , \vphib )}\,,
\eeq
where $\extd \mu_C (\vphi , \vphib)$ is a Gaussian measure characterized by its covariance $C$ (i.e. propagator), and $S$ is the interaction part of the action. As in any quantum field theory,
possible interactions are determined by a \textit{locality principle}, while the definition of the dynamics (including possible constraints on the degrees of freedom of the fields) is completed by the propagator $C$, which generically breaks locality.

\
GFTs used in the context of loop quantum gravity and spin foam models use a notion of {\it simpliciality}, i.e. the requirement that interaction vertices correspond to $d$-simplices, obtained by gluing along sub-faces the $(d-1)$-simplices associated to each field. TGFTs propose a new notion of locality, in the form of \textit{tensor invariance}, initially proposed in the realm of tensor models (whence the extra characterization of these GFTs as \lq tensorial\rq). It can be thought of as a limit of a $U(N)^{\otimes d}$ invariance,
where $N$ is a cut-off on representation labels (e.g. spins) in the harmonic expansion of the field. In simpler terms, \textit{tensor invariants} are convolutions of a certain number of fields $\vphi$
and $\vphib$ such that any $k$-th index of a field $\vphi$ is
contracted with a $k$-th index of a conjugate field $\overline{\vphi}$. They are dual to $d$-\textit{colored graphs}, built from two types of nodes and $d$ types of colored edges: each white (resp. black) dot
represents a field $\vphi$ (resp. $\vphib$), while a contraction of two indices in position $k$ is associated to an edge with color label $k$. Connected such graphs, called $d$\textit{-bubbles}, generate the set of 
\textit{connected tensor invariants}. See Figure \ref{examples_bubbles} for examples in dimension $d = 3$. We assume that the interaction part of the action is a sum of such connected invariants
\beq
S(\vphi , \vphib) = \sum_{b \in \cB} t_b I_b (\vphi , \vphib)\,, 
\eeq
where $\cB$ is a finite set of $d$-bubbles, and $I_b$ is the connected invariant dual to the bubble $b$.

\begin{figure}[h]
\begin{center}
\includegraphics[scale=0.5]{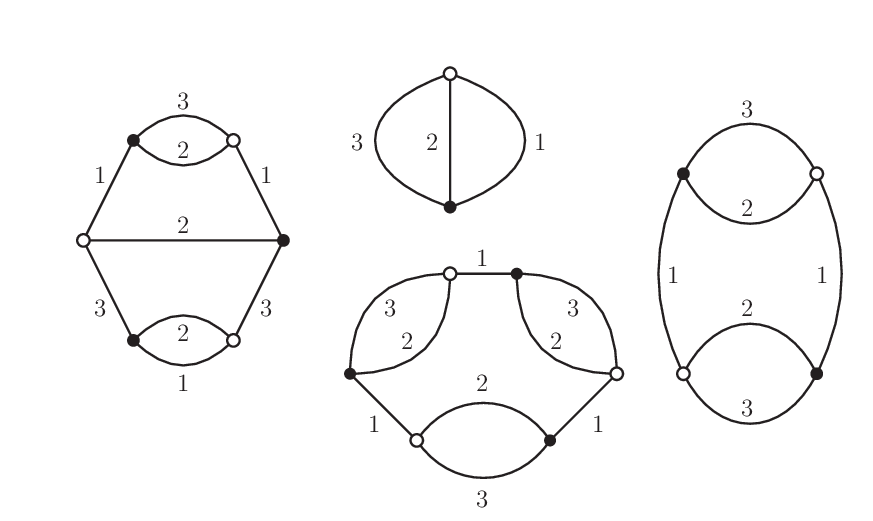}
\caption{Examples of $3$-bubbles.}
\label{examples_bubbles}
\end{center}
\end{figure}

\
The Gaussian measure $\extd \mu_C$ implements both the dynamics, through a Laplacian propagator $$\left( m^2 - \sum_{\ell = 1}^{d} \Delta_\ell \right)^{-1} \,,$$ and the gauge invariance\footnote{We use the term 'gauge invariance' in accordance with quantum gravity and lattice gauge theory usage. It refers to the discrete gauge invariance appearing at the level of the Feynman amplitudes, rather than to a gauge symmetry of the quantum field theory itself.} condition
\beq \label{gauge}
\forall h \in G \,, \qquad \vphi(h g_1, \dots , h g_d ) = \vphi(g_1, \dots , g_d )\,.
\eeq
The implications of this condition can be understood in two main ways \cite{GFT1,GFT2,GFT3,LQG,SF,EPRL,BO-Immirzi}. In full  generality, it imposes a gauge invariance of the quantum states of the model, represented as $d$-valent graphs labeled by group (or conjugate Lie algebra) elements on their links and located at the vertices of the same graphs; equivalently \cite{BO-Immirzi}, it implies that the $d$ Lie algebra elements associated to the $d$ links incident to one such vertex sum to zero. The same gauge invariance can be seen at the level of the Feynman amplitudes of the model, which acquire the form of lattice gauge theory amplitudes. Indeed, the implementation of this constraint also introduces a notion of discrete gauge connection on the Feynman diagrams of the TGFT model. For models where a geometric interpretation of the combinatorial $(d-1)$-simplices corresponding to the TGFT fields is possible, the same requirement implies the \lq closure\rq\,of the $d$ faces of such $(d-1)$-simplices to form a closed boundary hypersurface for them. This condition is therefore a necessary ingredient for the consistent interpretation of these models as encoding simplicial geometry.   
The resulting covariance can be expressed as an integral over a Schwinger parameter $\alpha$ of a product of heat kernels on $G$ at time $\alpha$:
\bes
\int \extd \mu_C (\vphi , \vphib) \, \vphi(g_1 , \dots , g_d ) \vphib(g_1' , \dots , g_d' ) &=& C(g_1, \dots , g_d ; g_1' , \dots , g_d' ) \\
&\equiv& \int_{0}^{+ \infty} \extd \alpha \, \e^{- \alpha m^2} \int \extd h \prod_{\ell = 1}^{d} K_{\alpha} (g_\ell h g_\ell'^{\inv})\,.
\ees
This decomposition of the propagator provides an intrinsic notion of scale, parametrized by $\alpha$. Divergences result from the UV region (i.e. $\alpha \to 0$), hence the need to introduce a cut-off ($\alpha \geq \Lambda$), and subsequently to remove
it via renormalization.

\begin{figure}[b]
\begin{center}
\includegraphics[scale=0.5]{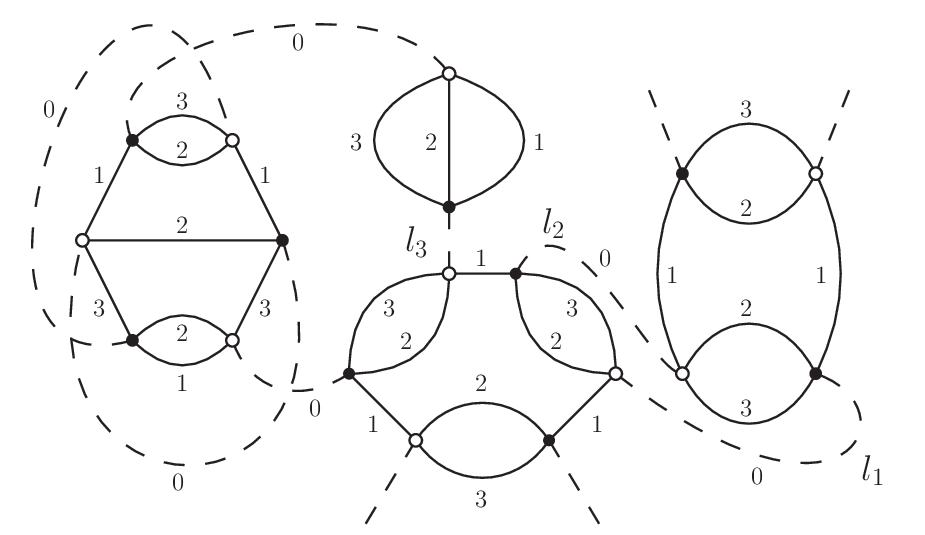}
\caption{A graph with $4$ vertices, $6$ lines and $4$ external legs in $d = 3$.}
\label{example_graph}
\end{center}
\end{figure}

\
The perturbative expansion of the theory is captured by Feynman graphs whose vertices are $d$-bubbles, and whose propagators are associated to an additional type of colored edges, of color $\ell = 0$, represented as dashed lines. When seen on the same footing,
these $d+1$ types of colored edges form $(d+1)$-colored graphs. To a Feynman \textit{graph} $\cG$, whose elements are $d$-bubble vertices ($V(\cG)$) and color-$0$ lines ($L(\cG)$), is therefore uniquely associated a $(d+1)$-colored graph $\cG_c$, called the
\textit{colored extension} of $\cG$. See Figure \ref{example_graph} for an example of Feynman graph in $d = 3$.
The connected Schwinger functions are given by a sum over line-connected Feynman graphs:
\beq
\cS_N = \sum_{\cG \; \mathrm{connected}, N(\cG)= N} \frac{1}{s(\cG)} \left(\prod_{b \in \cB} (- t_b)^{n_b (\cG)}\right) \cA_\cG \,,
\eeq
where $N(\cG)$ is the number of external legs of a graph $\cG$, $n_b (\cG)$ its number of vertices of type $b$, and $s(\cG)$ a symmetry factor. The amplitude $\cA_\cG$ of $\cG$ is expressed in terms of holonomies along its faces,
which can be easily defined in the colored extension $\cG_c$: a \textit{face} $f$ of color $\ell$ is a maximal connected subset of edges of color $0$ and $\ell$. In $\cG$, $f$ is a set of color-$0$ lines, from which the holonomies are constructed.
We finally use the following additional notations: $\alpha(f) \equiv \underset{e \in f}{\sum} \alpha_e$ is the sum of the Schwinger parameters appearing in the face $f$; $\epsilon_{ef} = \pm 1$ or $0$ is the adjacency or incidence matrix, encoding the line content of faces
and their relative orientations; the faces are split into closed ($F$) and opened ones ($F_{ext}$); $g_{s(f)}$ and $g_{t(f)}$ denote boundary variables in open faces, with functions $s$ and $t$ mapping open faces to their ``source" and ``target" boundary variables. 
The amplitude $\cA_\cG$ takes the form:
\begin{eqnarray}\label{ampl}
\cA_\cG &=& \left[ \prod_{e \in L(\cG)} \int \extd \alpha_{e} \, e^{- m^2 \alpha_e} \int \extd h_e \right] 
\left( \prod_{f \in F (\cG)} K_{\alpha(f)}\left( \overrightarrow{\prod_{e \in f}} {h_e}^{\epsilon_{ef}} \right) \right) \nn\\
&&\left( \prod_{f \in F_{ext}(\cG)} K_{\alpha(f)} \left( g_{s(f)}
\left[\overrightarrow{\prod_{e \in f}} {h_e}^{\epsilon_{ef}}\right] g_{t(f)}^{\inv} \right) \right) \,.
\end{eqnarray}

\
An important feature of the amplitude of $\cG$ is a $G^{V(\cG)}$ gauge symmetry:
\beq
h_e \mapsto g_{t(e)} h_e g_{s(e)}^{\inv}\,,
\eeq
where $t(e)$ (resp. $s(e)$) is the target (resp. source) vertex of an (oriented) edge $e$, and one of the two group elements is trivial for open lines. As we have anticipated, it is the gauge invariance \eqref{gauge} imposed on the TGFT field that is responsible of this gauge invariance at the level of the Feynman amplitudes, and for their expression \eqref{ampl} as a lattice gauge theory on $\cG$. When $\cG$ is connected, it is convenient to gauge fix the $h$ variables along a spanning
tree $\cT$ of the graph:
$$
h_e = \one
$$
in the integrand of (\ref{ampl}), for every line $e \in \cT$. We will use such gauge fixing in the following.

\subsection{Subgraphs, connectedness and quasi-locality}



We collect here a number of definitions and results, first introduced in \cite{COR}, which are key for the analysis of the non-Abelian model we will perform in the following. Among them, the new notions of {\it subgraph}, {\it face-connectedness}, {\it contractiblity}, {\it melopoles} and {\it traciality} already show that TGFTs require a non-trivial adaptation of standard QFT concepts, in order to unravel the combinatorial structure of the Feynman diagrams and to study the renormalizability.

\begin{definition}
A \textit{subgraph} $\cH$ of a graph $\cG$ is a subset of  lines of $\cG$, hence $\cG$ has exactly $2^{L(\cG)}$ subgraphs. 
$\cH$ is then completed by first adding the vertices that touch its lines. 
The faces closed in $\cG$ which  pass only through lines of $\cH$ form the set of \textit{internal faces} of $\cH$. The 
external faces of $\cH$ are the maximal open connected pieces of either open or closed faces of $\cG$ that pass through lines of $\cH$. 
Finally all the external legs or half-lines of $\cG \setminus \cH$ touching the vertices of $\cH$ are considered \textit{external legs} of $\cH$.
\end{definition}

We denote $L(\cH)$ and $F(\cH)$ the set of lines and internal faces of $\cH$, and $N(\cH)$ and $F_{ext}(\cH)$ the set of external legs and external faces. When no confusion is possible we also write $L$, $F$ etc for the cardinality 
of the corresponding sets. Moreover, the subgraph made of the lines $l_1, \ldots, l_k$ will simply be denoted $\{l_1 , \ldots , l_k \}$. 

\

\noindent {\bf{Example.}} In Figure \ref{example_graph}, $\cH_{12} = \{ l_1 , l_2 \}$ has two lines ($L(\cH_{12})=2$) which touch two vertices, giving $V(\cH_{12})=2$. Six additional half-lines are hooked up to these two bubbles, giving a total of $N( \cH_{12} )=6$ external legs. Finally, $\cH_{12}$ has four faces in total: two of them are internal, of color $2$ and $3$ respectively, hence $F(\cH_{12}) = 2$; the two others are external faces of color $1$, hence $F_{ext}(\cH_{12}) = 2$. Note that the connected pieces of (the colored extension of) $\cH_{12}$ which consist of two external legs and a single colored line should not be considered as external faces.   

\

On top of the usual notion of connectedness of subgraphs, to which we will refer as \textit{vertex-connectedness} in order to avoid any confusion, we will heavily rely on the similar concept of \textit{face-connectedness}. While the former focuses on incidence relations between lines and vertices, the latter puts the emphasis on incidence relations between lines and faces.


\begin{definition}
\begin{enumerate}[(i)]
\item The \textit{face-connected components} of a subgraph $\cH$ are defined as the subsets of lines of the maximal factorized rectangular blocks of its $\epsilon_{ef}$ incidence matrix (with entries in $L(\cH) \times F(\cH)$). 
\item A subgraph $\cH$ is called \emph{face-connected} if it has a single face-connected component.
\item Let $\cG$ be a graph. The face-connected subgraphs $\cH_1 , \ldots , \cH_k \subset \cG$ are said to be \textit{face-disjoint} if they form exactly $k$ face-connected components in their union $\cH_1 \cup \dots \cup \cH_k$. 
\end{enumerate}
\end{definition}

The notion of face-connectedness is finer than vertex-connectedness, in the sense that any face-connected subgraph is also vertex-connected. It should also be noted that with the previous definition, the face-disjoint subgraphs $\cH_1, \ldots , \cH_k \subset \cG$ can consist of strictly less than $k$ face-connected components in $\cG$ itself. What really matters is that there \textit{exists} a subgraph of $\cG$ into which $\cH_1 , \ldots , \cH_k$ form $k$ face-connected components. 
In other words, this is another instance of the importance of the underlying color structure in TGFT diagrams; it is this color structure that allows to encode fully the topology of the diagrams and of their dual cellular complexes \cite{crystallization}.

\

\noindent {\bf{Examples.}} In Figure \ref{example_graph}, $\cH_{12} = \{ l_1 , l_2 \}$ and $\cH_{123} = \{ l_1 , l_2 , l_3 \}$ are both vertex-connected, while only $\cH_{12}$ is face-connected. $\cH_{123}$ has two face-connected components: $\{ l_3 \}$ and $\{ l_1 , l_2 \}$. In Figure \ref{ex_con_comp}, $\cH_{1} = \{l_1\}$ and $\cH_2 = \{ l_2 \}$ are face-disjoint because they are their own face-connected components in $\cH_1 \cup \cH_2 = \{ l_1 , l_2\}$. On the other hand, they are not face-connected components of $\cH_{123}$, which is itself face-connected. This illustrates the subtelty in the definition of face-disjointness we just pointed out.

\begin{figure}[ht]
\begin{center}
\includegraphics[scale=0.5]{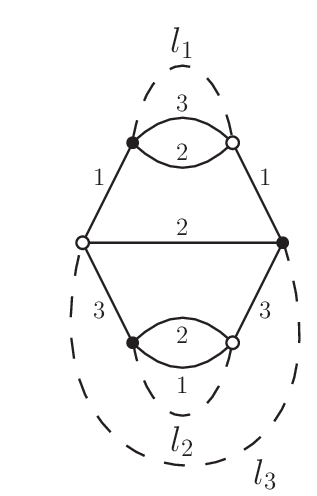}
\caption{$\cH_{1} = \{l_1\}$ and $\cH_2 = \{ l_2 \}$ are face-disjoint.}
\label{ex_con_comp}
\end{center}
\end{figure}

\

It is convenient to define elementary operations on TGFT graphs at the level of their underlying colored graphs. There, dipoles play a central role.
\begin{definition}
Let $\cG$ be a graph, and $\cG_c$ its colored extension. For any integer $k$ such that $1 \leq k \leq d+1$, a $k$-dipole is a line of $\cG$ whose image in $\cG_c$ links two nodes $n$ and
$\overline{n}$ which are connected by exactly $k - 1$ additional colored lines.
\end{definition}

\begin{definition}
Let $\cG$ be a graph, and $\cG_c$ its colored extension. The contraction of a $k$-dipole $d_k$ is an operation in $\cG_c$ that consists in:
\begin{enumerate}[(i)]
 \item deleting the two nodes $n$ and $\overline{n}$ linked by $d_k$, together with the $k$ lines that connect them;
 \item reconnecting the resulting $d - k + 1$ pairs of open legs according to their colors.
\end{enumerate}
We call $\cG_c / d_k$ the resulting colored graph, and $\cG / d_k$ its pre-image. See Figure \ref{k_dipole}.
\end{definition}

\begin{figure}[h]
\begin{center}
\includegraphics[scale=0.6]{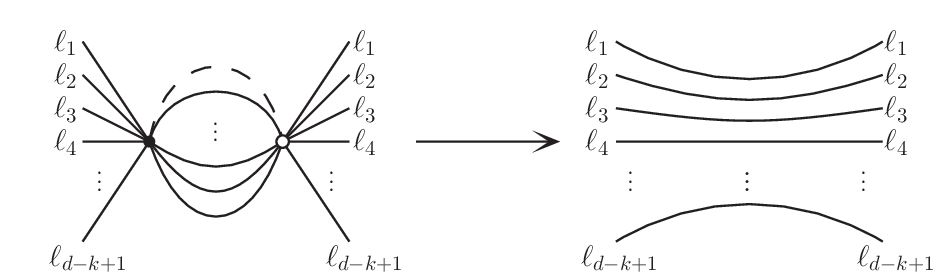}
\caption{Contraction of a $k$-dipole line.}
\label{k_dipole}
\end{center}
\end{figure}

\begin{definition}
We call contraction of a subgraph $\cH \subset \cG$ the successive contractions of all the lines of $\cH$. The resulting graph is independent of the order in which the lines of $\cH$ are contracted,
and is noted $\cG / \cH$.
\end{definition}

\begin{proposition}
Let $\cH$ be a subgraph of $\cG$, and $\cH_c$ its colored extension. The contracted graph $\cG / \cH$ is obtained by:
\begin{enumerate}[(a)]
 \item deleting all the internal faces of $\cH$;
 \item replacing all the external faces of $\cH_c$ by single lines of the appropriate color.
\end{enumerate}
\end{proposition}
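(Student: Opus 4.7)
My plan is to prove the proposition by induction on $|L(\cH)|$, exploiting the order-independence of contraction asserted in the previous definition. The base case $L(\cH) = \emptyset$ is trivial since $\cG/\cH = \cG$, there are no internal faces to delete and no external faces to replace. For the inductive step I would pick any line $e \in \cH$, contract it first, and then apply the induction hypothesis to the residual subgraph $\cH \setminus \{e\} \subset \cG/e$.

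The heart of the argument is therefore a precise analysis of what happens to faces when one single color-$0$ line $e$ is contracted. Viewing $e$ as a $k$-dipole in $\cG_c$ with endpoints $n, \overline n$ joined by additional colored edges of colors $i_1, \dots, i_{k-1}$, I would proceed in two cases. First, for every $j \in \{1, \dots, k-1\}$ the pair $(e, i_j\text{-edge})$ forms a bicolored $2$-cycle in $\cG_c$, i.e.\ a closed face of $\cG$ whose only color-$0$ edge is $e$; by the definition of internal face applied to $\cH = \{e\}$, these are exactly the internal faces of $\{e\}$, and they are manifestly destroyed when the $k$ edges between $n$ and $\overline n$ are deleted. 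Second, for every color $\ell \notin \{i_1, \dots, i_{k-1}\}$, the face of color $\ell$ through $e$ consists of a color-$\ell$ edge $c_1$ incident at $n$, the line $e$ itself, and a color-$\ell$ edge $c_2$ incident at $\overline n$; the reconnection prescribed by the dipole-contraction rule fuses $c_1$ and $c_2$ into a single color-$\ell$ edge, so this face survives in $\cG/e$ but loses $e$ from its color-$0$ content and has its two colored edges around $e$ replaced by one. This establishes the statement for $\cH = \{e\}$.

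For general $\cH$, I would now iterate. A closed face $f$ of $\cG$ all of whose color-$0$ edges lie in $\cH$ is by definition internal: each step of the inductive contraction either shortens its colored content or, at the moment its last color-$0$ edge becomes a dipole edge, deletes it altogether, so $f$ does not survive in $\cG/\cH$, proving (a). For a (closed or open) face $f$ of $\cG$ whose restriction to $\cH$ consists of color-$0$ edges $e_1, \dots, e_p$ together with the colored pieces of $f$ linking them, the maximal connected pieces of this restriction are exactly the external faces of $\cH$ that come from $f$. By iterating the single-line analysis above, each such piece is progressively collapsed: at every step a pair of adjacent colored edges is merged into one and an interior color-$0$ edge is deleted, so at the end of the contraction of $\cH$ the whole piece is reduced to a single colored edge of the appropriate color inserted in $\cG/\cH$, which is (b).

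The main obstacle I anticipate is bookkeeping under iteration: I need to be sure that a face which is external for $\cH$ remains external for the smaller subgraph $\cH \setminus \{e\} \subset \cG/e$, and that the colored edges created by partial contractions are correctly identified with the colored segments of the original external face when applying the induction hypothesis. The order-independence statement from the previous definition is what guarantees that the choice of $e$ at each step does not matter and allows me to always pick a convenient line (for instance one realizing a $1$-dipole, which minimizes the auxiliary $k-1$ face deletions). Once this matching is made careful, both (a) and (b) follow.
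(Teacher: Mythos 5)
Your argument is essentially correct, and there is nothing in the paper to compare it against: this proposition is stated without proof, being one of the definitions and results recalled from \cite{COR}. Your route --- induction on $|L(\cH)|$ using the order-independence of contraction, with the whole content concentrated in the single-line ($k$-dipole) analysis --- is the natural one: the bicolored $2$-cycles at the dipole are exactly the internal faces of $\{e\}$ and are deleted, while for every other color the two colored edges flanking $e$ are fused, so each face through $e$ merely loses $e$ from its color-$0$ content; faces avoiding $e$ cannot pass through $n$ or $\overline n$ and are untouched. The bookkeeping you worry about does go through: pieces of a face lying over $\cH$ are separated by color-$0$ lines of $\cG\setminus\cH$, which are never removed, so external faces of $\cH$ neither merge nor lose maximality under contraction of $e$; the only terminal case is a piece whose unique line is $e$ itself, which after that step is already the single colored edge required by (b) and simply drops out of the induction. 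One inessential slip: you cannot always ``pick a line realizing a $1$-dipole'' (e.g.\ when $\cH$ is a single $d$-dipole tadpole line), but your single-line analysis covers arbitrary $k$, so no such choice is needed.
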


Contracting a subgraph $\cH \subset \cG$ can heavily modify the connectivity properties of $\cG$, depending on the nature of the dipoles this operation involves.
\begin{proposition}\label{disconnected}
\begin{enumerate}[(i)]
 \item For any vertex-connected graph $\cG$, if $e$ is a line of $\cG$ contained in a $d$-dipole, then $\cG / e$ is vertex-connected.
 \item For any $1 \leq q \leq d - k + 1$, there exists a connected graph $\cG$ and a $k$-dipole $e$ such that $\cG / e$ has exactly $q$ connected components. 
\end{enumerate}
\end{proposition}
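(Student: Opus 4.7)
My plan is to work throughout in the colored extension $\cG_c$, exploiting the equivalence between connectedness of $\cG_c / e$ and vertex-connectedness of $\cG / e$: any colored path alternates between color-$0$ edges (which are lines of $\cG$) and colored bubble-internal edges, and so projects to a walk on bubble-vertices via $\cG$-lines. For part (i), I would use the fact that a $d$-dipole at $n, \overline{n}$ occupies $d$ of the $d+1$ colors, leaving a unique \emph{missing color} $c_* \in \{1, \ldots, d\}$; let $n'$ and $\overline{n}'$ (possibly equal) be the color-$c_*$ neighbors of $n$ and $\overline{n}$, so that the contraction deletes $n, \overline{n}$ and inserts a single new edge $n' - \overline{n}'$ of color $c_*$. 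Then, given any simple path in $\cG_c$ between two nodes $u, v \notin \{n, \overline{n}\}$, I would isolate its maximal sub-excursions through $\{n, \overline{n}\}$: since every non-dipole edge at $n$ or $\overline{n}$ has color $c_*$, each excursion must enter via one color-$c_*$ edge, bounce between $n$ and $\overline{n}$ along at most $d-1$ dipole edges (by simplicity), and exit via the other color-$c_*$ edge. Replacing each excursion by the new direct edge produces a walk in $\cG_c / e$, establishing connectedness and hence vertex-connectedness of $\cG / e$.

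For part (ii) I would construct an explicit family of examples. Assume WLOG that the $k$-dipole at $n, \overline{n}$ uses colors $\{0, 1, \ldots, k-1\}$, so the $d-k+1$ external colors at $n$ and $\overline{n}$ are $\{k, \ldots, d\}$. Partition these colors into $q$ non-empty blocks $B_1, \ldots, B_q$, and for each block $B_j = \{c_{j,1}, \ldots, c_{j,s_j}\}$ attach a \emph{gadget} $P_j$ built from $2 s_j$ fresh nodes $v_{j,i}$ (black) and $w_{j,i}$ (white): external edges $v_{j,i} - n$ and $w_{j,i} - \overline{n}$ of color $c_{j,i}$; internal edges $v_{j,i} - w_{j,i}$ of every color in $\{1, \ldots, d\} \setminus \{c_{j,i}\}$; and cyclic color-$0$ edges $v_{j,i} - w_{j, i+1 \bmod s_j}$ that chain the $s_j$ pairs into a single connected piece. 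Each node carries exactly one edge per color, the gadgets are pairwise disjoint outside $\{n, \overline{n}\}$, and the resulting $\cG_c$ is a valid connected $(d+1)$-colored graph. Upon contracting $e$, the color-$c_{j,i}$ external half-edges of $P_j$ recombine into an internal edge $v_{j,i} - w_{j,i}$ of color $c_{j,i}$, so each $P_j$ closes up as an isolated connected component of $\cG_c / e$, producing exactly $q$ components.

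The main obstacle is in part (ii): verifying that the cyclic color-$0$ modification really links all $2 s_j$ nodes of $P_j$ when $s_j \geq 2$, and that the bipartite $(d+1)$-coloring remains globally consistent. Connectedness follows by tracing a Hamiltonian cycle through $P_j$ that alternates between the cyclic color-$0$ edges $v_{j,i} - w_{j, i+1 \bmod s_j}$ and an internal non-color-$0$ edge within each pair $(v_{j,i}, w_{j,i})$. The argument in part (i) is essentially a clean path-rerouting, once the observation about excursions being forced through the unique missing color is in place.
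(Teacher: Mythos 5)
Your argument is correct, but note that there is nothing in this paper to compare it with: Proposition \ref{disconnected} is only recalled here from the earlier Abelian paper \cite{COR}, and no proof is reproduced in the present text. Taken on its own terms, your proof works. For (i), the key observation — that at a $d$-dipole the only non-dipole edges at $n$ and $\overline{n}$ are the two edges of the unique missing color $c_*$, so any simple path between nodes of $\cG_c / e$ can only traverse $\{n,\overline{n}\}$ by entering through one $c_*$ edge and leaving through the other, and can therefore be rerouted through the single reconnected edge $n'\!-\!\overline{n}'$ — is exactly what is needed, and the passage from connectedness of $\cG_c/e$ to vertex-connectedness of $\cG/e$ is legitimate because $d$-bubbles are connected by definition. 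Two harmless inaccuracies: $n'$ and $\overline{n}'$ can never coincide (each node carries exactly one edge of color $c_*$, and they have opposite shading), and a simple-path excursion uses exactly one dipole edge, not ``at most $d-1$''; neither affects the argument. For (ii), your gadget construction checks out: every node ends up with exactly one edge of each of the $d+1$ colors, the bipartite shading is consistent, the edge $e$ is a genuine $k$-dipole (exactly $k-1$ extra colored lines between $n$ and $\overline{n}$), the graph before contraction is connected (indeed a single bubble with color-$0$ tadpoles), and after contraction the color-$c_{j,i}$ legs reclose pairwise inside each block so that the $q$ gadgets become exactly $q$ vacuum connected components, each internally connected by the cyclic color-$0$ chain. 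This realizes every value $1 \leq q \leq d-k+1$, as required.
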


The following definition takes this possible loss of connectedness into account, in order to formulate a notion of quasi-locality adapted to TGFTs with gauge constraint, which we called \textit{traciality}.
\begin{definition}
Let $\cG$ be a vertex-connected graph, and $\cH$ be one of its face-connected subgraphs. 
\begin{enumerate}[(i)]
\item If $\cH$ is a tadpole\footnote{In this paper, we call \textit{tadpole} any graph with a single vertex.}, $\cH$ is \textit{contractible} if, for any group elements assignment $(h_e)_{e \in L(\cH)}$:
\beq
\left( \forall f \in F(\cH)\,, \;  \overrightarrow{\prod_{e \in f}} {h_e}^{\epsilon_{ef}} = \one \right) \Rightarrow \left( \forall e \in L(\cH) \, , \; h_e = \one \right)\,.
\eeq
\item In general, $\cH$ is contractible if it admits a spanning tree $\cT$ such that 
$\cH / \cT$ is a contractible tadpole.
\item $\cH$ is \textit{tracial} if it is contractible and the contracted graph $\cG / \cH$ is connected.
\end{enumerate}
\end{definition}

\

Finally, we recall the notion of melopole, a special class of tracial tadpole subgraphs which were responsible for all the divergences in \cite{COR}.
\begin{definition}
In a graph $\cG$, a \emph{melopole}
is a single-vertex subgraph $\cH$ (hence $\cH$ is made of tadpole lines attached to a single vertex in the ordinary sense), 
such that there is at least one ordering (or ``Hepp's sector") of its $k$ lines as $l_1, \cdots , l_k$ such that $ \{l_1 ,  \dots , l_{i} \} / \{l_1 , \dots , l_{i-1} \} $ is a $d$-dipole for $1 \le i \le k$. See Figure \ref{ex_melop}.
\end{definition}

\begin{figure}[ht]
\begin{center}
\includegraphics[scale=0.5]{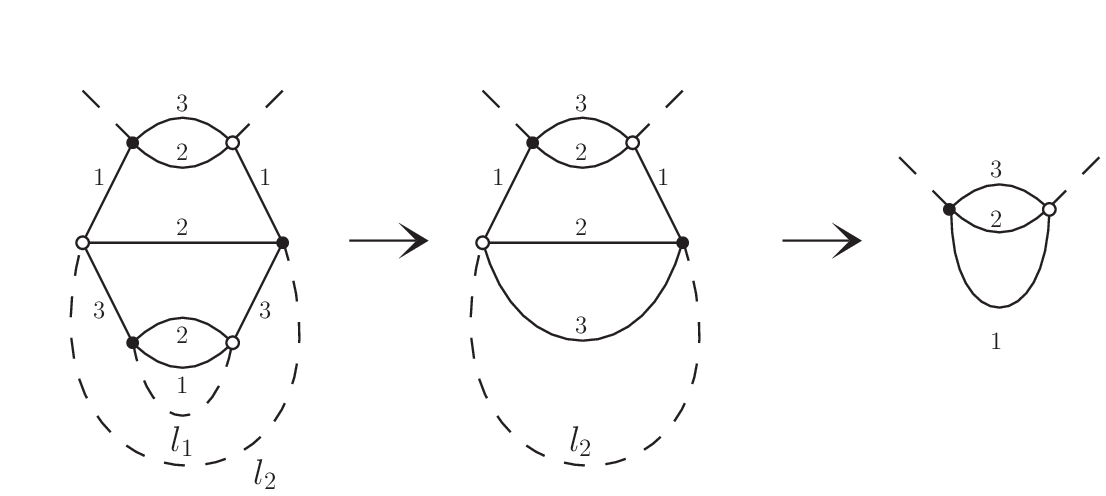}
\caption{A melopole with two lines, in $d=3$. $\{ l_1 \}$ and $\{l_1 , l_2 \} / \{l_1 \}$ are $3$-dipoles, as illustrated by the successive contractions of $l_1$ and $l_2$.}
\label{ex_melop}
\end{center}
\end{figure}

\begin{proposition}
Any face-connected melopole is tracial.
\end{proposition} 


In just-renormalizable models, a larger class of tracial subgraphs will dominate, which extend the notion of melopole to an arbitrary number of vertices.

\begin{definition}
In a graph $\cG$, a \emph{melonic subgraph}
is a face-connected subgraph $\cH$ containing at least one maximal tree $\cT$ such that $\cH / \cT$ is a melopole.\footnote{Remember that the notion of face-connectedness only takes the internal faces into account. The present definition is chosen so that at least one internal face of $\cG$ runs through any line of any melonic subgraph. $\cG$ itself is considered melonic if it is melonic as a subgraph of itself. This definition will ensure Lemma \ref{1PI}.}
\end{definition}

In Figure \ref{ex_melonic}, we give three simple examples of melonic subgraphs in $d=3$, with $2$, $4$ and $6$ external legs respectively. 

\begin{figure}[ht]
  \centering
  \subfloat{\label{tg62ll}\includegraphics[scale=0.6]{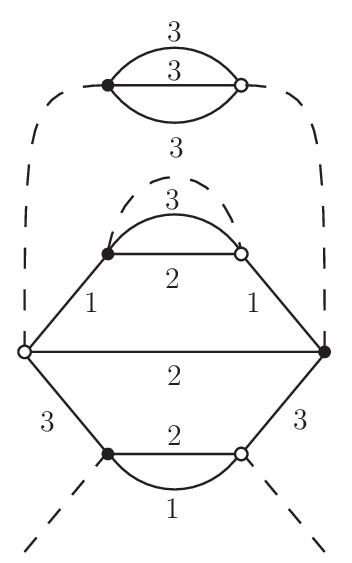}}   	       
  \subfloat{\label{h4ll}\includegraphics[scale=0.6]{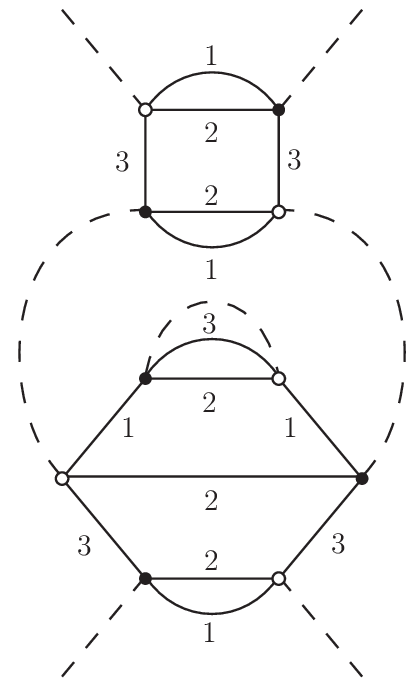}}     
  \subfloat{\label{g61ll}\includegraphics[scale=0.6]{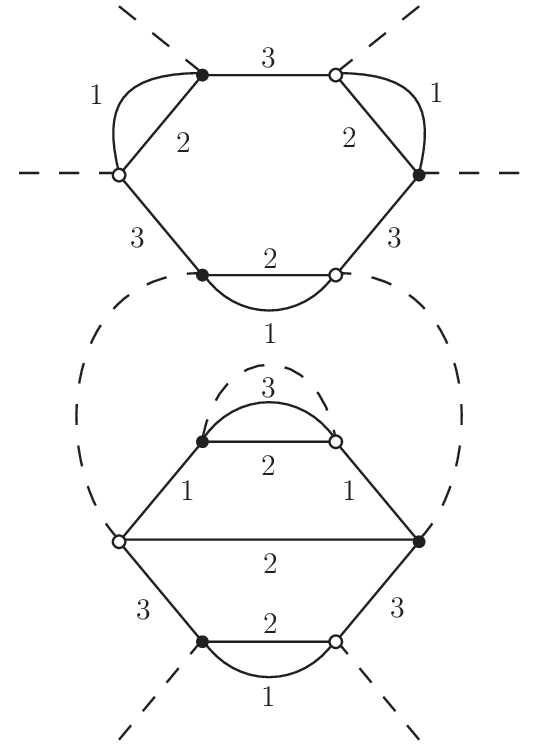}}     
  \caption{Examples of melonic subgraphs in $d=3$, with $2$, $4$ and $6$ external legs.}\label{ex_melonic}
\end{figure}

\begin{proposition}
Any melonic subgraph is tracial.
\end{proposition}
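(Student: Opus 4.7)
The strategy is to unpack the two clauses in the definition of a melonic subgraph $\cH \subset \cG$---namely the existence of a spanning tree $\cT \subset \cH$ such that $\cH / \cT$ is a melopole---and to combine them with the preceding proposition that every melopole is tracial, together with Proposition \ref{disconnected}(i). Two properties must be established: (a) $\cH$ is contractible, and (b) $\cG / \cH$ is vertex-connected.

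Claim (a) is almost immediate. By the previous proposition applied to the melopole $\cH / \cT$, this single-vertex tadpole subgraph is tracial, hence in particular a contractible tadpole in the sense of the definition. The very tree $\cT$ witnessing the melopolic structure of $\cH / \cT$ is then a spanning tree of $\cH$ whose quotient $\cH / \cT$ is a contractible tadpole, so by the general definition $\cH$ is contractible.

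For claim (b), the plan is to contract $\cH$ inside $\cG$ in two stages. First, contract the lines of $\cT$ one by one: being tree edges of $\cG$, each such contraction merely identifies adjacent vertices and preserves vertex-connectedness, so $\cG / \cT$ is still connected. Next, the remaining lines of $\cH$ project to the melopole $\cH / \cT$ sitting inside $\cG / \cT$, and by definition they admit a Hepp ordering $l_1, \ldots, l_k$ such that $\{l_1, \ldots, l_i\} / \{l_1, \ldots, l_{i-1}\}$ is a $d$-dipole for every $1 \leq i \leq k$. The key local observation is that the $k$-dipole character of a line is read off from the colored extension at its two end-nodes and therefore depends only on the bubbles incident to it; it is insensitive to whether we view the line inside $\cH / \cT$ or inside the larger quotient graph obtained by carrying out the same successive contractions in $\cG / \cT$. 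Hence each $l_i$ is also a $d$-dipole in the ambient contracted graph, and Proposition \ref{disconnected}(i) ensures that its contraction preserves vertex-connectedness. Iterating over $i$, we conclude that $\cG / \cH$ is connected, and combined with contractibility $\cH$ is tracial.

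The step that genuinely demands care is the locality claim that $d$-dipole degrees computed inside $\cH / \cT$ survive the enlargement to $\cG / \cT$. This relies on the observation that the colored lines joining the two end-nodes of a given propagator in the colored extension are an intrinsic attribute of the bubbles incident to it, unaffected by any external legs of $\cG \setminus \cH$ attached elsewhere in those bubbles. The face-connectedness condition built into the definition of a melonic subgraph plays an essential auxiliary role here, guaranteeing that the internal faces of $\cH / \cT$ underpinning the $d$-dipole structure are restrictions of genuine faces of $\cG$. Once this local compatibility is secured, the proof reduces to a direct inductive iteration of the melopolic argument already used to prove that individual melopoles are tracial.
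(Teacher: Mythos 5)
The paper never spells out a proof of this proposition (it is recalled from \cite{COR}), so your argument has to stand on its own; its architecture is the natural one. Claim (a) is correct as stated, and in claim (b) both the locality of the $d$-dipole property (it only involves the colored lines joining the two end-nodes inside the completed vertex, which the subgraph shares with the ambient graph) and the repeated use of Proposition \ref{disconnected}(i) along the Hepp ordering are sound. Note that once $\cG / \cT$ is known to be vertex-connected you could even stop earlier: applying ``any melopole is tracial'' to $\cH / \cT \subset \cG / \cT$ already yields both the contractibility and the connectedness of $(\cG / \cT)/(\cH / \cT) = \cG / \cH$, so your second stage re-derives the melopole case rather than merely citing it. (Also, face-connectedness of $\cH$ is not what makes the dipole degrees agree between $\cH / \cT$ and $\cG / \cT$; that is purely the colored structure of the completed vertices.)

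The genuine gap is the first stage of claim (b): you assert that contracting the tree lines of $\cT$ ``merely identifies adjacent vertices and preserves vertex-connectedness''. In this formalism contraction means dipole contraction, a line joining two distinct bubbles is a $1$-dipole, and Proposition \ref{disconnected}(ii) is precisely the warning that contracting a $k$-dipole with $k < d$ can disconnect a connected graph; so this step cannot be settled by the ordinary picture of edge contraction, and as written it begs the question. What saves the statement is that disconnection can only arise from \emph{tadpole} lines: if $e$ joins two distinct bubbles $b_1 \ni n$ and $b_2 \ni \bar{n}$, then $b_1 \setminus \{n\}$ and $b_2 \setminus \{\bar{n}\}$ are each connected (counting, color by color, the white and black nodes of a would-be component of $b_1 \setminus \{n\}$ shows it must receive either all $d$ or none of the colored edges issuing from $n$, and connectedness of $b_1$ forces every component to receive at least one, so there is a single component), and the $d$ reconnected colored strands tie the two remainders together; hence the contraction is a connected sum producing a single bubble and vertex-connectedness is preserved. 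Since tree lines never become tadpole lines in the course of contracting the tree, inserting this argument (or an explicit reference to this connected-sum property of $1$-dipoles between distinct vertices) closes the gap, $\cG / \cT$ is connected, and the rest of your proof goes through.
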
 

\subsection{Abelian power-counting}

The main general result of \cite{COR} is an Abelian power-counting theorem. Derived in a multi-scale form, it identifies the divergence degree $\omega$, providing a bound on the asymptotic behaviour of 
the amplitudes when the cut-off $\Lambda$ is removed. As we will show, this bound holds for general group $G$, not necessarily Abelian. However, the bound is {\it optimal} when the group is Abelian. We call $D$ the dimension of $G$.

\begin{definition}
Let $\cH$ be a subgraph of $\cG$. The \textit{degree of divergence} of $\cH$ is defined by
\beq  
\omega (\cH)  = -2 L (\cH) + D(F(\cH) - R(\cH)) \,,
\eeq
where $R(\cH)$ is the rank of the $\epsilon_{lf}$ incidence matrix of $\cH$.
$\cH$ is \textit{divergent} when $\omega(\cH) \geq 0$, and \textit{convergent} otherwise.
\end{definition}

When $G$ is Abelian, the divergences of any graph $\cG$ are fully captured by its divergent subgraphs. On the other hand, if $G$ is not Abelian, a twisted degree of divergence is needed to account for the exact structure of divergences \cite{valentinmatteo}. We shall however show that: a) the Abelian power-counting still holds as a bound in the non-Abelian case; b) the degree of divergence and its twisted non-Abelian version coincide for contractible subgraphs. A reasonably detailed proof of these claims will be provided in section \ref{sec:multiscale}, in their multi-scale version, and for the group $G = \SU(2)$. But the intuitive reasons behind these are rather simple. First, thanks to their colored structure, TGFT graphs do not contain any \textit{tadfaces} (i.e. faces running several times through the same line), therefore decays can be successively extracted from the propagators by simple convolutions, in the very same way as in the Abelian case. Second, in a contractible subgraph $\cH$, flat connections are fully captured by the neighborhood of $h_e = \one$ for any $e \in L(\cH)$, in which case the Abelianized amplitude is nothing but a saddle point approximation of the full amplitude, hence correctly capturing its divergences.

\section{Abelian divergence degree and just-renormalizability}
\label{powercounting}
 
 \subsection{Analysis of the Abelian divergence degree}

In this section, we present a detailed analysis of the degree of divergence \cite{COR}
\beq  \omega(\cH)  =  -2L  + D (F-R)\,.
\eeq

We consider a face-connected subgraph $\cH \subset \cG$ with $V$ vertices, $L$ lines with $d \geq 3$ strands each, $F$ internal faces, $N$ external legs. 
$R$  is the rank of the $\epsilon_{lf}$ incidence matrix, $D$ is the Lie group dimension, and we denote by $v_{max}$ the maximal valency of $d$-bubble interactions. When $F = 0$, face-connectedness imposes $L=1$, and one trivially has $\omega(\cH) = -2$. From now on, we assume $F \geq 1$. Face-connectedness imposes that each line of $\cH$ appears in at least one of its internal faces. For $1 \leq k \leq v_{max} / 2$, $n_{2 k}$ is the number of bubbles with valency $2 k$ in $\cG$. 
We are particularly interested in determining which values of $d$, $D$ and $v_{max}$ are likely to support just-renormalizable theories.

Remember that 
the incidence matrix has entries $0, +1$ or $-1$ since the graphs we consider have no tadfaces.

Since we are going to make extensive use of contractions of graphs along trees, as a way to gauge fix the amplitudes\footnote{We use the term gauge fixing in the sense of lattice gauge theory or spin foam models: it is the procedure by which we eliminate the redundant group variables appearing in the amplitudes.}, we first establish the change in divergence degree under such a contraction.

\begin{lemma}
Under contraction of a tree  $\cT$, $F$ and $R$ each do not change so that 
$[F-R] (\cH) =[ F-R] (\cH / \cT)$.
\end{lemma}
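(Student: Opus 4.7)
The plan is to prove the two equalities $F(\cH)=F(\cH/\cT)$ and $R(\cH)=R(\cH/\cT)$ independently, both via the colored-graph description of contractions given in the earlier Proposition which states that $\cG/\cH$ is obtained from $\cG$ by deleting the internal faces of $\cH$ and replacing its external faces by single colored lines.

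First I would show that the tree $\cT$, viewed as a subgraph of $\cH$, carries no internal face. Each face of $\cH$ corresponds to a closed walk in the propagator graph alternating with colored edges inside the vertices; and, crucially, by the absence of tadfaces in colored TGFT graphs (recalled at the end of the previous subsection), every such walk traverses each propagator line at most once. A closed walk visiting pairwise distinct edges is a graph-theoretic circuit, so its underlying edge set cannot lie inside any tree. Hence no internal face of $\cH$ is supported entirely on the lines of $\cT$. Applying the Proposition with $\cH$ as ambient graph and $\cT$ as the contracted subgraph, the only internal faces that could disappear under the contraction are the internal faces of $\cT$ itself, of which there are none; so none in fact disappear, and $F(\cH)=F(\cH/\cT)$.

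Next I would tackle the rank. Once the sets of internal faces are identified, the incidence matrix $\epsilon_{lf}(\cH/\cT)$ is exactly the submatrix of $\epsilon_{lf}(\cH)$ obtained by deleting the rows labeled by the lines of $\cT$. Each column is the signed characteristic vector of the cyclic sequence of propagator lines traversed by a face, hence a cycle (1-chain with vanishing boundary) in the edge space of $\cH$. Preservation of the rank under deletion of the tree rows is equivalent to injectivity of this deletion on the column span, which in turn is equivalent to the absence of non-zero cycles of $\cH$ supported on $\cT$. But any such cycle restricts to a cycle of $\cT$ itself (the boundary vanishes vertex by vertex), and the cycle space of a tree is trivial, so no such non-zero cycle exists. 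Therefore $R(\cH)=R(\cH/\cT)$.

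Combining the two equalities yields $[F-R](\cH)=[F-R](\cH/\cT)$, as claimed. The only non-routine input is the no-internal-face property of trees in the TGFT setting, whose justification genuinely relies on the colored structure of the diagrams (absence of tadfaces); the rank computation is then a standard cycle-space argument requiring no tensorial ingredient.
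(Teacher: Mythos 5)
Your proof is correct; it establishes the same two facts as the paper, but for the rank statement it takes a route dual to the paper's. The paper argues on rows: it introduces the explicit $L\times T$ matrix $\eta_{l,\ell}$ recording the tree path $P_\cT(l)$ of each line, and uses the fact that the face loops, pushed onto $\cT$, are closed and hence traverse each tree line equally in both directions, to get $\sum_l \epsilon_{lf}\,\eta_{l,\ell}=0$; thus each tree row of $\epsilon_{lf}$ is a linear combination of the chord rows and deleting tree rows cannot lower the rank. You argue on columns: each face column is a cycle of the underlying graph (bubbles as vertices, lines as edges), deletion of the tree rows is a coordinate projection whose kernel consists of chains supported on $\cT$, and a tree carries no nonzero cycle, so the projection is injective on the column span and the rank is preserved. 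Both arguments rest on the same inputs (faces are closed walks with incidences $0,\pm1$ thanks to the absence of tadfaces; trees are acyclic), but yours is more abstract and dispenses with the explicit $\eta$ bookkeeping, while the paper's row formulation directly exhibits the tree-gauge-fixing interpretation; your treatment of $F$ (no closed face can be supported on a tree, so contraction only shortens faces) spells out what the paper merely asserts as easy. Two minor points of precision, neither a gap: the identification of $\epsilon_{lf}(\cH/\cT)$ with the tree-row-deleted submatrix deserves the one-line remark that contraction neither deletes nor creates closed faces (which your circuit argument plus the contraction Proposition indeed gives), and your claim that injectivity is ``equivalent'' to the absence of nonzero cycles supported on $\cT$ should be ``implied by'' (the column span sits inside the cycle space, which is all you need).
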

\prf  That $F$ does not change is easy to show: existing faces can only get shorter under contraction of a tree line but cannot disappear
(this is true also for \emph{open} faces). 

$R$ does not change because of the tree-gauge invariance. This fact can be shown in very concrete terms. Given a tree $\cT$ with $\vert\cT \vert = T= V-1$ lines
we can define the $L \times T$ matrix $\eta_{l,\ell}$ which has entries $0, +1$ or $-1$ in the following way: for any oriented line $l=(v,v')$
we consider the \emph{unique path} $P_\cT(l)$ in the tree $\cT$ going from vertex $v$ to $v'$ and define $\eta_{l,\ell}$ to be zero if this path does not contain
$\ell$ and $\pm 1$ if it does, the sign taking into account the orientations of the path and of the line $\ell$. Remark that $\eta_{\ell\ell}=1$ for all $\ell$.

Then for each (closed) face $f$, made of $l_1,\dots, l_p$, it is easy to check that the induced loop on $\cT$ made of gluing
the paths $P_\cT(l_1), \dots ,  P_\cT(l_p)$, which is contractible, must take each tree line $\ell$ 
an equal number of times and with opposite signs so that 
\beq E(f, \ell) =\sum_l \epsilon_{lf}\eta_{l,\ell} = 0, \quad \to  \epsilon_{\ell f} = -  \sum_{l \not = \ell} \epsilon_{lf}\eta_{l,\ell} .
\eeq 
Therefore the line $\epsilon_{\ell f}$ is a combination of the other lines, and the incidence matrix after contracting $\ell$, which has one line less, but one which was a linear combination of the other ones, maintains the same rank.
\qed

We shall consider now a \textit{tensorial rosette} \cite{addendum}, namely the subgraph $\cH / \cT$ obtained after contraction of a \textit{spanning tree}; it has $L- (V-1)$ lines and a single vertex. The goal is to gain a better control over its degree of divergence and the various contributions to it. The key procedure to achieve the goal is to apply $k$-dipole contractions to the tensorial rosette, and establish how they affect the divergence degree. 

Note that $\cH / \cT$ is not necessarily face-connected, since the contraction of tree lines affects how faces are connected to one another. 
Recall also that a line is a $k$-dipole if it belongs to exactly $(k-1)$ faces of length $1$.

Under a $k$-dipole contraction we know that a single line and possibly several faces disappear, hence the rank
of the incidence matrix can either remain the same or go down by 1 unit. Moreover
only the faces of length 1 can eventually disappear; if there exist such faces
the rank must go down by exactly 1, since we delete a column which is not a combination of the others.

\begin{itemize}  
\item $F \to F- (k - 1)$ and $R \to R-1$, hence $F-R \to F- R- (k-2)$  if $k \ge 2$,

\item $F \to F$, and $R \to R$ or $R \to R-1$,  hence $F-R \to F- R$ or $F-R \to F- R+1$ 
if $k=1$.
\end{itemize}

By definition, a rosette (with external legs) is a melopole if 
and only if there is an ordering of its lines such that all contractions are $d$-dipoles. In that case, we find that $F-R =  (d-2)[L- (V-1)]$. If the rosette is not a melopole, there is at least one step where $F-R$ decreases by less than $(d-2)$, so we expect such a subgraph to be suppressed with respect to a melopole. However, $k$-dipole contractions with $k < d$ need not conserve vertex-connectedness, so we need to refine this argument. To do so, we write the divergence degree of any rosette in terms of the quantity
\beq
\rho \equiv F - R - \left( d-2 \right) \widetilde{L} \,,
\eeq
where $\widetilde{L}$ is the number of lines of the rosette. It will be convenient in the following to consider (vertex)-disjoint unions of rosettes, to which $\rho$ is extended by linearity. These disjoint unions of rosettes will simply be called rosettes from now on, and their single-vertex components will be said to be \textit{connected}.

Since $\widetilde{L} = L - V + 1$ is the number of lines of any rosette of the graph $\cH$, and $F-R$ does not depend on $\cT$ either, we know that $\rho(\cH / \cT)$ is independent of $\cT$. It is therefore a function on equivalent classes of rosettes. This way we obtain a nice splitting of $\omega$, between a rosette dependent contribution and additional combinatorial terms capturing the characteristics of the initial graph:
\beq
\omega(\cH) = D \left(d - 2 \right) + \left[ D \left(d - 2 \right) - 2 \right] L - D \left(d - 2 \right) V + D \rho( \cH / \cT ) \,.
\eeq
The first three terms do not depend on the rank $R$, and provided that $\rho$ can be understood, will give a simple classification of divergences. To establish this central result about the values of $\rho$, one first needs to prove a technical lemma, about $1$-dipole contractions. 
\begin{lemma}\label{d1}
Let $\cG$ be a face-connected rosette (with $F(\cG) \geq 1$), and $\ell$ a $1$-dipole line in $\cG$. If $\cG / \ell$ has more vacuum connected components than $\cG$, then
\beq
R( \cG / \ell ) = R ( \cG ) - 1 \,.
\eeq 
\end{lemma}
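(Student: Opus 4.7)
The plan is to view $R$ as the rank of the line--face incidence matrix and to track directly the drop in rank under removal of the row indexed by $\ell$. First, since $\ell$ is a $1$-dipole, no face has length~$1$ through $\ell$, so contraction does not delete any face: $F(\cG/\ell)=F(\cG)$. The incidence matrix $M'$ of $\cG/\ell$ is therefore obtained from the incidence matrix $M$ of $\cG$ by simply removing the row indexed by $\ell$, so $R(\cG)-R(\cG/\ell)\in\{0,1\}$, and the content of the lemma is to prove that the row of $\ell$ is linearly independent from the other rows.

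Next, I would exploit the block decomposition of $M$ induced by the $q\geq 2$ components $\cG_1,\ldots,\cG_q$ of $\cG/\ell$. Each non-$\ell$ line belongs to a unique component $\cG_{i(e)}$; each non-$\ell$-passing face of $\cG$ lives entirely in a unique $\cG_i$; and each of the $d$ faces $f_k$ passing through $\ell$ descends, after contraction, to the unique component $\cG_{j(k)}$ containing the newly inserted color-$k$ edge. Ordering the lines as $\{\ell\}\sqcup L_1\sqcup\cdots\sqcup L_q$ and the faces as $F_1\sqcup\cdots\sqcup F_q$, the non-$\ell$ rows of $M$ become block-diagonal with diagonal blocks the incidence matrices $M_i$ of the $\cG_i$, while the row of $\ell$ has non-zero entries in the $F_i$-columns only on the $\ell$-descendant faces with $j(k)=i$.

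Passing to the equivalent dual formulation, the row of $\ell$ being independent from the other rows is the same as the standard basis vector $e_\ell$ lying in the column span of $M$. In the block decomposition, this reduces to finding face-indexed assignments $v^{(i)}_f$ on each $\cG_i$ such that $\sum_{f\in F_i} v^{(i)}_f\,\epsilon_{ef}=0$ for every $e\in L_i$, together with the normalization $\sum_i\sum_{k:\,j(k)=i} v^{(i)}_{f_k}\,\epsilon_{\ell f_k}=1$; it is enough to produce such a non-trivial assignment on a single component $\cG_{i^{\ast}}$ whose pairing with the sign vector $(\epsilon_{\ell f_k})_{k:\,j(k)=i^{\ast}}$ is non-zero. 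The main expected obstacle is this construction: I would exploit the fact that each new color-$k$ edge coming from the contraction is traversed by $f_k$ and by no other face of $\cG_{i^{\ast}}$, together with the face-connectedness of $\cG$, to exhibit a linear combination of $\ell$-descendant and non-$\ell$-passing faces in $\cG_{i^{\ast}}$ whose incidence on every line vanishes while its pairing with $(\epsilon_{\ell\,\cdot})$ is non-trivial. Establishing the existence of such a combination in at least one component is the heart of the argument, and I expect it to follow from a careful combinatorial accounting of how the $\{0,k\}$-bicolored cycles around the new color-$k$ edges interact with the face structure of $\cG_{i^{\ast}}$.
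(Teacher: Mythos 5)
Your reduction is sound and in fact mirrors the paper's strategy: since $\ell$ is a $1$-dipole no face is deleted, the incidence matrix of $\cG/\ell$ is the matrix $M$ of $\cG$ with the row $\ell$ removed and (after regrouping by the components $\cG_1,\dots,\cG_q$ of $\cG/\ell$) is block diagonal, and the lemma is equivalent to exhibiting $e_\ell$ in the column span of $M$, i.e.\ a face-indexed vector annihilated by all non-$\ell$ rows but pairing non-trivially with the row of $\ell$. The paper does exactly this, phrased as a column operation. However, the construction of that vector is the whole mathematical content of the lemma, and you leave it as an expectation (``a careful combinatorial accounting of how the $\{0,k\}$-bicolored cycles \dots interact with the face structure''). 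As it stands this is a genuine gap, and a telling symptom is that your argument never uses the hypothesis that $\cG/\ell$ has more \emph{vacuum} connected components than $\cG$. Without that hypothesis the conclusion is simply false: a $1$-dipole contraction can leave $R$ unchanged (this is precisely the case $F-R \to F-R+1$ discussed in the text), so no accounting based only on face-connectedness and the reconnected colored edges can succeed.

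The missing idea is the color-sum identity, valid only in a \emph{vacuum} component: orienting so that $\epsilon_{lf}\in\{0,1\}$, every line of a vacuum component lies in exactly one closed face of each color $i$, hence $\sum_{f\ {\rm of\ color}\ i} C_f$ restricted to that component is the all-ones vector, and for any two colors $i_1,i_d$ the difference of the two color-sums is a kernel vector of that block. One then picks $\cG_{i^\ast}$ to be one of the newly created vacuum components (this is where the hypothesis enters), takes $i_1$ to be the color of an $\ell$-descendant face landing in $\cG_{i^\ast}$ and $i_d$ a color \emph{not} represented among the $\ell$-descendant faces in $\cG_{i^\ast}$; face-connectedness of $\cG$ guarantees both that some descendant face lands in $\cG_{i^\ast}$ and that not all $d$ of them do, so such colors exist, and since the descendant faces carry distinct colors the pairing of this kernel vector with the row of $\ell$ equals $\epsilon_{\ell f_1}=1\neq 0$. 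With this construction supplied, your dual formulation closes the proof exactly as in the paper; without it, the proposal has only reformulated the statement.
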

\begin{proof}
As stated before, such a move either lowers $R$ by $1$ or leaves it unchanged. We just have to show that given our hypothesis, we are in the first situation. 
We first remark that lines and faces can be oriented in such a way that $\epsilon_{lf} = +1$ or $0$. We can for instance positively orient lines from white to black nodes, and faces accordingly. 
With this convention, we can exploit the colored structure of the graphs in the following way: for any color $1 \leq i \leq d$, each line appears in exactly one face of color $i$. For vacuum graphs, all these faces are closed
and correspond to entries in the $\epsilon_{lf}$ matrix, implying
\beq
\sum_{f \, {\rm{of} \, \rm{color}} \, i } \epsilon_{l f} = 1
\eeq
for any $i$ and any $l$.
Given the hypothesis on $\cG / \ell$, we know that up to permutations of lines and columns, $\epsilon_{lf}$ takes the form:
\[
\left(
\begin{array}{c|c}
  \raisebox{-10pt}{{\huge\mbox{{$M_1$}}}} & \raisebox{-10pt}{{\huge\mbox{{$0$}}}} \\ \hline
  \ast \, \cdots \, \ast \, 1 & 1 \, \varepsilon_2 \, \cdots \, \varepsilon_{d - 1}\,  0\,  \cdots\,  0 \\ \hline
  \raisebox{-10pt}{{\huge\mbox{{$0$}}}} & \raisebox{-10pt}{{\huge\mbox{{$M_2$}}}} \\
\end{array}
\right)
\]
where $M_2$ is the $\epsilon_{lf}$ matrix of a vacuum graph, one of the additional vacuum components created by the contraction of $\ell$. $M_1$ is the $\epsilon_{lf}$ matrix associated to the complement (possibly several connected components) in $\cG / \ell$.
The additional line corresponds to $\ell$, and because $\cG$ is face-connected, it must contain at least a $1$ under $M_1$, and a $1$ above $M_2$. This leaves up to $d-2$ additional non-trivial entries in this line above $M_2$, denoted by the variables $\varepsilon_i = 0$ or $1$. Let us call
$i_1$ the color of the face associated to the first column of $M_2$. Non-zero $\varepsilon$'s are necessarily associated to different colors: call them $i_2$ up to $i_{d-1}$. This implies that the remaining color, $i_d$, only appears in faces
of $M_2$ that do not intersect with $\ell$. Calling $C_f$ the columns of $M_2$, and $C_{f_1}$ its first column, one has:
$$
C_{f_1} + \sum_{f \, {\rm{of} \, \rm{color}} \, i_1 \,; \,f \neq f_1} C_f = \sum_{f \, {\rm{of} \, \rm{color}} \, i_d \,} C_f\,.
$$     
The operation $$C_{f_1} \to C_{f_1} + \sum_{f \, {\rm{of} \, \rm{color}} \, i_1 \,; \,f \neq f_1} C_f - \sum_{f \, {\rm{of} \, \rm{color}} \, i_d \,} C_f$$
cancels the first column of $M_2$, and when operated on the whole matrix does not change the line $\ell$. 
We conclude that $R(\cG) = {\rm{rank}}(M_1) + {\rm{rank}}(M_2) + 1 = R( \cG / \ell) + 1$. 
\end{proof}

The essential property of the quantity $\rho$ is that it is bounded from above, and is extremal for melopoles. More precisely we have:
\begin{proposition}\label{rho}
Let $\cG$ be a connected rosette. 
\begin{enumerate}[(i)]
\item If $\cG$ is a vacuum graph, then 
$$ \rho(\cG) \leq 1$$
and
$$ \rho (\cG) = 1 \Leftrightarrow \cG \; \mathrm{is} \; \mathrm{a} \; \mathrm{melopole}\,.$$
\item If $\cG$ is not a vacuum graph, i.e. has external legs, then 
$$ \rho(\cG) \leq 0$$
and
$$ \rho (\cG) = 0 \Leftrightarrow \cG \; \mathrm{is} \; \mathrm{a} \; \mathrm{melopole}\,.$$
\end{enumerate}
\end{proposition}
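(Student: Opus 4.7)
\noindent The strategy is strong induction on $\widetilde{L}$, the number of lines in the rosette $\cG$, via successive $k$-dipole contractions. Since a $k$-dipole contraction with $k < d$ can split a connected rosette into up to $d - k + 1$ components (Proposition~\ref{disconnected}(ii)), I would first strengthen the inductive claim to arbitrary (possibly disconnected) rosettes: for such a $\cG$ with components $C_1, \ldots, C_n$,
\[
\rho(\cG) = \sum_{i=1}^n \rho(C_i) \leq N_{vac}(\cG),
\]
with $N_{vac}$ counting vacuum components, and equality iff each $C_i$ is a melopole. The proposition follows at once since a connected rosette has $N_{vac} \in \{0, 1\}$.

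\noindent The central computational input is the change in $\rho$ under a $k$-dipole contraction of a line $\ell$. A direct count using $F \to F - (k-1)$, $R \to R - 1$ for $k \geq 2$, and the rank analysis preceding Lemma~\ref{d1} for $k=1$, yields
\[
\rho(\cG/\ell) - \rho(\cG) = \begin{cases} d - k & \text{if } k \geq 2, \\ d - 2 \text{ or } d - 1 & \text{if } k = 1,\end{cases}
\]
where in the $k = 1$ case, $d - 1$ occurs precisely when the rank drops by one. Since $d \geq 3$, the increment is non-negative and vanishes iff $k = d$. Proposition~\ref{disconnected} bounds the number of newly created components by $d - k + 1$, and Lemma~\ref{d1} ensures that any new vacuum components produced by a $1$-dipole contraction force $\Delta\rho = d - 1$. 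A case-by-case analysis then confirms $\Delta N_{vac} \leq \Delta\rho$ in every situation.

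\noindent Combining these two estimates with the induction hypothesis applied to $\cG/\ell$ gives
\[
\rho(\cG) = \rho(\cG/\ell) - \Delta\rho \leq N_{vac}(\cG/\ell) - \Delta\rho = N_{vac}(\cG) + (\Delta N_{vac} - \Delta\rho) \leq N_{vac}(\cG),
\]
proving the bound. The ``if'' direction of the equality follows since a melopole's Hepp sector is a sequence of $d$-dipole contractions, each preserving $\rho$ and creating no vacuum component, so $\rho(\cG)$ attains its maximum value. For the ``only if'' direction, when equality holds and $\cG$ is connected and admits a $d$-dipole $\ell$, the graph $\cG/\ell$ stays connected by Proposition~\ref{disconnected}(i) with $\Delta\rho = \Delta N_{vac} = 0$; the induction hypothesis then makes $\cG/\ell$ a melopole, and prepending $\ell$ to its Hepp sector exhibits $\cG$ as one.

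\noindent The main obstacle is to rule out equality when $\cG$ is connected but admits no $d$-dipole. Every contraction then satisfies $\Delta\rho \geq 1$, and sustaining equality would force $\Delta N_{vac} = \Delta\rho$ simultaneously with $\cG/\ell$ being a disjoint union of melopoles. I would resolve this by tracking the colored structure around a hypothetical $k$-dipole $\ell$ with $k < d$ realizing maximal splitting into $d - k + 1$ melopole pieces: the requirement that each piece possesses a Hepp sector starting with a $d$-dipole imposes enough rigidity on the way colored edges at the two endpoints of $\ell$ are rerouted that one of those starting $d$-dipoles must already have been a $d$-dipole in $\cG$, contradicting the hypothesis. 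Hence equality forces the presence of a $d$-dipole, and the previous step closes the induction.
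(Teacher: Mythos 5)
Your bound argument follows the paper's strategy closely (induction on the number of lines, the $\Delta\rho$ bookkeeping under $k$-dipole contraction, Proposition \ref{disconnected} for the component count, Lemma \ref{d1} for the dangerous $k=1$ subcase in which the rank does not drop), and the inequality part is essentially right modulo the points noted below. The genuine gap is in the ``equality $\Rightarrow$ melopole'' direction. You reduce it to the claim that extremality of $\rho$ forces the existence of a $d$-dipole in $\cG$, and you yourself flag the no-$d$-dipole case as ``the main obstacle'', offering only a sketch: that the melopole structure of the pieces of $\cG/\ell$ is rigid enough that one of their starting $d$-dipoles must already be a $d$-dipole of $\cG$. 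That claim is exactly the hard combinatorial content and is never proved; note that a starting $d$-dipole of a piece whose $d-1$ shared colored edges include the colored edge created by the reconnection when $\ell$ is contracted is only a $(d-1)$-dipole in $\cG$, so the asserted rigidity requires a real argument about how the reconnected edges sit inside the melopole pieces, not merely the existence of Hepp sectors. The paper does not argue this way at all: it contracts an \emph{arbitrary} $k$-dipole $\ell$ and shows that extremality forces $\cG/\ell$ to be already a disjoint union of exactly $d-k+1$ single-vertex vacuum melopoles (resp. $d-k$ vacuum ones plus one non-vacuum one), from which $\cG$ itself is a melopole; no $d$-dipole of $\cG$ is ever sought. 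As written, your proof is incomplete precisely at the only place where the statement has real content.

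Two smaller points. First, Lemma \ref{d1} is stated for \emph{face-connected} rosettes with at least one internal face; in your strengthened induction over arbitrary (possibly disconnected) rosettes you invoke it without this hypothesis. The paper inserts an explicit reduction: a connected rosette that is not face-connected is necessarily non-vacuum, and one sums $\rho$ over its face-connected components (each with strictly fewer lines) before ever contracting a line; you need this step, or else a proof that the lemma survives without face-connectedness. Second, $\cG/\ell$ need not be a disjoint union of rosettes: contracting a $k<d$ dipole can split the vertex into components carrying several vertices, so the identity $\rho(\cG)=\rho(\cG/\ell)-\Delta\rho$ only holds after further tree contractions, which can only increase $\rho$. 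The inequality goes the right way for the bound, but it must be tracked in the equality analysis (the paper records ``with equality if and only if $\cG/\ell$ is itself a rosette'', and this is part of what forces the maximal splitting into melopoles in the extremal case).
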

\begin{proof}
It is easy to see that $\rho$ is conserved under $d$-dipole contractions. In particular a simple computation shows that $\rho(\cG) = 1$ when $\cG$ is a vacuum melopole, and $\rho(\cG) = 0$ when $\cG$ is a non-vacuum melopole. 
We can prove the general bounds and the two remaining implications in (i) and (ii) by induction on the number of lines $L$ of the rosette $\cG$. 
\begin{itemize}
 \item If $L = 1$, $\cG$ can be both vacuum or non-vacuum. In the first situation, $\cG$ cannot be anything else than the
fundamental melon with two nodes. It has exactly $d$ faces, a rank $R = 1$, so that $\rho (\cG) = 1$. In the second situation,
namely when $\cG$ is non-vacuum, the number of faces is strictly smaller than $d$, as at least one strand running through the single line of $\cG$ must correspond to an external face. Since on the other hand 
the rank is $0$ when $F(\cG) = 0$ and $1$ otherwise, we see that $\rho (\cG) \leq 0$, and $\rho (\cG) = 0$ whenever the number of faces is exactly $(d-1)$. In this case, the unique line of $\cG$ is a $d$-dipole, therefore $\cG$ is a melopole.
 \item Let us now assume that $L \geq 2$ and that properties (i) and (ii) hold for a number of lines $L' \leq L -1$. If $\cG$ is not face-connected (and therefore non-vacuum), we can decompose it into face-connected components $\cG_1 , \ldots , \cG_k$ with $k \geq 2$. Each of these components has a number of lines strictly smaller than $L$, so by induction hypothesis $\rho(\cG) = \underset{i}{\sum} \rho( \cG_i ) \leq 0$. Moreover, $\rho( \cG ) = 0$ if and only if $\rho ( \cG_i ) = 0$ for any $i$, in which case $\cG$ is a melopole since all the $\cG_i$'s are themselves melopoles. This being said, we assume from now on that $\cG$ is face-connected, and pick up a $k$-dipole line $\ell$ in $\cG$ ($1 \leq k \leq d$). 

Let us first suppose that $k \geq 2$. 
$\cG / \ell$ has $\widetilde{L} \leq L - 1$ lines in its rosettes, and $(F-R) (\cG / \ell ) = (F - R) (\cG) - (k - 2)$, which implies $\rho(\cG / \ell ) \geq \rho( \cG ) + (d - k)$ (with equality if and only if $\cG / \ell$ is itself a rosette). Moreover, $\cG / \ell$ is possibly disconnected and consists in $q$ vertex-connected components with $1 \leq q \leq d-k + 1$, yielding $q$ connected rosettes (after possible contractions of tree lines).
By the induction hypothesis, we therefore have $\rho( \cG ) \leq q - (d -k) \leq 1$, and $\rho(\cG) = 1$ if and only
if $\cG / \ell$ consists of $d - k + 1$ connected vacuum melopoles, in which case $\cG$ itself is a vacuum melopole. Similarly, $\rho(\cG) = 0$ if and only if $\cG / \ell$ consists of $d-k$ vacuum melopoles
and $1$ non-vacuum melopole, in which case $\cG$ is a non-vacuum melopole. 

If $k =1$, we either have $R (\cG / \ell) = R ( \cG ) - 1$ or $R (\cG / \ell) = R ( \cG )$, which respectively imply $\rho( \cG ) \leq \rho( \cG / \ell) - (d - 1)$ or $\rho( \cG ) \leq \rho( \cG / \ell) - (d - 2)$. 
The first situation is strictly analogous to the $k \geq 2$ case, therefore the same conclusions follow. In the second
situation, we resort to lemma \ref{d1}. Since $\cG$ has been assumed face-connected, and $L \geq 2$ implies $F(\cG) \geq 1$, the lemma is applicable: $\cG / \ell$ cannot have more vacuum connected components than $\cG$. In particular, if $\cG$ is non-vacuum, $\rho( \cG / \ell ) \leq 0$, 
therefore $\rho( \cG ) \leq - (d - 2) < 0$. Likewise, $\rho(\cG) \leq 0$ when $\cG$ is vacuum. 

We conclude that the two properties (i) and (ii) are true at rank $L$.
\end{itemize}
\end{proof}

\begin{corollary}
Let $\cH$ be a vertex-connected subgraph. If $\cH$ admits a melopole rosette (in particular, if $\cH$ is melonic), then all its rosettes are melopoles.
\end{corollary}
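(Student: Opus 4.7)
The plan is to exploit the tree-independence of the quantity $\rho$ together with the characterization of melopoles provided by Proposition \ref{rho}. The input is a vertex-connected subgraph $\cH$ admitting at least one spanning tree $\cT_0$ such that the rosette $\cH / \cT_0$ is a melopole. I want to show that for every spanning tree $\cT$ of $\cH$, the rosette $\cH / \cT$ is again a melopole.

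First I would observe that since $\cH$ is vertex-connected, any spanning tree $\cT$ has $\vert \cT \vert = V(\cH) - 1$ edges, so the contraction $\cH / \cT$ is a single-vertex (hence connected) rosette with $\widetilde L = L(\cH) - V(\cH) + 1$ tadpole lines. This number depends only on $\cH$. Moreover, the preceding lemma shows that $F(\cH/\cT) = F(\cH)$ and $R(\cH/\cT) = R(\cH)$, so the combinations $F - R$ and hence
\[
\rho(\cH / \cT) = F(\cH) - R(\cH) - (d-2)\bigl(L(\cH) - V(\cH) + 1\bigr)
\]
are invariant under the choice of spanning tree. Likewise, contracting a tree does not alter the set of external half-edges, so the rosette $\cH / \cT$ is vacuum (resp. non-vacuum) if and only if $\cH$ is vacuum (resp. non-vacuum), again independently of $\cT$.

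Now I would feed this into Proposition \ref{rho}. Applied to the distinguished rosette $\cH / \cT_0$, which is a melopole by hypothesis, it gives $\rho(\cH / \cT_0) = 1$ in the vacuum case and $\rho(\cH / \cT_0) = 0$ in the non-vacuum case, i.e. the extremal value in either case. By the previous paragraph, the same extremal value is attained by $\rho(\cH / \cT)$ for every other spanning tree $\cT$. Applying the equality case of Proposition \ref{rho} to the connected rosette $\cH / \cT$ then forces it to be a melopole as well, which is exactly the claim. The melonic case is a special instance since by definition a melonic subgraph $\cH$ admits at least one maximal tree whose contraction yields a melopole.

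There is really no obstacle here beyond keeping the bookkeeping straight: the essential content has been pre-packaged in the tree-invariance of $F-R$ and in the tightness of the bound on $\rho$. The only mild subtlety is to remark that the vacuum/non-vacuum dichotomy is itself tree-invariant, so that one always compares the same case of Proposition \ref{rho} across different spanning trees of $\cH$.
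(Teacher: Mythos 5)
Your proof is correct and follows essentially the same route as the paper: the tree-independence of $\rho(\cH/\cT)$ (via the lemma on $F-R$ under tree contraction) combined with the equality case of Proposition \ref{rho} characterizing melopoles by the extremal value of $\rho$. The paper states this in two sentences; your version merely spells out the bookkeeping (invariance of $\widetilde L$ and of the vacuum/non-vacuum status), which is a faithful elaboration rather than a different argument.
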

\begin{proof}
The quantity $\rho(\cH / \cT)$ is independent of the particular spanning tree $\cT$ one is considering. Therefore, if $\cH / \cT$ is a melopole then this holds for any other spanning tree $\cT'$. 
\end{proof}

\subsection{Just-renormalizable models}

We are now in good position to establish a list of potentially just-renormalizable theories. Indeed, by simply rewriting $L$ and $V$ as
\beq
L = \sum_{k = 1}^{v_{max} / 2} k \, n_{2k} - \frac{N}{2} \;, \qquad V = \sum_{k = 1}^{v_{max} / 2} n_{2 k}\,,
\eeq
one obtains the following bound on the degree of non-vacuum face-connected subgraphs:

\beq
\omega \leq D \left( d - 2 \right) - \frac{ D(d - 2) - 2}{2} N 
+ \sum_{k = 1}^{v_{max} / 2} \left[ \left( D(d-2) - 2 \right) k - D \left( d - 2 \right) \right] n_{2 k}\,.
\eeq
Since we also know this inequality to be saturated (by melonic graphs), it yields a necessary condition for just-renormalizable theories:
\beq\label{condition}
v_{max} = \frac{2 D (d-2)}{D(d-2) - 2}\,,
\eeq
and in such cases
\beq\label{vertex_true}
\omega = \frac{D (d - 2) - 2}{2}\left( v_{max} - N \right) - \sum_{k = 1}^{v_{max}/2 - 1} \left[ D \left( d - 2 \right) - \left( D(d-2) - 2 \right) k\right] n_{2 k} + D \rho \,.
\eeq
We immediately deduce that only $n$-point functions with $n \leq v_{max}$ can diverge, which is a necessary condition for renormalization. Equation (\ref{condition}) has exactly five non-trivial solutions (i.e. $v_{max} > 2$), which yields five classes of potentially
just-renormalizable interacting theories. Two of them are $\vphi^6$ models, the three others being of the $\vphi^4$ type. A particularly interesting model from a quantum gravity perspective is the $\vphi^6$ theory with $d=3$ and $D=3$,
which can incorporate the essential structures of 3d quantum gravity (model A)\footnote{The relevance of the other cases, in particular the four dimensional case C, for quantum gravity is uncertain. Current (T)GFT models for 4d quantum gravity \cite{GFT1,GFT2,GFT3,EPRL,BO-Immirzi}, in fact, are not given by simple field theories on a group manifold but, due to the simplicity constraints,  either by functions on homogeneous spaces (obtained by the quotient of the Lorentz group $SO(3,1)$ or the rotation group $SO(4)$ by an $SO(3)$ subgroup) or by functions on the full group but subject to the condition that only their value on a submanifold of the same is dynamically relevant. As it stands, therefore, the above analysis does not apply, and a new analysis should be performed.}. We will focus on this case in the following sections, but we already notice that the same methods could as well be applied to any of the four other types of
candidate theories. Table \ref{theories} summarizes the essential properties of these would-be just-renormalizable theories, called of type A up to E.

\begin{table}[h]
\centering
\begin{tabular}{| c || c | c | c | c |}
    \hline
  Type & $d$ & $D$ & $v_{max}$ & $\omega$  \\ \hline\hline
A & 3 & 3 & 6 & $3 - N/ 2 - 2 n_2 - n_4 + 3 \rho$ \\ \hline
B & 3 & 4 & 4 & $4 - N - 2 n_2 + 4 \rho$ \\ \hline
C & 4 & 2 & 4 & $4 - N - 2 n_2 + 2 \rho$\\ \hline
D & 5 & 1 & 6 & $3 - N/ 2 - 2 n_2 - n_4 + \rho$ \\ \hline
E & 6 & 1 & 4 & $4 - N - 2 n_2 + \rho$\\ 
    \hline
  \end{tabular}
\caption{Classification of potentially just-renormalizable models.}
\label{theories}
\end{table}

Models D and E have been studied and shown renormalizable in \cite{SVT}.
Non-vacuum divergences of models A and B will only have melonic contributions, while models C, D and E will also include submelonic terms. There could be: up to $\rho = -1$ divergent $2$-point graphs in model C; up to
$\rho = -2$ divergent $2$-point graphs and $\rho = -1$ divergent $4$-point graphs in model D; up to $\rho = -2$ divergent $2$-point graphs in model C. These require a (presumably simple) refinement of proposition \ref{rho}.
As for models A and B, we do not need any further understanding of $\rho$.

\

Finally, one also remarks that face-connectedness did not play any role in the derivation of expression (\ref{vertex_true}). Indeed, it is as well valid for vertex-connected unions of non-trivial face-connected subgraphs, which as we will see in the last section of this article, is also relevant to renormalizability.

\subsection{Properties of melonic subgraphs}

Since they will play a central role in the remainder of this paper, we conclude this section by a set of properties verified by melonic subgraphs, especially non-vacuum ones.

\

The first thing one can notice is that by mere definition, any line in a melonic subgraph $\cH$ is part of an internal face in $F( \cH )$. This means in particular that $\cH$ cannot be split in two vertex-connected parts connected by a single $1$-dipole line $e$, since the three faces running through $e$ would then necessarily be external to $\cH$. In other words:
\begin{lemma}\label{1PI}
Any melonic subgraph $\cH \subset \cG$ is $1$-particle irreducible.
\end{lemma}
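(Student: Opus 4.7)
The plan is to argue by contradiction and exploit the tadface-free nature of TGFT graphs. Assume that $\cH$ admits a bridge, i.e. a line $e \in L(\cH)$ whose removal strictly increases the number of vertex-connected components of $\cH$. Tadpole lines and parallel edges are never bridges, so $e$ must be the unique line of $\cH$ crossing a cut of $\cH\setminus\{e\}$ into two vertex-connected parts $\cH_1$ and $\cH_2$. In the colored extension, $e$ is then a color-$0$ edge from a node $n_+$ belonging to a bubble on the $\cH_1$-side to a node $n_-$ on the $\cH_2$-side, and no other color-$0$ edge of $L(\cH)$ connects the two sides of this cut.

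The central step is to show that no internal face of $\cH$ can pass through $e$. Any face $f$ of color $\ell \in \{1,\dots,d\}$ is a closed cycle alternating color-$0$ edges (lines of $\cH$) with color-$\ell$ edges internal to single bubbles. Since color-$\ell$ edges do not change the ambient bubble vertex, only color-$0$ edges can move the cycle between the $\cH_1$- and $\cH_2$-sides of the cut. A face $f$ containing $e$ starts on the $\cH_1$-side, crosses to the $\cH_2$-side via $e$, and to close as a cycle must eventually come back to the $\cH_1$-side. But $e$ is the only line of $\cH$ crossing the cut, so $f$ would be forced to traverse $e$ a second time, violating the fact that TGFT graphs have no tadfaces: each color-$0$ edge is visited at most once by any face. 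Hence any face through $e$ must be external to $\cH$.

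To finish, I would invoke the two structural features built into the definition of a melonic subgraph: $\cH$ is face-connected, and it has at least one internal face (this last point follows straightforwardly from the existence of a spanning tree $\cT$ with $\cH/\cT$ a melopole, since the first $d$-dipole in any Hepp ordering of the melopole creates $d - 1 \geq 2$ internal faces of length one, and tree contraction preserves $F$ by the lemma established at the beginning of Section \ref{powercounting}). Face-connectedness together with $F(\cH)\geq 1$ then forces every line of $\cH$ to appear in at least one of its internal faces, directly contradicting the conclusion reached for $e$ in the previous paragraph. Therefore $\cH$ admits no bridge and is $1$-particle irreducible. The only mildly delicate point in carrying this out is the bookkeeping separating vertex-connectedness at the level of bubble vertices of $\cH$ from the edge structure of the colored extension $\cH_c$, together with the precise identification of the tadface-free property as the source of the obstruction; once those are in place the argument is essentially immediate.
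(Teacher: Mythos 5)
Your proof is correct and takes essentially the same route as the paper's: the paper observes that by definition every line of a melonic subgraph lies in an internal face of $\cH$, while all faces running through a bridge line would necessarily be external, so no bridge can exist. You merely spell out the two steps the paper leaves implicit, namely the cut-crossing/no-tadface argument showing an internal face cannot traverse a bridge, and the verification that face-connectedness with $F(\cH) \geq 1$ puts every line in an internal face.
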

From the point of view of renormalization theory, this is already interesting, as $2$-point divergences in particular will not require any further decomposition into $1$-particle irreducible components.

We now turn to specific properties of non-vacuum melonic subgraphs. In order to understand further their possible structures, it is natural to first focus on their rosettes. The following proposition shows that they cannot be arbitrary melopoles.

\begin{proposition}\label{face_rosette}
Let $\cH \subset \cG$ be a non-vacuum melonic subgraph. For any spanning tree $\cT$ in $\cH$, the rosette $\cH / \cT$ is face-connected. 
\end{proposition}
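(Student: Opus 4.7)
My approach would be to argue by contradiction, exploiting the Hepp sector structure of the rosette together with the face-connectedness of $\cH$ inherent to the definition of melonic subgraph. By the preceding corollary, $\cH/\cT$ is a melopole, so it admits some Hepp sector $l_1, \ldots, l_k$. The claim is trivial when $k \leq 1$, so I would assume $k \geq 2$ and aim to prove that all lines of the rosette belong to a single face-connected component.

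The first step I would carry out is to show that the first Hepp line $l_1$ is necessarily face-connected to some later line $l_j$ in the rosette. Since $l_1$ is a $d$-dipole in $\cH/\cT$, exactly $d-1$ of its $d$ strands belong to internal length-$1$ faces (involving $l_1$ alone), while the remaining strand lies either (a) in an internal face of length $\geq 2$ of the rosette, or (b) in an external face of the rosette. In case (a), $l_1$ shares that face with the other rosette line(s) appearing in it, and I am done for $l_1$. I would rule out case (b) as follows: external faces of the rosette are exactly the external faces of $\cH$ (tree contraction cannot create or destroy external-ness), and the length-$1$ internal faces of $l_1$ in the rosette run through $l_1$ alone, so they coincide with length-$1$ internal faces of $l_1$ in $\cH$ too. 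Consequently $l_1$ would be incident to no internal face of $\cH$ of length $\geq 2$, making $\{l_1\}$ an isolated face-connected component of $\cH$; this contradicts the face-connectedness of $\cH$ built into the definition of melonic subgraph, using $L(\cH) \geq V(\cH) + (k-1) \geq 2$.

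I would then complete the argument by induction on $k$. Contracting the $d$-dipole $l_1$ produces a melopole $\rho_2$ with Hepp sector $l_2, \ldots, l_k$. The plan is to realize $\rho_2$ as the rosette $\cH'/\cT'$ of a non-vacuum melonic subgraph $\cH'$ with $L(\cH') = L(\cH) - 1$ — the natural candidate being $\cH' = \cH/\{e_1\}$, where $e_1$ is the line of $\cH$ corresponding to $l_1$, together with the image spanning tree $\cT'$ of $\cT$. The inductive hypothesis would then give face-connectedness of $\rho_2$. Face-connectedness of $\rho_2$ transfers back to the original rosette because every face of $\rho_2$ is the restriction to $\{l_2, \ldots, l_k\}$ of a face of $\cH/\cT$ that may also contain $l_1$: two lines sharing a face in $\rho_2$ still share that (possibly longer) face in $\cH/\cT$. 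Together with the case (a) conclusion for $l_1$, this places all of $l_1, \ldots, l_k$ in one face-connected component of $\cH/\cT$.

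The main obstacle, I expect, will be justifying that $\rho_2$ really inherits the status of being the rosette of a non-vacuum melonic subgraph, i.e., that contracting the $d$-dipole tadpole $l_1$ respects both the melonic structure at the level of the colored extension $\cH_c$ and the non-vacuum condition. This is essentially the statement that $d$-dipole contractions preserve melonicity of $(d+1)$-colored graphs, which is standard in the tensor-model literature but needs to be invoked carefully here, together with the observation that the external legs of $\cH$ are unaffected by the contraction so $\cH'$ remains non-vacuum. Once this structural input is in place, the contradiction argument in case (b) and the inductive step combine to yield the proposition.
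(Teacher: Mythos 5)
Your reduction to the Hepp sector of the melopole $\cH/\cT$ and the dichotomy for the free strand of $l_1$ (internal face of length $\geq 2$ versus external face) is fine, but your disposal of case (b) rests on a false claim, and that case is precisely the crux of the proposition. You assert that the length-$1$ internal faces of $l_1$ in the rosette ``coincide with length-$1$ internal faces of $l_1$ in $\cH$''. They do not: contracting a spanning tree only shortens faces, so a face of length $1$ in $\cH/\cT$ generically lifts to an internal face of $\cH$ running through $e_1$ \emph{and several tree lines}. Hence, in case (b), $e_1$ is face-connected in $\cH$ to those tree lines, $\{e_1\}$ is not an isolated face-connected component of $\cH$, and no contradiction with the face-connectedness of $\cH$ arises; your argument only covers the trivial situation $\cT=\emptyset$, where the statement is part of the definition of a melonic subgraph. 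Note that case (b) with $k\geq 2$ is exactly the failure mode of the proposition (the other $d-1$ faces of $l_1$ have length $1$, so $\{l_1\}$ would be a face-connected component of the rosette), so excluding it requires genuine work. In the paper this is where the whole proof lives: one picks two faces $f_1,f_2$ lying in different face-components of $\cH/\cT$ but meeting on a tree line $l$, contracts $\cT\setminus\{l\}$ to get a two-vertex graph joined by $l$, by $n\geq 2$ further propagator lines and by $d$ colored lines, and shows that the melopole property of $\cH/\cT$ forces $d\geq n(d-2)+1$, impossible for $d\geq 3$ except $n=2$, $d=3$, which is then excluded by a separate color argument. Nothing playing this role appears in your proposal.

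The inductive step is also not secured. The candidate $\cH'=\cH/\{e_1\}$ need not be face-connected, it is a subgraph of the contracted graph $\cG/\{e_1\}$ rather than of $\cG$, and since $e_1$ is a loop line with respect to $\cT$, its contraction can merge its end vertices so that the image of $\cT$ is no longer a tree; thus ``$\rho_2$ is the rosette $\cH'/\cT'$ of a non-vacuum melonic subgraph'' is not available as stated, and invoking preservation of melonicity under $d$-dipole contractions does not by itself repair it. You flag this as the main obstacle, but the paper sidesteps it entirely: no induction on the Hepp sector is needed, the contradiction being derived directly from the two-vertex configuration described above. As it stands, the proposal does not establish the proposition.
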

\begin{proof}
Let $\cT$ be a spanning tree in $\cH$ and let us suppose that $\cH / \cT$ has $k \geq 2$ face-connected components. In order to find a contradiction, one first remarks that the contraction of a tree conserves the number of faces, and even elementary face connections. That is to say: if $l_1 , l_2 \in L( \cH ) \setminus \cT$ share a face in $\cH$, they also share a face in $\cH / \cT$. Therefore, the lines of $L ( \cH ) \setminus \cT$ can be split into $k$ subsets, such that each one of them does not share any face of $\cH$ with any other. We give a pictorial representation of what we mean on the left side of Figure \ref{propo6_1}, with $k = 4$. The internal structure of the vertices is ommited, the $4$ subsets of lines are marked with different symbols, while the tree lines are left unmarked.    
\begin{figure}[h]
\begin{center}
\includegraphics[scale=0.6]{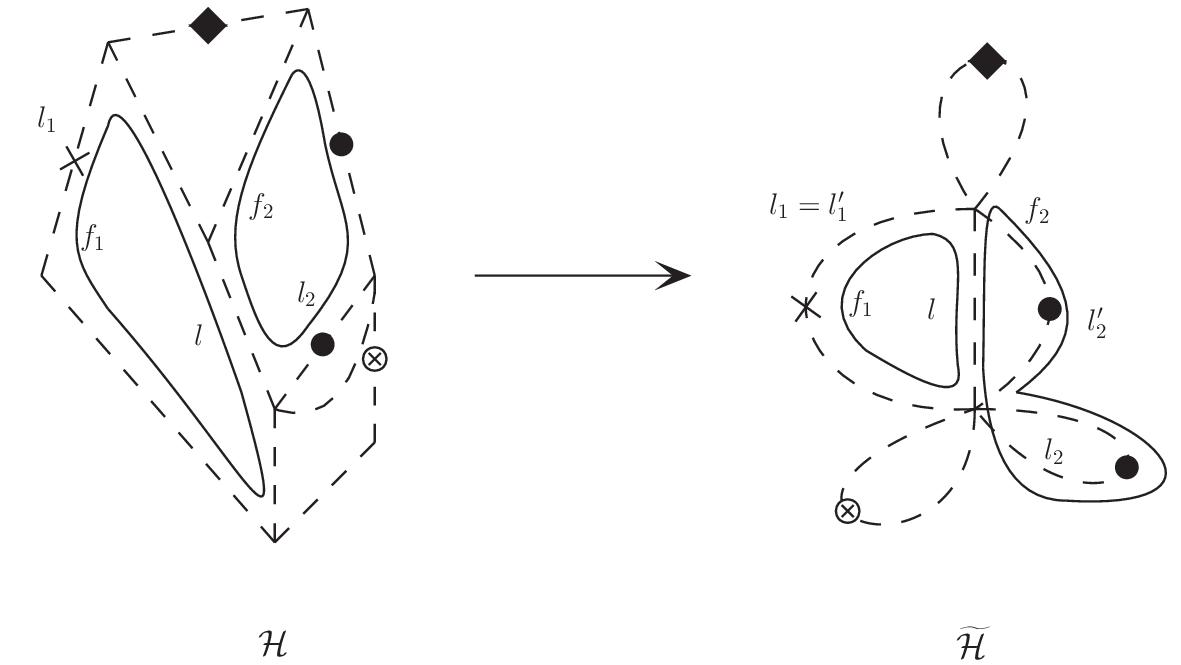}
\caption{Simplified representation of a melonic graph $\cH$ and its contraction $\widetilde{\cH}$.}
\label{propo6_1}
\end{center}
\end{figure}
The face-connectedness of $\cH$ is ensured by the tree lines, which must connect together these $k$ subsets. Incidentally, there must be at least one line $l \in \cT$ which is face-connected to two or more of these subsets. In particular\footnote{At this point we rely on $F(\cT) = \emptyset$, which holds because $\cT$ is a tree.}, we can find two faces $f_1$ and $f_2$ which are face-disconnected in $\cH / \cT$, and two lines $l_1 , l_2 \in \cH \setminus \cT$ such that: $l \in f_1 \cap f_2$, $l_1 \in f_1$ and $l_2 \in f_2$. See again the left side of Figure \ref{propo6_1}, where $f_1$ and $f_2$ are explicitly represented as undashed closed loops.  

\begin{figure}[h]
\begin{center}
\includegraphics[scale=0.6]{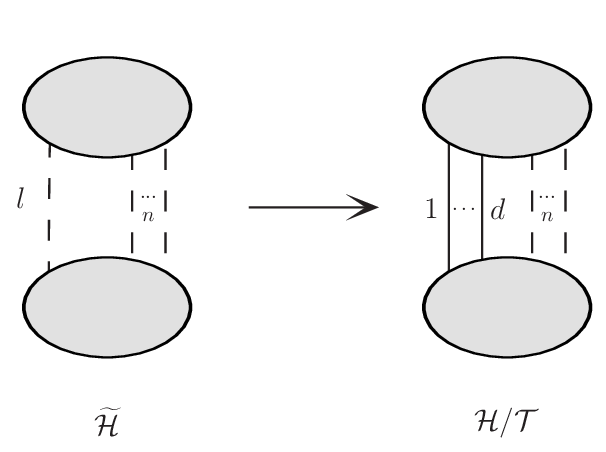}
\caption{Last step of a contraction of a spanning tree in a melonic subgraph.}
\label{melonic_last_step}
\end{center}
\end{figure}

Now, call $\widetilde{\cH}$ the subgraph obtained after contraction of all the tree lines but $l$, i.e. $\widetilde{\cH} \equiv \cH / (\cT \setminus \{ l \})$. $\widetilde{\cH}$ consists of two vertices, connected by $l$ and a certain number $n$ of lines from $L(\cH) \setminus \cT$ (see the right part of Figure \ref{propo6_1} and the left part of Figure \ref{melonic_last_step}). Through $l$ run at least two faces, $f_1$ and $f_2$. $l$ is their single connection, since they are disconnected in $\widetilde{\cH} / \{ l \} = \cH / \cT$. This requires the existence of two $1$-dipole lines $l_1'$ and $l_2'$ in $\widetilde{\cH} \setminus \{ l \}$, through which $f_1$ and $f_2$ respectively run. In Figure \ref{propo6_1} we see that $l_1 ' = l_1$, but because $l_2$ is a tadpole line in $\widetilde{\cH}$, we must choose $l_2 ' \neq l_2$. Otherwise, $f_1$ and $f_2$ could not close without being connected in $\cH  / \cT$. The colored extension of $\cH / \cT$ can thus be split into two groups of nodes, connected by $n \geq 2$ lines and $d$ colored lines (created by the contraction of $l$, see the right part of Figure \ref{melonic_last_step}). It is easy to understand that such a drawing cannot correspond to a melopole. Indeed, the number of colored lines connecting the two groups of nodes would need to be at least $n (d - 2) + 1$.\footnote{A simple way to understand this last point is the following. Suppose there are $p$ colored lines between the two groups of nodes. If none of the $n$ lines between the two groups of nodes are elementary melons, an elementary melon can be contracted in one of them, without affecting the $n$ lines nor the $p$ colored lines between them. If on the contrary one of the $n$ lines is an elementary melon, it can be contracted. This cancels $d-1$ colored lines connecting the two groups of nodes, and replaces it by a single one. Hence $n \to n-1$ and $p \to p - (d - 2)$. By induction, one must therefore have $p - n (d - 2) \geq 1$, where the $1$ on the right side is due to the last step $n = 1 \to n = 0$.} Hence $d \geq n (d - 2) + 1$, from which we deduce: 
\beq
d \leq \frac{2n - 1}{n -1}\,.
\eeq
When $n \geq 3$, this is incompatible with $d \geq 3$, and $n=2$ is also incompatible with $d \geq 4$. If $n = 2$ and $d = 3$, a contradiction also arises, thanks to the colors. In the process of elementary melon contractions, the first of the two lines to become elementary will delete $2$ colored lines, say with colors $1$ and $2$, and replace it by a color-$3$ line. One therefore obtains two groups of nodes connected by two color-$3$ lines and a single color-$0$ line, which cannot form an elementary melon. See Figure \ref{propo6_2}.
\begin{figure}[h]
\begin{center}
\includegraphics[scale=0.6]{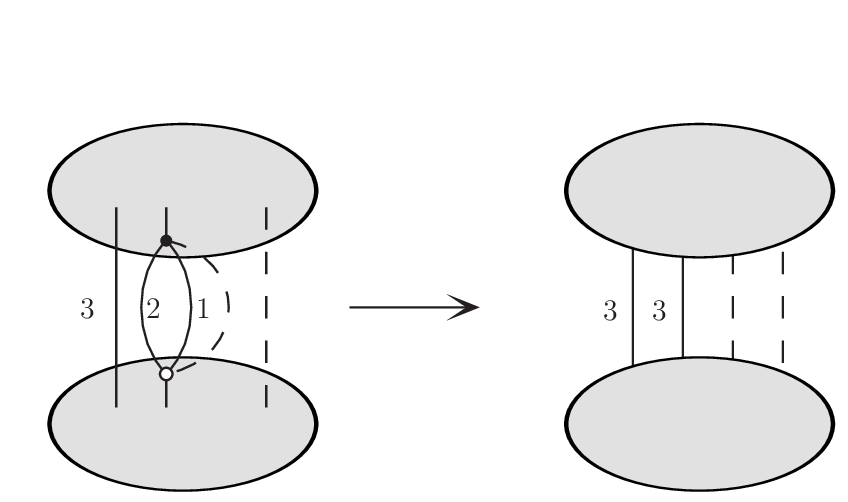}
\caption{Contraction of an elementary melon in a $3$-colored rosette with $n = 2$.}
\label{propo6_2}
\end{center}
\end{figure} 
\end{proof}

One immediately notices that this proposition also holds for any forest $\cF \subset \cH$, that is any set of lines without loops, be it a spanning tree or not. Indeed, any such $\cF$ is included in a spanning tree $\cT$. The contraction of $\cF$ on the one hand can only increase the number of face-connected components, and on the other hand the full contraction of $\cT$ leads to a single face-connected components, hence the contraction of $\cF$ also leads to a single face-connected component. 

We provide an illustration of this result in Figure \ref{rosette_propo6}, representing a melonic graph and one of its rosettes, which is face-connected.
\begin{figure}[h]
\begin{center}
\includegraphics[scale=0.6]{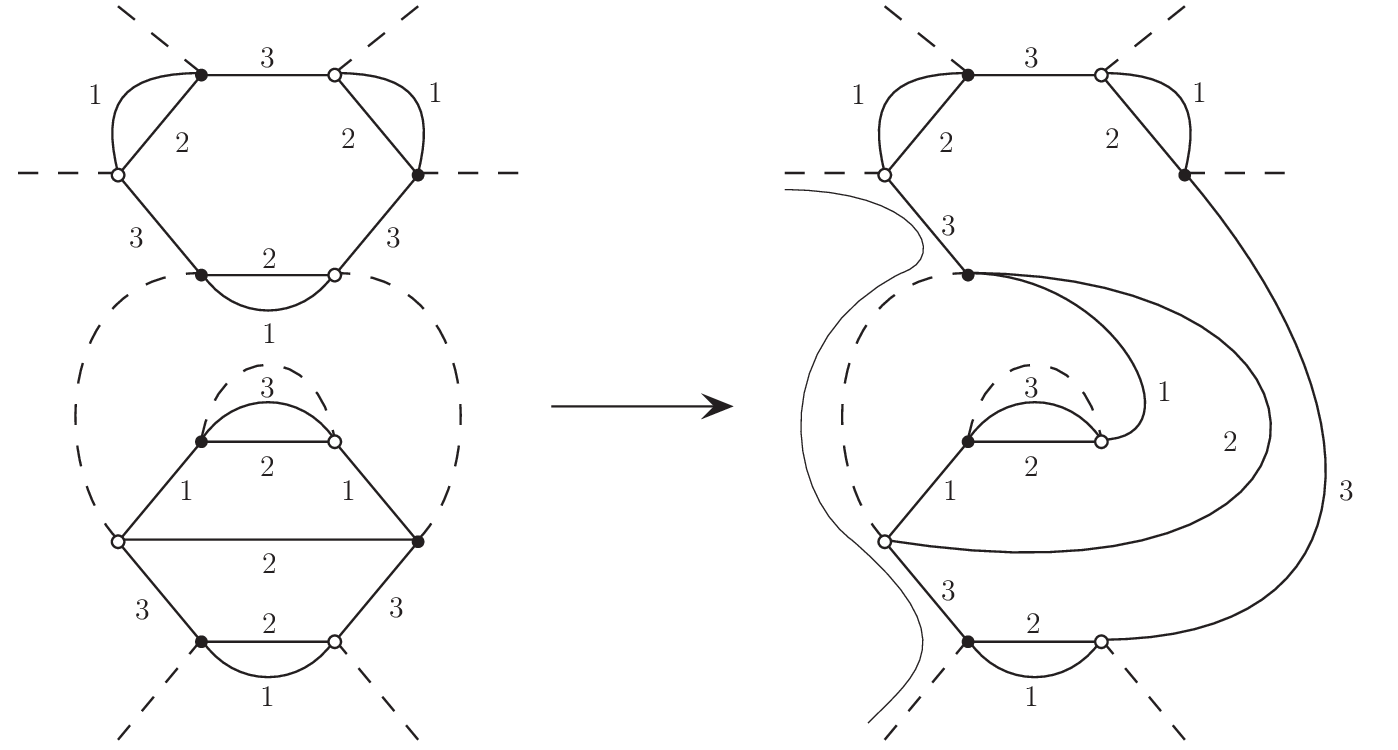}
\caption{A melonic graph (left) and one of its rosettes (right). The latter is face-connected (Proposition \ref{face_rosette}) and has a single external face (Corollary \ref{coro2}), represented as a thin line.}
\label{rosette_propo6}
\end{center}
\end{figure} 
\

A more important consequence of this statement is a restriction on the number of external faces of the rosettes:
\begin{corollary}\label{coro2}
Let $\cH \subset \cG$ be a non-vacuum melonic subgraph. For any spanning tree $\cT$ in $\cH$, $F_{ext}( \cH / \cT ) = 1$.
\end{corollary}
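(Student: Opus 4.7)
The plan is to induct on $\widetilde{L}:=L(\cH)-V(\cH)+1$, the number of tadpole lines of the rosette $\cR:=\cH/\cT$, which by hypothesis is a non-vacuum melopole and, by Proposition \ref{face_rosette}, face-connected.

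For the base case $\widetilde{L}=1$, $\cR$ reduces to a single $d$-dipole tadpole $e$. Among its $d$ strands, exactly $d-1$ close into the length-$1$ small internal faces characterizing the $d$-dipole. The remaining strand can lie neither in a length-$1$ internal face (this would turn $e$ into a $(d+1)$-dipole, impossible) nor in any longer internal face (no other tadpole is available), so it must form an external face of length $1$. Hence $F_{ext}(\cR)=1$.

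For the inductive step $\widetilde{L}\geq 2$, fix a Hepp ordering of the melopole and contract its first line $e$, a $d$-dipole of $\cR$. Let $f$ denote the face of $\cR$ containing the unique strand of $e$ outside the small internal faces. Two cases can a priori occur:
\begin{itemize}
\item[(i)] $f$ has length $\geq 2$. Contracting $e$ simply shortens $f$ by one unit (while destroying the $d-1$ small internal faces), so $f$ remains a face of $\cR/e$ of the same internal or external type, and in particular $F_{ext}(\cR/e)=F_{ext}(\cR)$.
\item[(ii)] $f$ has length $1$. An internal length-$1$ $f$ is excluded ($(d+1)$-dipole), so $f$ is external. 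But then the row of $e$ in the incidence matrix $\epsilon_{e'f'}$ has non-zero entries only in the $d$ columns corresponding to the $d-1$ small internal faces and to $f$, none of which carries any other non-zero entry. Thus $\{e\}$ is a face-connected component on its own, contradicting the face-connectedness of $\cR$ since $\widetilde{L}\geq 2$ forces at least one other line to exist.
\end{itemize}
Only case (i) can occur, so $F_{ext}(\cR)=F_{ext}(\cR/e)$. The reduced rosette $\cR/e$ has $\widetilde{L}-1$ tadpoles, remains non-vacuum ($d$-dipole contraction preserves external legs), is still a melopole (by continuing the fixed Hepp ordering), and stays face-connected: any pair of lines in $\cR\setminus\{e\}$ that used to be face-linked through $e$ was so via the unique non-small face $f$ of $e$, which survives contraction with length $\geq 1$. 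Applying the inductive hypothesis to $\cR/e$ yields $F_{ext}(\cR/e)=1$, whence $F_{ext}(\cR)=1$, closing the induction.

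The main subtlety lies in step (ii): one must exploit the fact that $\epsilon_{e'f'}$ includes both closed and open faces so that the would-be ``solo'' external length-$1$ face of $e$ genuinely registers as a column with a single non-zero entry, thereby forcing $e$ into its own face-connected component and producing the contradiction with Proposition \ref{face_rosette}. Once this point is secured, the invariance of $F_{ext}$ in case (i) and the preservation of the inductive hypotheses for $\cR/e$ follow from a direct inspection of how strands and faces reorganize under one $d$-dipole contraction.
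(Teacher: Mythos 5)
Your proof is correct and follows essentially the same route as the paper's: contract the elementary melons of the Hepp ordering one at a time, noting that each $d$-dipole contraction deletes only the $d-1$ internal length-one faces and leaves the external-face count unchanged, with the single-line case supplying exactly one external face. Your explicit exclusion of a length-one external face through the face-connectedness of the rosette (Proposition \ref{face_rosette}) merely spells out a step the paper's argument leaves implicit.
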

\begin{proof}
Let us prove that any face-connected melopole $\widetilde{H}$ has a single external face. $\cH / \cT$ being itself face-connected thanks to the previous proposition, the result will immediately follow. We proceed by induction on $L(\widetilde{\cH})$. The elementary melon has $d-1$ internal faces and $1$ external face, hence the property holds when $L(\widetilde{\cH}) = 1$. If $L(\widetilde{\cH}) \geq 2$, we can contract an elementary $d$-dipole line $l$ in $\widetilde{\cH}$. The subgraph $\{ l \}$ has $1$ external face, but it is internal in $\widetilde{\cH}$, otherwise the latter would not be face-connected. Hence $\widetilde{\cH}$ and $\widetilde{\cH} / \{ l \}$ have the same number of external faces. By the induction hypothesis, $\widetilde{\cH} / \{ l \}$ (which is a face-connected melopole) has a single external face, and so do $\widetilde{\cH}$.
\end{proof}

We illustrate this result again in Figure \ref{rosette_propo6}. Such restrictions on the rosettes constrain the face structure of the initial melonic graphs themselves.

\begin{proposition}\label{propo_color}
Let $\cH \subset \cG$ be a non-vacuum melonic subgraph. All the external faces of $\cH$ have the same color.
\end{proposition}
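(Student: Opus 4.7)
The plan is to combine the preceding corollary with Lemma \ref{1PI}. Given the non-vacuum melonic subgraph $\cH$, I fix any spanning tree $\cT$: the rosette $\cH / \cT$ has exactly one external face, of some color $c_0(\cT)$, by the preceding corollary. I will argue that every external face of $\cH$ has color $c_0(\cT)$, and that $c_0(\cT)$ is actually independent of $\cT$.

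First I handle external faces that are visible in the rosette. If $f$ is an external face of $\cH$ passing through any non-tree line $e \in L(\cH) \setminus \cT$, then contracting $\cT$ does not affect the color of $f$ and leaves $e$ as a line of $\cH/\cT$ on which $f$ still lies; hence $f$ descends to an external face of $\cH/\cT$. By the preceding corollary this descent must coincide with the unique external face of the rosette, so $f$ has color $c_0(\cT)$.

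For an external face $f$ of $\cH$ supported entirely on tree lines, I invoke Lemma \ref{1PI}: $\cH$ is $1$-particle irreducible, so $\cH \setminus \{ \ell \}$ is connected for every line $\ell$; picking any $\ell \in L(\cH) \cap f$ and a spanning tree $\cT_f$ of $\cH \setminus \{ \ell \}$ yields a spanning tree of $\cH$ not containing $\ell$. Applying the previous paragraph to $(\cT_f, f)$ shows that $f$ has color $c_0(\cT_f)$.

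It remains to prove that $c_0(\cT_f) = c_0(\cT)$, i.e. that the color of the rosette's unique external face is independent of the spanning tree. Because any two spanning trees are related by a sequence of elementary swaps $\cT \to (\cT \setminus \{\ell\}) \cup \{\ell'\}$, it suffices to handle a single swap. Here the idea is to compare both rosettes by factoring through the intermediate subgraph $\cH / (\cT \cap \cT')$, in which both $\ell$ and $\ell'$ remain as lines. Using Proposition \ref{face_rosette} (the face-connectedness of the melopole rosette) and the fact that any elementary-melon contraction in the intermediate subgraph preserves the colors of surviving faces, one can track the color-$c_0$ strand continuously through both contractions and conclude $c_0(\cT) = c_0(\cT')$.

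\medskip

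The main obstacle is precisely this tree-swap invariance of $c_0$. Its proof will require a careful bookkeeping of how the unique external face of the rosette is realized in each intermediate subgraph along the contraction sequence, combining the face-connectedness from Proposition \ref{face_rosette} with the recursive elementary-melon structure of melopoles, so as to guarantee that the color carried by the external face is never altered by a single tree swap.
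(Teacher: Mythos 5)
There is a genuine gap. Your reduction is fine up to a point: using Lemma \ref{1PI} and the corollary $F_{ext}(\cH/\cT)=1$, every external face of $\cH$ acquires the color $c_0(\cT_f)$ of the unique external face of \emph{some} rosette, so the proposition becomes equivalent to your claim that $c_0(\cT)$ does not depend on the spanning tree. But that claim is exactly where all the content of the proposition sits, and you do not prove it: you only sketch that one should "track the color-$c_0$ strand continuously" through the two contractions $\cH/\cT$ and $\cH/\cT'$, and you yourself flag this as the main obstacle. The difficulty is not that melon contractions change colors (they do not); it is that the unique external face of $\cH/\cT$ and the unique external face of $\cH/\cT'$ may a priori be the images of \emph{different} external faces of $\cH$, so there is no single strand to track. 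Asserting that their colors agree is just a restatement of the proposition for the two faces involved, so as written the argument is circular at its crucial step.

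What is missing is the combinatorial input the paper extracts at the last contraction step. Fix an external face $f_1$, a line $l_1\in f_1$, and a tree $\cT$ with $l_1\notin\cT$ (possible by Lemma \ref{1PI}); then the rosette's unique external face is $f_1$, and any other external face $f_2$ is supported entirely on $\cT$. Pick $l_2\in f_2\cap\cT$ and set $\widetilde{\cH}=\cH/(\cT\setminus\{l_2\})$: this is a two-vertex graph, and by $1$PI its vertices are joined by exactly two lines, $l_2$ and some $l_3$. Since $\widetilde{\cH}/\{l_2\}=\cH/\cT$ must be a melopole, $l_2$ and $l_3$ share exactly $d-1$ internal faces (sharing $d$ would make $\cH/\cT$ vacuum), so the external strands through $l_2$ and $l_3$ both carry the single remaining color; a short counting of external faces of $\widetilde{\cH}$ identifies these strands with $f_2$ and $f_1$ respectively, giving the color equality directly. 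Your intermediate graph $\cH/(\cT\cap\cT')$ in the tree-swap picture is of the same two-vertex type, so your plan could be completed along these lines, but the dipole/face-counting argument above is the step that must actually be supplied; without it the proof is incomplete.
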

\begin{proof}
Suppose $F_{ext} ( \cH ) \geq 2$. Let us choose two distinct external faces $f_1$ and $f_2$, and show that they are of the same color. We furthermore select a line $l_1 \in f_1$, and a spanning tree $\cT$ in $\cH$ such that $l_1 \notin \cT$. This is possible thanks to lemma $\ref{1PI}$, and this guarantees that the unique external face of $\cH / \cT$ is $f_1$. This also means that in $\cH$, $f_2$ only runs through $\cT$, otherwise it would constitute a second face in $\cH / \cT$. We can in particular pick a line $l_2 \in f_2 \cap \cT$. See Figure \ref{propo7} for an example, in which we use the same simplified representation as before, except that the external faces we are interested in have open ends. 
\begin{figure}[h]
\begin{center}
\includegraphics[scale=0.6]{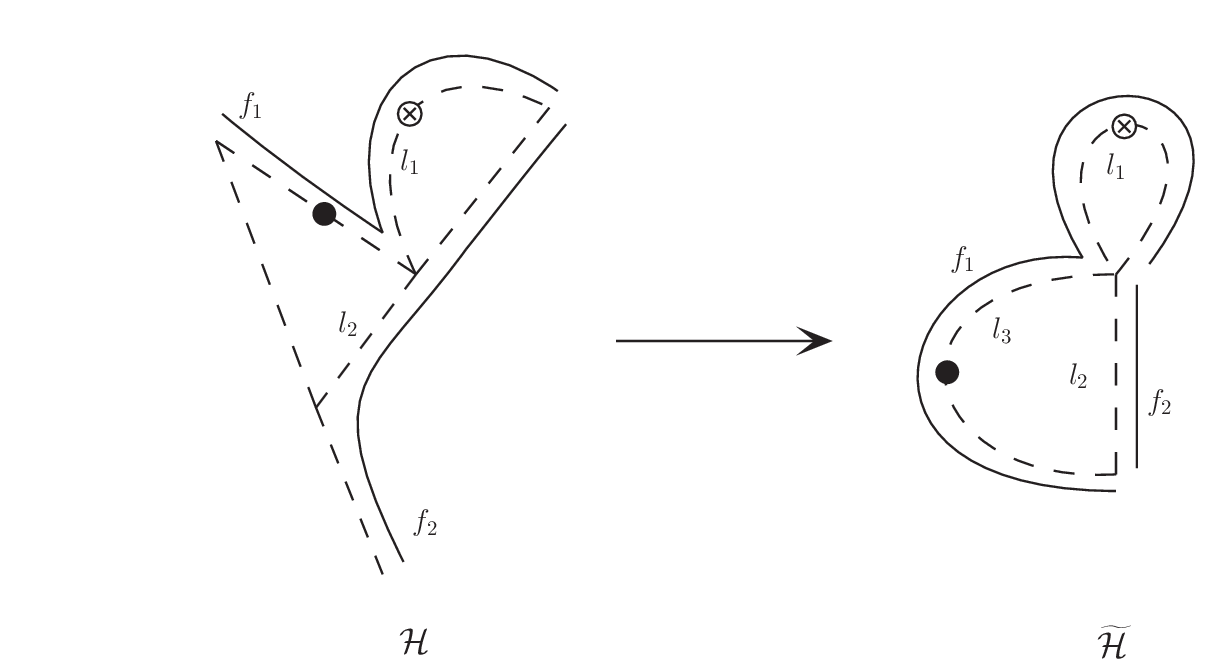}
\caption{A melonic graph $\cH$ and its contraction $\widetilde{\cH}$.}
\label{propo7}
\end{center}
\end{figure} 
 Similarly to the strategy followed in the proof of Proposition \ref{face_rosette}, define $\widetilde{\cH} \equiv \cH / ( \cT \setminus \{ l_2 \})$. As was already explained, $\widetilde{\cH}$ consists of two vertices, connected by $l_2$ and at most one extra line (see Figure \ref{melonic_last_step}, with $n = 1$). There cannot be just $l_2$ connecting these two vertices, because $\widetilde{\cH}$ is $1$-particle irreducible, hence there are exactly two such lines. Call $l_3$ the second of these lines (it is not necessarily possible to choose $l_3 = l_1$, see Figure \ref{propo7}). They must have at least $(d-1)$ internal faces in common, otherwise $\widetilde{\cH} / \{ l_2 \} = \cH / \cT$ would not be a melopole. They moreover cannot have $d$ internal faces in common, otherwise $\cH / \cT$ would be vacuum. This means that both appear in external faces of the same color. One of them is of course $f_2$ (which goes through $l_2$), and the second (which goes through $l_3$) is either $f_1$, again $f_2$, or yet another external face. $f_2$ is excluded because by construction it had no support on $\cH \setminus \cT$. Moreover, $\widetilde{\cH}$ must have exactly two external faces, since only one is deleted when contracting $l_2$ and the resulting rosette $\cH / \cT$ has itself a single external face (by Corollary \ref{coro2}). Hence the external face running through $l_3$ can only be $f_1$, and we conclude that it has the same color than $f_2$.
\end{proof}
This property is quite useful in practice because it implies a restrictive bound on the number of external faces of a melonic subgraph in terms of its number of external legs. 
\begin{corollary}\label{faces_legs}
A melonic subgraph with $N$ external legs has at most $\frac{N}{2}$ external faces.
\end{corollary}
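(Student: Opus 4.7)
The plan is to leverage directly the previous proposition, which forces all external faces of a non-vacuum melonic subgraph $\cH$ to share the same color, together with the simple observation that a color-$c$ strand appears exactly once on each half-line of the graph.

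First I would fix a non-vacuum melonic subgraph $\cH$ with $N$ external legs (the vacuum case is trivial since it has no external faces at all). By the previous proposition, there is a single color $c \in \{1 , \dots , d\}$ such that every external face of $\cH$ is of color $c$. An external face is by definition a maximal open connected piece of a face of $\cG$ which passes through lines of $\cH$; at each of its two endpoints it exits $\cH$ through an external leg, and at that leg it uses precisely the color-$c$ strand.

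Next I would count strand occurrences. Each of the $N$ external legs of $\cH$ carries exactly one strand of color $c$, since strands of a given color do not bifurcate along a half-line. An external face of color $c$ uses exactly two such strands (one at each of its two endpoints), and distinct external faces cannot share a color-$c$ strand on the same leg because the strands are embedded in the colored extension in a non-overlapping way (no tadfaces, and each leg carries a single color-$c$ strand). Consequently the map sending an external face of color $c$ to its pair of endpoints is an injection into the set of $2$-subsets of the $N$ external legs, with each leg being hit at most once. This yields
\beq
F_{ext}(\cH) \leq \frac{N}{2}\,,
\eeq
which is exactly the claim.

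There is no real obstacle here: the content is almost entirely in the previous proposition. The only point that requires a line of justification is the bookkeeping that each external leg carries precisely one color-$c$ strand and hence is consumed by at most one external face, which follows from the $d$-colored structure of TGFT graphs recalled at the beginning of the section.
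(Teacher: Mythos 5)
Your proposal is correct and is essentially the paper's own argument: the paper invokes the previous proposition (all external faces share one color) and then observes that in any vertex-connected graph with $N$ external legs the number of external faces of a given color is at most $\frac{N}{2}$, which is exactly the endpoint-counting you spell out in more detail. Your extra bookkeeping about one color-$c$ strand per external leg is just an expanded version of that same one-line counting step.
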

\begin{proof}
In any vertex-connected graph with $N$ external legs, the number of external faces of a given color is bounded by $\frac{N}{2}$.
\end{proof}
Figure \ref{rosette_propo6} provides a good example of a melonic graph having more external faces that its rosettes: while the rosette on the right side has a single external face (in agreement with Corollary \ref{coro2}), the graph on the left side has two external faces, and they both have the same color $3$ (in agreement with Proposition \ref{propo_color}). 

\

Finally one would like to understand the inclusion and connectivity relations between all divergent subgraphs of a given non-vacuum graph. This is a very important point to address in view of the perturbative renormalization of such models, in which divergent subgraphs are inductively integrated out. As usual, the central notion in this respect is that of a "Zimmermann" forest, which we will generalize to our situation (where face-connectedness replaces vertex-connectedness) in Section \ref{sec:finiteness}. At this stage, we just elaborate on some properties of melonic subgraphs which will later on help simplifying the analysis of "Zimmermann" forests of divergent subgraphs. 
%



\begin{proposition}\label{curiosity}
Let $\cG$ be a non-vacuum vertex-connected graph. If $\cH_1 , \cH_2 \subset \cG$ are two melonic subgraphs, then: 
\begin{enumerate}[(i)]
\item $\cH_1$ and $\cH_2$ are line-disjoint, or one is included into the other.
\item If $\cH_1 \cup \cH_2$ is melonic, then: $\cH_1 \subset \cH_2$ or $\cH_2 \subset \cH_1$.
\end{enumerate} 
Moreover, any $\cH_1 , \ldots , \cH_k  \subset \cG$ melonic are necessarily face-disjoint if their union $\cH_1 \cup \ldots \cup \cH_k$ is also melonic.
\end{proposition}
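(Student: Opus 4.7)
My strategy is to prove (i) as the technical core, and then derive (ii) and the ``moreover'' from it together with the structural results just established: the single external-face color of a non-vacuum melonic subgraph, the face-connectedness and single-external-face property of its rosette (Proposition \ref{face_rosette} and its corollary), Corollary \ref{faces_legs}, and $1$-particle irreducibility (Lemma \ref{1PI}).

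For (i), I will assume that $l \in L(\cH_1) \cap L(\cH_2)$ is a shared line and show that one of $\cH_1, \cH_2$ is contained in the other. Denote by $c_i^*$ the unique external-face color of $\cH_i$. The key observation is that for each color $c \neq c_i^*$, any face of $\cG$ passing through a line of $\cH_i$ must lie entirely in $\cH_i$: otherwise its exterior arc would contribute an external face of color $c \neq c_i^*$, violating the single-color property. Applied at $l$, this forces a rigid coincidence of the colored face-structures of $\cH_1$ and $\cH_2$ in a neighborhood of $l$. I will then propagate this coincidence along internal faces: each internal face of $\cH_1$ forced to lie in $\cH_2$ drags its remaining lines into $L(\cH_1) \cap L(\cH_2)$, and iterating---using the face-connectedness of the rosettes (Proposition \ref{face_rosette}) to transport the argument across tree lines---enlarges the common core until one of $\cH_1 \setminus \cH_2$ or $\cH_2 \setminus \cH_1$ is exhausted.

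For (ii), by (i) I may assume $\cH_1$ and $\cH_2$ are line-disjoint (otherwise they are nested and we are done). Suppose for contradiction that they are also face-connected inside $\cH_1 \cup \cH_2$: some face $f$ of $\cG$ lies entirely in $L(\cH_1) \cup L(\cH_2)$ with lines on both sides, and its arcs in each $\cH_i$ are external faces of $\cH_i$ (since line-disjointness prevents $f$ from being internal to either), so the single-color property forces $c_1^* = c_2^* = \mathrm{color}(f)$. Combining this color match with the single-external-face property of the rosette of the melonic union $\cH_1 \cup \cH_2$ and the bound of Corollary \ref{faces_legs} should produce a contradiction by forcing the external legs of one of the $\cH_i$ to vanish.

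For the ``moreover'' claim, the case $k=2$ follows immediately from (ii): line-disjointness excludes nesting, while the melonic hypothesis on the union (hence face-connectedness of $\cH_1 \cup \cH_2$, i.e.\ a single face-connected component) excludes face-disjointness, so both disjuncts of (ii) fail. For $k \geq 3$ I will induct on $k$, at each stage choosing a pair $\cH_i, \cH_j$ whose union is a face-connected subpiece of $\cH := \cup_\alpha \cH_\alpha$, and using (ii) on this pair to reduce to a previously settled case. The main obstacle is the propagation step in (i), particularly when $c_1^* \neq c_2^*$: the color constraints imposed by $\cH_1$ and by $\cH_2$ involve different colored strands through the shared line $l$, and the cleanest way to force nesting will likely require an inductive peeling of elementary melons from the melopole rosettes of $\cH_1$ and $\cH_2$, chosen so that each peeled melon is shared by both subgraphs.
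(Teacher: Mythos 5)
Your route (external-face colors plus a propagation/``dragging'' argument) is genuinely different from the paper's, but as it stands its central steps are gaps, not technicalities, and the one you flag yourself is the fatal one. The propagation in (i) can only spread along faces that are internal to \emph{both} subgraphs, and your key observation guarantees containment only for faces whose color differs from the external-face color of the relevant $\cH_i$; nothing in the sketch shows that $\cH_1$ (or $\cH_2$) remains face-connected once its internal faces of color $c_i^*$ are excluded from the propagation, so the common core can stall strictly inside $\cH_1\cap\cH_2$. More importantly, your argument never uses that $\cG$ is non-vacuum beyond the single-color property (which holds for any proper subgraph of a vertex-connected graph), whereas the statement is essentially global: the paper's proof obtains the inclusion by contradiction from the fact that otherwise \emph{every} external face of $\cH_1$ and of $\cH_2$ would be internal to $\cH_1\cup\cH_2$, forcing the union, hence $\cG$, to be vacuum. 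A purely local propagation that never meets this global input is not likely to close. The same issue affects your (ii): the final contradiction is only asserted (``should produce a contradiction by forcing the external legs of one of the $\cH_i$ to vanish''), and external legs cannot vanish -- what vanishes in the correct argument is the set of external faces of the union. Finally, the induction for $k\geq 3$ is illegitimate as described: (ii) requires the union of the chosen pair to be melonic, and a face-connected subpiece of a melonic union need not be melonic; Figure \ref{overlap} of the paper exhibits exactly a pair of melonic subgraphs face-connected in a non-melonic union, where (ii) gives no information.

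For comparison, the paper treats (i), (ii) and arbitrary $k$ by one uniform argument. Assuming neither inclusion holds, pick any external face $f_1$ of $\cH_1$, a line $l_1\in f_1$ and a spanning tree $\cT_1\subset\cH_1$ with $l_1\notin\cT_1$ (possible by Lemma \ref{1PI}), so that $f_1$ is the unique external face of the rosette $\cH_1/\cT_1$. Then one shows that $(\cH_1\cup\cH_2)/\cT_1$ is face-connected: when the union is melonic this is Proposition \ref{face_rosette} applied to $\cH_1\cup\cH_2$ (with $\cT_1$ viewed as a forest), and when $\cH_1\cap\cH_2\neq\emptyset$ one notes that through every shared line run at least $d-1$ internal faces of each subgraph, hence at least one face of $F(\cH_1)\cap F(\cH_2)$, which glues the pieces $\cH_1/\cT_1$ and $\cH_2/(\cT_1\cap\cH_2)$ together. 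Since $\cH_2\setminus\cH_1\neq\emptyset$, this face-connectedness forces $f_1$ to be internal to the union; running over all external faces of $\cH_1$ and of $\cH_2$ gives $F_{ext}(\cH_1\cup\cH_2)=\emptyset$, i.e.\ a vacuum union, contradicting that $\cG$ is non-vacuum. The same argument verbatim handles $k$ line-disjoint melonic subgraphs face-connected in a melonic union. If you wish to salvage your color-based approach, you must supply a global step of this kind; the local color constraints alone do not suffice.
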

\begin{proof}
Let us first focus on (i) and (ii). To this effect, we assume that: (i) $\cH_1 \cap \cH_2 \neq \emptyset$ (and in particular $\cH_1$ and $\cH_2$ are face-connected in their union); (ii) $\cH_1$ and $\cH_2$ are face-connected in $\cH_1 \cup \cH_2$, and the latter is also melonic. We need to prove that in these two situations, $\cH_1 \subset \cH_2$ or $\cH_2 \subset \cH_1$. In order to achieve this, we suppose that both $\widetilde{\cH}_1 \equiv \cH_1 \setminus ( \cH_1 \cap \cH_2 )$ and $\widetilde{\cH}_2 \equiv \cH_2 \setminus ( \cH_1 \cap \cH_2 )$ are non-empty, and look for a contradiction. 

Let $f_1$ be an arbitrary external face of $\cH_1$. Choose a line $l_1 \in f_1$, and a spanning tree $\cT_1$ in $\cH_1$, such that $l_1 \notin \cT_1$. Then the unique face of $\cH_1 / \cT_1$ is $f_1$. We want to argue that $( \cH_1 \cup \cH_2 ) / \cT_1 = ( \cH_1 / \cT_1 ) \cup \widetilde{\cH}_2$ is face-connected. In situation (ii), this is guaranteed by Proposition \ref{face_rosette} (applied to $\cH_1 \cup \cH_2$). In situation (i) on the other hand, one can decompose it as a disjoint union of subgraphs as follows:
\beq
( \cH_1 \cup \cH_2 ) / \cT_1 = \widetilde{\cH}_1 / (\cT_1 \cap \widetilde{\cH}_1) \sqcup (\cH_1 \cap \cH_2) / (\cT_1 \cap \cH_2) \sqcup \widetilde{\cH}_2 \,. 
\eeq 
The key thing to remark is that through each line of $\cH_1 \cap \cH_2$ run at least $d-1$ faces from $F(\cH_1)$, and at least $d-1$ from $F(\cH_2)$. Since at most a total of $d$ faces run through each line (and $d \geq 3$), we conclude that each line of $\cH_1 \cap \cH_2$ appears in at least one face of $F(\cH_1) \cap F(\cH_2)$. Therefore $(\cH_1 \cap \cH_2) / (\cT_1 \cap \cH_2)$ has at least one face, and is in particular non-empty. We also know that $\widetilde{\cH}_1 / (\cT_1 \cap \widetilde{\cH}_1) \sqcup (\cH_1 \cap \cH_2) / (\cT_1 \cap \cH_2) = \cH_1 / \cT_1$ is face-connected, as well as $(\cH_1 \cap \cH_2) / (\cT_1 \cap \cH_2) \sqcup \widetilde{\cH}_2 = \cH_2 / ( \cT_1 \cap \cH_2 )$. Therefore $( \cH_1 \cup \cH_2 ) / \cT_1$ is itself face-connected.
Finally, since $\widetilde{\cH}_2 \neq \emptyset$, this is only possible if an external face of $\cH_1 / \cT_1$ is internal in $( \cH_1 \cup \cH_2 ) / \cT_1$. We conclude that $f_1$ is internal to $( \cH_1 \cup \cH_2 ) / \cT_1$, hence to $\cH_1 \cup \cH_2$.

We have just shown that all the external faces of $\cH_1$ are internal to $\cH_1 \cup \cH_2$. Likewise, all the external faces of $\cH_2$ are internal to $\cH_1 \cup \cH_2$. Therefore $F_{ext} ( \cH_1 \cup \cH_2 ) = \emptyset$, which implies that $\cH_1 \cup \cH_2 = \cG$ is vacuum, and contradicts our hypotheses.

\

We can proceed in a similar way than for (ii) to prove the last statement. Assume $\cH_1 , \ldots , \cH_k$ to be melonic, line-disjoint, and face-connected in their union. The connectedness of $\cH_1 \cup \dots \cup \cH_k$ and any of its reduction by a forest implies that all the external faces of $\cH_i$ are internal in $\cH_1 \cup \dots \cup \cH_k$, for any $1 \leq i \leq k$. Therefore the latter is vacuum, and this again contradicts the fact that $\cG$ is not.  
\end{proof} 

\

\noindent{\bf{Example.}} Figure \ref{overlap} represents two non-trivial melonic graphs $\cH_1$ and $\cH_2$ which are line-disjoint but face-connected in their union. Accordingly, their union is not melonic, as can be checked explicitly. 


\section{The $\SU(2)$ model in three dimensions}
\label{su2_model}

In this section, the $\vphi^6$ model based on the group $\SU(2)$ of type A in Table \ref{theories} is precisely defined. A detailed proof of its renormalizability will follow in the next two sections. 

\subsection{Model, regularization and counter-terms}

From now on, $G = \SU(2)$ and $K_\alpha$ is the corresponding heat kernel at time $\alpha$, which explicitly writes
\beq
K_{\alpha} = \sum_{j \in \mathbb{N}/2} (2 j + 1)\e^{- \alpha j (j+1) } \chi_{j}
\eeq 
in terms of the characters $\chi_j$. We can introduce the cut-off covariance $C^{\Lambda}$
\beq \label{paracut}
C^{\Lambda}(g_1, g_2 , g_3 ; g_1' , g_2' , g_3' )
\equiv \int_{\Lambda}^{+ \infty} \extd \alpha \, \e^{- \alpha m^2} \int \extd h \prod_{\ell = 1}^{3} K_{\alpha} (g_\ell h g_\ell'^{\inv})\,,
\eeq
defined for any $\Lambda > 0$. This allows to define a UV regularized theory, with partition function
\beq
\cZ_{\Lambda} = \int \extd \mu_{C^\Lambda} (\vphi , \vphib) \, \e^{- S_{\Lambda} (\vphi , \vphib )}\,.
\eeq
According to our analysis of the Abelian divergence degree, $S_{\Lambda}$ can contain only up to $\vphi^6$ $d$-bubbles. This gives exactly $5$ possible patterns of contractions (up to color permutations): one $\vphi^2$ interaction, one $\vphi^{4}$ interaction, and three $\vphi^{6}$ interactions. They are represented in Figure \ref{int}. 

Among the three types of interactions of order $6$, only the first two can constitute melonic subgraphs. Indeed, an interaction of the type $(6,3)$ cannot be part of a melonic subgraph, therefore cannot give any contribution to the renormalization of coupling constants. Reciprocally, the contraction of a melonic subgraph in a graph built from vertices of the type $(2)$, $(4)$, $(6,1)$ and $(6,2)$ cannot create an effective $(6,3)$-vertex. This is due to the fact that a $(6,3)$-bubble  is dual to the triangulation of a torus, while the other four interactions represent spheres, and the topology of $d$-bubbles is conserved under contraction of melonic subgraphs \cite{crystallization}. 

Therefore, we can and we shall exclude interactions of the type $(6,3)$ from $S_{\Lambda}$ from now on. This is a very nice feature of the model, for essentially two reasons. First, from a discrete geometric perspective, $(6,3)$ interactions would introduce topological singularities that would be difficult to interpret in a quantum gravity context, so it is good that they are not needed for renormalization. Second, contrary to the other interactions, they are not positive and could therefore induce non-perturbative quantum instabilities.

\begin{figure}[h]
\begin{center}
\includegraphics[scale=0.5]{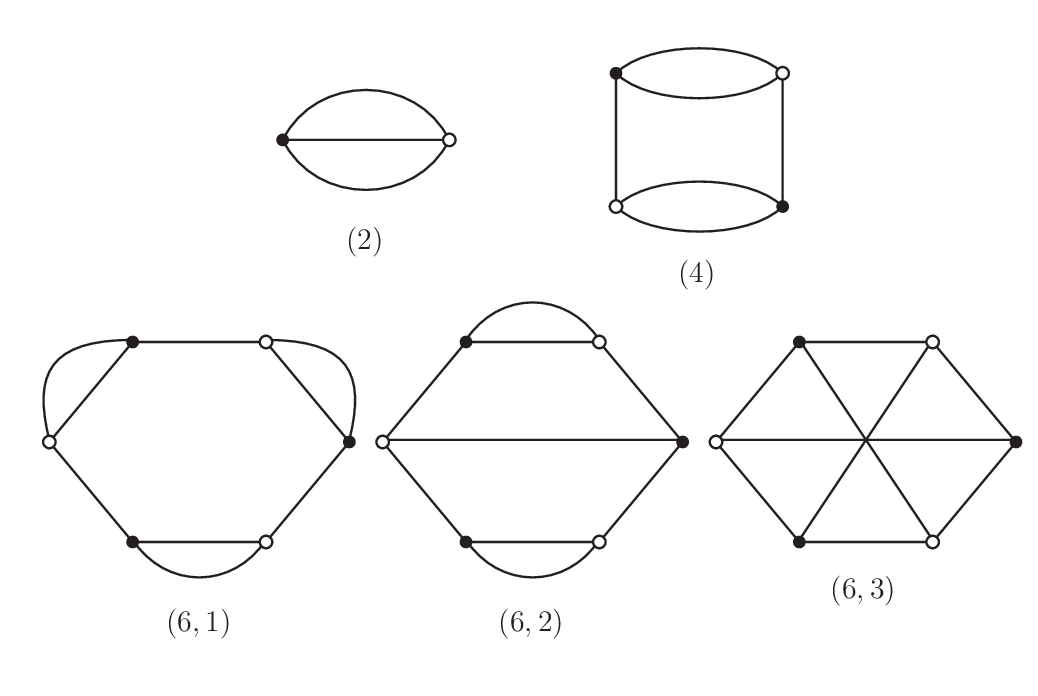}
\caption{Possible $d$-bubble interactions.}
\label{int}
\end{center}
\end{figure}

\
The $2$-point interaction is identical to a mass term, and will therefore be used to implement the mass renormalization counter-terms. Since the model will also generate quadratically divergent $2$-point functions, we also need to include wave function counter-terms in $S_\Lambda$. Finally, we require color permutation invariance of the $4$- and $6$-point interactions. All in all, this gives
\beq\label{drawing_sym}
S_\Lambda  = \frac{t_4^{\Lambda}}{2} S_{4} + \frac{t_{6,1}^{\Lambda}}{3} S_{6,1} + t_{6,2}^{\Lambda} S_{6,2} + CT_{m}^{\Lambda} S_{m} + CT_{\vphi}^{\Lambda} S_{\vphi}\,,
\eeq
where:
\bes\label{color_sym}
S_{4} (\vphi , \vphib) &=& \int [\extd g ]^6 \, \vphi(g_1 , g_2 , g_3 ) \vphib(g_1 , g_2 , g_4 ) \vphi(g_5 , g_6 , g_3 ) \vphib(g_5 , g_6 , g_4 ) + \; {\rm color} \; {\rm permutations}  \,,  \\
S_{6,1} (\vphi , \vphib) &=&  \int [\extd g ]^9 \, \vphi(g_1 , g_2 , g_7 ) \vphib(g_1 , g_2 , g_9 ) \vphi(g_3 , g_4 , g_9 ) \vphib(g_3 , g_4 , g_8 ) \vphi(g_5 , g_6 , g_8 ) \vphib(g_5 , g_6 , g_7 ) \\
&& + \; {\rm color} \; {\rm permutations}  \,, \nn \\
S_{6, 2} (\vphi , \vphib) &=& \int [\extd g ]^9 \, \vphi(g_1 , g_2 , g_3 ) \vphib(g_1 , g_2 , g_4 ) \vphi(g_8 , g_9 , g_4 ) \vphib(g_7 , g_9 , g_3 ) \vphi(g_7 , g_5 , g_6 ) \vphib(g_8 , g_5 , g_6 ) \\
&& + \; {\rm color} \; {\rm permutations} \,, \nn  \\
S_{m} (\vphi , \vphib) &=& \int [\extd g ]^3 \, \vphi(g_1 , g_2 , g_3 ) \vphib(g_1 , g_2 , g_3 )\,,\\
S_{\vphi} (\vphi , \vphib) &=& \int [\extd g ]^3 \, \vphi(g_1 , g_2 , g_3 ) \left( - \sum_{l = 1}^{3} \Delta_\ell \right) \vphib(g_1 , g_2 , g_3 )\,.
\ees

Two types of symmetries have to be kept in mind. 
In (\ref{color_sym}), we just averaged over color permutations. This gives a priori $6$ terms for each bubble type, but some of them are identical. It turns out that for each type of interaction, we have exactly $3$ distinct bubbles. 
Similarly, $S_\vphi$ is a sum of three term, which we can consider as new bubbles. 
With the mass term, we therefore have a total number of $13$ different bubbles in the theory. From now on, $\cB$ has to be understood in this extended sense. We could as well work with independent couplings for each bubble $b \in \cB$, but we decide to consider the symmetric model only, which seems to us the most relevant situation. However, it is convenient to work with notations adapted to the more general situations, because this allows to write most of the equations in a more condensed fashion. In the following, we will work with coupling constants $t_b^\Lambda$ for any $b \in \cB$, which has to be understood as $t_{4}^{\Lambda}$, $t_{6,1}^{\Lambda}$, $t_{6,2}^{\Lambda}$, $CT_m^{\Lambda}$ or $CT_\vphi^{\Lambda}$ depending on the nature of $b$.

\

In (\ref{drawing_sym}), we divided each coupling constant by a certain number of permutations of labels on the external legs of a bubble associated to this coupling. More precisely, it is the order of the subgroup of the permutations of these labels leaving the labeled colored graph invariant. Note that a first look at $(6,2)$ interactions suggests an order $2$ symmetry, but it is incompatible with any coloring. The role of such rescalings of the coupling constants is, as usual, to make the symmetry factors appearing in the perturbative expansions more transparent. The symmetry factor $s(\cG)$ associated to a Feynman graph $\cG$ becomes the number of its automorphisms. All these conventions will be useful when discussing in detail how divergences can be absorbed into new effective coupling constants.

\

Finally, the reader might wonder whether it is appropriate to include the $2$-point function counter-terms in the interaction part of the action, rather than associating flowing parameters to the covariance itself. This question is particularly pressing for wave-function counter-terms, since they break the tensorial invariance of the interaction action. One might worry that the degenerate nature of the covariance could prevent a Laplacian interaction with no projector from being reabsorbed in a modification of the wave-function parameter of the covariance. However, it is not difficult to understand that the situation is identical to that of a non-degenerate covariance. At fixed cut-off, modifying the covariance is not exactly the same as adding $2$-point function counter-terms in the action, but the two prescriptions coincide in the $\Lambda \to 0$ limit. Thus, it is perfectly safe to work in the second setting. Moreover, this has the main advantage of being compatible with a fixed slicing of the covariance according to scales, which is the central technical tool of the work presented in this article. 

\subsection{List of divergent subgraphs}

From the previous sections, and as we will confirm later on, the Abelian divergence degree of a subgraph $\cH$ will allow us to classify the divergences. When $\cH$ does not contain any wave-function counter-terms, one has\footnote{One also assumes $F(\cH) \geq 1$, as in the previous section.}:
\beq\label{def_omega1}
\omega (\cH) = 3 - \frac{N}{2} - 2 n_2 - n_4 + 3 \rho(\cH / \cT) \,.
\eeq
We will moreover see in the next section that wave-function counter-terms are neutral with respect to power-counting arguments. We can therefore extend the definition (\ref{def_omega1}) of $\omega$ to arbitrary subgraphs if $n_2$ is understood as the number of $2$-valent bubbles 
\textit{which are not of the wave-function counter-term type}, and the contraction of a tree is also understood in a general sense: $\cH / \cT$ is the subgraph obtained by \textit{first collapsing all chains of wave-function counter-terms}, and then contracting a tree $\cT$ in the collapsed graph. Alternatively, $\omega$ takes the generalized form:
\beq
\omega(\cH) = -2 ( L - W ) + 3 ( F - R ) \,,
\eeq
where $W$ is the number of wave-function counter-terms in $\cH$. \textit{This formula holds also when $F(\cH) = 0$}.

\
   
Let us focus on non-vacuum connected subgraphs with $F \geq 1$, which are the physically relevant ones. In this case $\rho = 0$ for melonic subgraphs and $\rho \leq -1$ otherwise. Therefore 
$$\omega (\cH) \leq - \frac{N}{2}$$
if $\cH$ is not melonic. As a result, divergences are entirely due to melonic subgraphs. They are in particular tracial, which means their Abelian power-counting is optimal. We therefore obtain an exact classification of divergent subgraphs, provided in table \ref{div}. It tells us that $6$-point functions have logarithmic divergences, $4$-point functions linear divergences as well as possible logarithmic ones, that will have to be absorbed in the constants $t_{4}^{\Lambda}$, $t_{6,1}^{\Lambda}$ and $t_{6,2}^{\Lambda}$. The full $2$-point function will be quadratically divergent, generating the constants $CT_{m}^{\Lambda}$ and $CT_{\vphi}^{\Lambda}$.

\begin{table}[h]
\centering
\begin{tabular}{| l | c | c | c || r |}
    \hline
    $N$ & $n_2$ & $n_4$ & $\rho$ & $\omega$  \\ \hline\hline
 6 & 0 & 0 & 0 & 0 \\ \hline
 4 & 0 & 0 & 0 & 1 \\ 
 4 & 0 & 1 & 0 & 0 \\ \hline
 2 & 0 & 0 & 0 & 2 \\ 
 2 & 0 & 1 & 0 & 1 \\
 2 & 0 & 2 & 0 & 0 \\
 2 & 1 & 0 & 0 & 0 \\ 
    \hline
  \end{tabular}
\caption{Classification of non-vacuum divergent graphs for $d=D=3$. All of them are melonic.}
\label{div}
\end{table}

{\bf Remark.}
There are a lot more cases to consider for vacuum divergences, including non-melonic contributions. However, they are irrelevant 
to perturbative renormalization.

\

In light of Corollary \ref{faces_legs}, we also notice that $2$-point divergent subgraphs, hence all degree $2$ subgraphs, have a single external face. This is a useful point to keep in mind as far as wave-function renormalization is concerned. As for $4$- and $6$-point divergent subgraphs, they have at most $2$ and $3$ external faces respectively. It is also not difficult to find examples saturating these two bounds, as shown in Figures \ref{div_4} and \ref{div_6}. 

\begin{figure}[h]
  \centering
  \subfloat[$\omega = 1$]{\label{div_4}\includegraphics[scale=0.6]{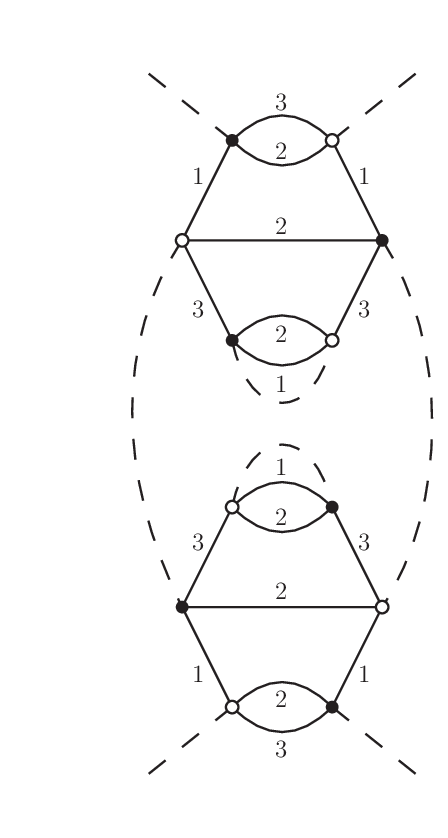}}   	       
  \subfloat[$\omega = 0$]{\label{div_6}\includegraphics[scale=0.6]{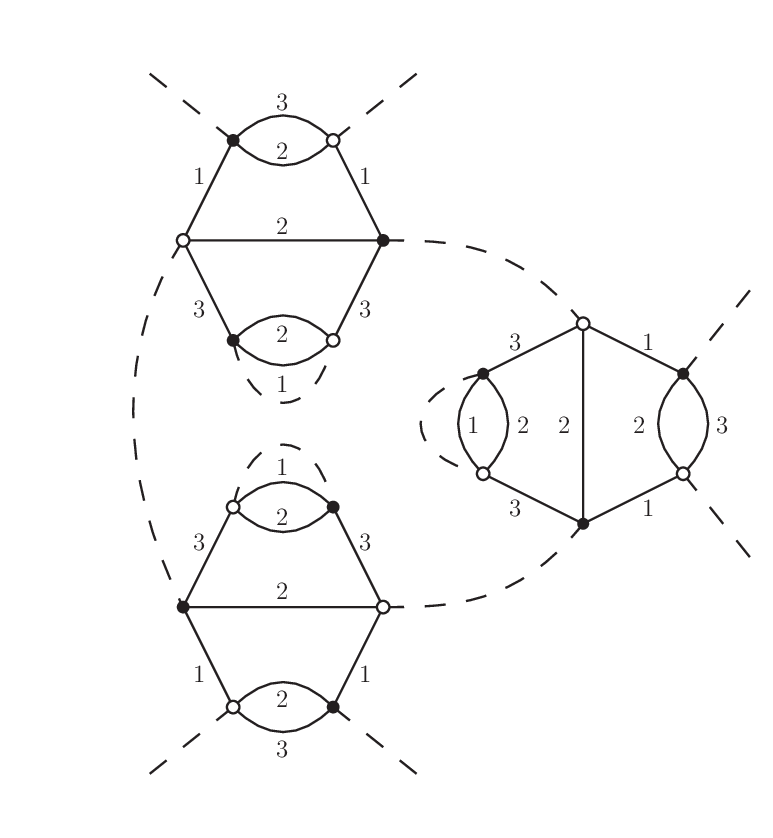}}
  \caption{Divergent subgraphs with respectively $2$ and $3$ external faces.}
\end{figure}

\section{Multi-scale expansion}\label{sec:multiscale}
\label{multiscale}

In this section, we use a multi-scale expansion \cite{Riv} to prove the claimed results concerning the applicability of the Abelian power-counting in the $\SU(2)$ case. We then show that divergent high subgraphs generate local counter-terms for the $2$-, $4$-, and $6$-point functions, supplemented by finite remainders.

\subsection{Multi-scale expansion}

The multi-scale expansion relies on a slicing of the propagator in the Schwinger parameter $\alpha$, according to a geometric progression. We fix an arbitrary constant $M > 1$ and for any integer $i \geq 0$, we define the slice of covariance $C_i$ as:
\bes
C_{0}(g_1, g_2 , g_3 ; g_1' , g_2' , g_3' ) &=& \int_{1}^{+ \infty} \extd \alpha \, \e^{- \alpha m^2} \int \extd h \prod_{\ell = 1}^{3} K_{\alpha} (g_\ell h g_\ell'^{\inv})\,, \\
\forall i \geq 1\,, \qquad C_{i}(g_1, g_2 , g_3 ; g_1' , g_2' , g_3' ) &=& \int_{M^{ - 2 i}}^{M^{ - 2(i-1)}} \extd \alpha \, \e^{- \alpha m^2} \int \extd h \prod_{\ell = 1}^{3} K_{\alpha} (g_\ell h
g_\ell'^{\inv})\,.
\ees  
In order to be compatible with the slicing, we choose a UV regulator of the form $\Lambda = M^{-2 \rho}$. In this context, we will use the simpler notation $C^{\rho}$ for $C^{M^{-2 \rho}}$ (see \eqref{paracut}):
\beq  \label{decomposi}
C^{\rho} = \sum_{0 \leq i \leq \rho} C_i \,.
\eeq  
We can then decompose the amplitudes themselves, according to scale attributions $\mu = \{ i_e \}$ where $i_e$ are integers associated to each line, determining the slice attribution of its propagator. The full amplitude $\cA_\cG$ of $\cG$ is then reconstructed from the sliced amplitudes $\cA_{\cG, \mu}$ by simply summing over the scale attribution $\mu$:
\beq
\cA_\cG = \sum_{\mu} \cA_{\cG , \mu}\,.
\eeq
The idea of the multi-scale analysis is then to bound sliced propagators, and deduce an optimized bound for each $\cA_{\cG , \mu}$ separately. To this effect, we first need to capture the peakedness properties of the propagators into Gaussian bounds.
They can be deduced from a general fact about heat kernels on curved manifolds: at small times, they look just the same as their flat counterparts, and can therefore be bounded by suitable Gaussian functions. In the case of $\SU(2)$, let us denote $\vert X \vert$ the norm of a Lie algebra element $X \in \su(2)$, and $|g|$ the geodesic distance between a Lie group element $g \in \SU(2)$ and the identity $\one$. We can prove the following bounds on $K_\alpha$ and its derivatives.
\begin{lemma}\label{heat}
There exists a set of constants $\delta > 0$ and $K_n > 0$ , such that for any $n \in \mathbb{N}$ the following holds:
\beq
\forall \alpha \in \left] 0 , 1 \right] , \quad \forall g \in \SU(2), \quad \forall X \in \su(2),\, |X| = 1, \qquad
\vert ( \cL_X )^n K_\alpha (g) \vert \leq K_n \alpha^{-\frac{3 + n}{2}} \e^{- \delta \frac{|g|^2}{\alpha}} 
\eeq
\end{lemma}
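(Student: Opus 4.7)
The proof proceeds by comparing $K_\alpha$ to the Euclidean heat kernel on $\su(2) \cong \R^3$, exploiting that $\SU(2) \cong S^3$ is compact, simply connected, and has injectivity radius $\pi$. First I would establish the $n=0$ case by splitting on the size of $|g|$. Working in exponential coordinates $g = \exp(X)$ near the identity, the Hadamard parametrix provides, for $|g| \le \pi/2$, the asymptotic formula
\[
K_\alpha(g) = \frac{J(g)^{-1/2}}{(4\pi\alpha)^{3/2}} \, \e^{-|g|^2/(4\alpha)} \bigl( 1 + O(\alpha) \bigr),
\]
where $J(g)$ is the smooth Jacobian of $\exp$, bounded above and bounded away from $0$ on $|g| \le \pi/2$. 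This yields the bound for any $\delta < 1/4$ on this region. For $\pi/2 < |g| \le \pi$, the classical Gaussian upper bound for heat kernels on compact Riemannian manifolds (Li--Yau type, or a direct computation on $S^3$) gives $K_\alpha(g) \le C \alpha^{-3/2} \e^{-|g|^2/(C'\alpha)}$ uniformly for $\alpha \le 1$, from which the claim follows upon taking $\delta$ small enough.

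The key tool for the derivatives $n \ge 1$ is that the left-invariant vector fields $\cL_X$ commute with the Laplace--Beltrami operator, hence with the heat semigroup. The cleanest route is to differentiate the Hadamard expansion directly: each application of $\cL_X$ produces either a factor of order $\alpha^{-1/2}$ (from differentiating the prefactor or the Jacobian) or of the form $|g|/\alpha$ (from differentiating the Gaussian exponent). After $n$ derivatives one obtains a finite sum of terms of the schematic form
\[
\alpha^{-(3+n)/2} \, P_n\!\left( |g|/\sqrt{\alpha} \right) \e^{-|g|^2/(4\alpha)},
\]
where $P_n$ is a polynomial of degree at most $n$. On the complementary region $\pi/2 < |g| \le \pi$ one instead uses the convolution identity $K_\alpha = K_{\alpha/2} \cdot K_{\alpha/2}$ on $\SU(2)$, the fact that left-invariant differentiation commutes with right convolution, and the $n=0$ Gaussian bound applied to each factor, combined with standard parabolic regularity to control $\cL_X^n K_{\alpha/2}$ on a neighborhood of the identity.

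The main obstacle, in implementing this cleanly, is absorbing the polynomial factors $P_n$ into a single Gaussian with $\delta$ uniform in $n$. This is handled by fixing any $\delta < 1/4$ once and for all (e.g.\ $\delta = 1/8$) and using the elementary inequality $|P_n(u)| \e^{-u^2/4} \le K_n \e^{-\delta u^2}$, valid on $\R_+$ for a suitable $n$-dependent constant $K_n$; all the $n$-dependence is thereby absorbed into the constants. A minor subtlety is the behaviour at the cut locus $|g| = \pi$, which consists of the single antipode $-\one$ on $\SU(2) \cong S^3$; since $K_\alpha$ is smooth on $\SU(2)$ for $\alpha > 0$ and both sides of the target inequality are continuous in $g$, the estimate extends there by continuity.
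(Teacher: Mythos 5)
Your route is genuinely different from the paper's. The paper uses no general Riemannian heat-kernel machinery: it rewrites $K_\alpha$ \emph{exactly} via the Jacobi function $\theta_3$ (Poisson summation), obtaining $K_\alpha(g)=\alpha^{-3/2}\e^{-\psi(g)^2/\alpha}\,\frac{\sqrt{4\pi}\,\e^{\alpha/4}\psi(g)}{\sin\psi(g)}\,F_\alpha(\psi(g))$ with $F_\alpha$ an explicit series whose derivatives are bounded uniformly on $\{\psi\le 3\pi/4\}$ by elementary integral and error-function estimates; near the antipode it sidesteps the cut-locus cancellation entirely through the hitting-time identity $K_\alpha(\one,g)=\int_0^\alpha \extd\alpha'\int_{H_0}\extd g'\,K^{N,D}_{\alpha'}(\one,g')\,K_{\alpha-\alpha'}(g',g)$, in which the intermediate point is pinned to the equator, so the Dirichlet factor is exponentially small and all Lie derivatives land on a kernel whose argument has length at most $\pi/2$, i.e.\ inside the good region. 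Your inner-region treatment (parametrix, polynomial-times-Gaussian bookkeeping, absorbing $P_n$ into a smaller $\delta$) is a viable alternative, with one caveat: ``differentiating the Hadamard expansion'' needs the remainder to be controlled in $C^n$ with a Gaussian (or at least $\e^{-c/\alpha}$ off-diagonal) weight; the naive $O(\alpha)$ remainder is only polynomially small in $\alpha$ and cannot produce the factor $\e^{-\delta|g|^2/\alpha}$ once $|g|$ is bounded away from $0$. Such weighted $C^n$ remainder estimates exist, but they are exactly the content being proved, so they should be invoked or proved explicitly rather than taken as part of ``the parametrix provides''.

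The step that does not hold together as written is the derivative bound on $\pi/2<|g|\le\pi$. In $\cL_X^n K_\alpha=K_{\alpha/2}*(\cL_X^n K_{\alpha/2})$ the differentiated factor is evaluated at $k^{-1}g$, which sweeps the whole group as $k$ varies; in particular for $k$ near $\one$, where $K_{\alpha/2}(k)\sim\alpha^{-3/2}$ is large rather than small, the argument $k^{-1}g$ sits near the antipode --- precisely where you have no derivative bound yet. Hence controlling $\cL_X^n K_{\alpha/2}$ ``on a neighborhood of the identity'' is insufficient, and ``the $n=0$ Gaussian bound applied to each factor'' is not meaningful for the differentiated factor: the argument is circular exactly in the problematic configuration. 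Two standard repairs: (i) drop the convolution and apply interior parabolic (smoothing) estimates directly at $(\alpha,g)$ on a cylinder of spatial radius $\sim\sqrt{\alpha}/2$, on which the $n=0$ Gaussian bound already makes $K$ of size $\alpha^{-3/2}\e^{-c/\alpha}$; this yields $|\cL_X^nK_\alpha(g)|\le C_n\,\alpha^{-(3+n)/2}\e^{-c/\alpha}$ for $|g|\ge\pi/2$, which suffices after shrinking $\delta$; or (ii) use the paper's hitting-time/Dirichlet decomposition, whose whole point is that the intermediate point lies on the equator, so the undifferentiated factor is exponentially small while the derivatives act only on kernels evaluated at distance at most $\pi/2$ from the identity.
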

\begin{proof}
See the Appendix.
\end{proof}

As a consequence, the divergences associated to the propagators and their derivatives can be captured in the following bounds.
 \begin{proposition}
 There exist constants $K > 0$ and $\delta > 0$, such that for all $i \in \mathbb{N}$:
 \beq\label{propa_bound} 
 C_i (g_1, g_2 , g_3 ; g_1' , g_2' , g_3' ) \leq K M^{7 i } \int \extd h \,
 \e^{- \delta M^{i} \sum_{\ell =1}^{3} |g_\ell h g_\ell'^{\inv}| }\,\,.
 \eeq
 Moreover, for any integer $k \geq 1$, there exists a constant $K_k$, such that for any $i \in \mathbb{N}$, any choices of colors $\ell_p$ and Lie algebra elements $X_p \in \su(2)$ of unit norms ($1 \leq p \leq k$):
 \beq\label{deriv_bound}
\left( \prod_{p = 1}^{k} \cL_{X_{p} , g_{\ell_p}} \right) C_i (g_1, g_2 , g_3 ; g_1' , g_2' , g_3' ) \leq K M^{ (7 + k) i } \int \extd h \,
 \e^{- \delta M^{i} \sum_{\ell =1}^{3} |g_\ell h g_\ell'^{\inv}| }\,,
\eeq
where $\cL_{X_p , g_{\ell_p}}$ is the Lie derivative with respect to the variable $g_{\ell_p}$ in direction $X_p$.\footnote{We define the Lie derivative of a function $f$ as:
\beq
\cL_X f(g) \equiv \frac{\extd}{\extd t} f(g \e^{t X}) |_{t = 0} \,.
\eeq}
\end{proposition}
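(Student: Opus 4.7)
The plan is to combine the pointwise heat-kernel bound of Lemma \ref{heat} with the shortness of the $i$-th slice in $\alpha$. Two elementary ingredients suffice: first, the quadratic-to-linear conversion
\[
\e^{-a x^2} \,\leq\, \e^{b^2/(4a)}\, \e^{-b x}, \qquad x,a,b > 0,
\]
obtained by completing the square; second, the estimate $M^{2i}\alpha \leq M^2$ on the $i$-th slice $\alpha \in [M^{-2i}, M^{-2(i-1)}]$, which keeps the constant $b^2/(4a)$ bounded.

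For the first inequality, I would insert Lemma \ref{heat} with $n=0$ into the three heat-kernel factors in the definition of $C_i$, producing
\[
\prod_{\ell=1}^3 K_\alpha(g_\ell h g_\ell'^{\inv}) \,\leq\, K_0^3\, \alpha^{-9/2}\, \e^{-\delta \sum_\ell |g_\ell h g_\ell'^{\inv}|^2/\alpha}\,.
\]
Applying the completing-the-square inequality to each factor of the Gaussian with $a = \delta/\alpha$ and $b = \delta' M^i$ (and using $M^{2i}\alpha \leq M^2$ to bound the multiplicative constant) turns the tail into $\e^{-\delta' M^i \sum_\ell |g_\ell h g_\ell'^{\inv}|}$. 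Bounding $\e^{-\alpha m^2} \leq 1$ and noting that
\[
\int_{M^{-2i}}^{M^{-2(i-1)}} \alpha^{-9/2}\, \extd\alpha \,=\, \tfrac{2}{7}\bigl( M^{7i} - M^{7(i-1)} \bigr) \,=\, O(M^{7i})\,,
\]
we obtain the advertised $M^{7i}$. The exceptional slice $i = 0$ is handled separately by a direct argument: $K_\alpha$ is uniformly bounded on $\SU(2)$ for $\alpha \geq 1$, the factor $\e^{-\alpha m^2}$ ensures convergence of the $\alpha$-integral, and $\e^{-\delta|\cdot|}$ is bounded below on the compact group, so any finite uniform bound can be converted to the stated form.

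For the derivative bound, the key point is that a left-translation Lie derivative with respect to $g_\ell$ acting on $K_\alpha(g_\ell h g_\ell'^{\inv})$ is itself a Lie derivative of $K_\alpha$ in a direction of the same norm. Explicitly,
\[
g_\ell \e^{tX} h g_\ell'^{\inv} \,=\, (g_\ell h g_\ell'^{\inv})\cdot \e^{t\, \mathrm{Ad}(g_\ell' h^{\inv})X}\,,
\]
which yields $\cL_{X,g_\ell} K_\alpha(g_\ell h g_\ell'^{\inv}) = \bigl( \cL_{\mathrm{Ad}(g_\ell' h^{\inv})X} K_\alpha \bigr) (g_\ell h g_\ell'^{\inv})$, with $|\mathrm{Ad}(g_\ell' h^{\inv})X| = |X| = 1$ by bi-invariance of the Killing norm on $\su(2)$. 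A product of $k$ Lie derivatives distributed as $k_1 + k_2 + k_3 = k$ among the three arguments therefore produces a product of higher Lie derivatives, each controlled by $K_{k_\ell}\alpha^{-(3+k_\ell)/2}$ times the same Gaussian, by Lemma \ref{heat}. The $\alpha$-exponent becomes $-(9+k)/2$, and the same slice estimate now delivers $M^{(7+k)i}$.

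The main point to be careful about is that the $h$- and $g_\ell'$-dependent rotation of the direction vector under $\mathrm{Ad}$ produces no uncontrolled prefactor: this is precisely what bi-invariance of $|\cdot|$ on $\su(2)$ guarantees uniformly in $h$, $g_\ell'$. Aside from this and the separate treatment of $i=0$, the proposition boils down to an optimization of $\alpha^{-(9+k)/2}\, \e^{-\delta|\cdot|^2/\alpha}$ on a narrow dyadic slice, with the familiar $M^{(7+k)i}$ as output.
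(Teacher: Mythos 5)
Your proof is correct and follows essentially the same route as the paper: insert the heat-kernel bound of Lemma \ref{heat} into each of the three factors, bound the $\alpha$-integral on the dyadic slice to produce $M^{(7+k)i}$, convert the Gaussian decay into the linear decay $\e^{-\delta M^i|\cdot|}$, and treat $i=0$ by compactness. Your explicit $\mathrm{Ad}(g_\ell' h^{\inv})$-twisting argument for the Lie derivatives is a welcome spelling-out of the step the paper dispatches with ``and similarly for Lie derivatives of $C_i$'', but it is the same argument in substance.
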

\begin{proof}
For $i \geq 1$, the previous lemma immediately shows that:
\bes
C_i (g_1, g_2 , g_3 ; g_1' , g_2' , g_3' ) &\leq& K_1 \int_{M^{- 2 i}}^{M^{- 2 (i- 1)}} \int \extd h \, \frac{\e^{- \frac{\delta_1}{\alpha} \sum_{\ell =1}^{3} |g_\ell h g_\ell'^{\inv}|^2 }}{\alpha^{9/2}} \\
&\leq& K_1 M^{- 2 (i- 1)} (M^{2 i})^{9/2} \int \extd h \, \e^{- \delta_1 M^{- 2 i} \sum_{\ell =1}^{3} |g_\ell h g_\ell'^{\inv}|^2 } \\
&\leq & K M^{7 i } \int \extd h \, \e^{- \delta M^{i} \sum_{\ell =1}^{3} |g_\ell h g_\ell'^{\inv}| }\,,
\ees
for some strictly positive constants $K_1$, $\delta_1$, $K$ and $\delta$. And similarly for Lie derivatives of $C_i$.

\
When $i = 0$, equations (\ref{nice_form}), (\ref{bound_F}) and (\ref{ext_bound}), together with the fact that $m \neq 0$ allow to bound the integrand of $C_0$ by an integrable function of $\alpha \in [ 1 , + \infty [$. $C_0$ is therefore bounded from above by a constant, and due to the compact nature of $\SU(2)$ we can immediately deduce a bound of the form
\beq   
C_0 (g_1, g_2 , g_3 ; g_1' , g_2' , g_3' ) \leq K \int \extd h \,
 \e^{- \delta \sum_{\ell =1}^{3} |g_\ell h g_\ell'^{\inv}| }\,.
\eeq
Again, the same idea allows to prove a similar bound on the Lie derivatives of $C_0$, which concludes the proof.
\end{proof}

Before stating the multi-scale power-counting theorem, we need an additional technical tool: the Gallavotti-Nicol\`{o} tree. It is the abstract tree encoding the inclusion order of \textit{high subgraphs} of a connected graph  $\cG$.
\begin{definition}
Let $\cG$ be a connected graph, with scale attribution $\mu$.
\begin{enumerate}[(i)]
\item Given a subgraph $\cH \in \cG$, one defines internal and external scales:
\beq
i_{\cH}(\mu) = \inf_{e \in L(\cH)} i_e (\mu)\,, \qquad e_{\cH}(\mu) = \sup_{e \in N_{ext}(\cH)} i_e (\mu)\,,
\eeq
where $N_{ext}(\cH)$ are the external legs of $\cH$ which are hooked to external faces.
\item A \textit{high subgraph} of $(\cG , \mu)$ is a connected subgraph $\cH \subset \cG$ such that
$e_{\cH}(\mu) < i_{\cH}(\mu)$. We label them as follows. For any $i$, $\cG_i$ is defined as the set of lines of $\cG$ with scales higher or equal to $i$. We call $k(i)$ its number of face-connected components, and $\{ \cG_{i}^{(k)} | 1 \leq k \leq k(i) \}$ its face-connected components. The subgraphs $\cG_{i}^{(k)}$ are exactly the high subgraphs.
\item Two high subgraphs are either included into another or line-disjoint, therefore the inclusion relations of the subgraphs $\cG_{i}^{(k)}$ can be represented as an abstract graph, whose root is the whole graph $\cG$. This is the \textit{Gallavotti-Nicol\`o tree} or simply \textit{GN tree}.
\end{enumerate}
\end{definition}

We can now extend the multi-scale power-counting of \cite{COR} to our non-Abelian model.

\begin{proposition}
There exists a constant $K > 0$, such that for any connected graph $\cG$ with scale attribution $\mu$, the following bound holds: 
\beq\label{fund}
 \vert \cA_{\cG, \mu}  \vert   \leq K^{L(\cG)}  \prod_{i \in \mathbb{N}} \prod_{ k \in \llbracket 1 , k(i) \rrbracket } M^{\omega [  \cG_{i}^{(k)}]}\,,
\eeq
where $\omega$ is the Abelian degree of divergence
\beq
\omega(\cH) = - 2 ( L(\cH) - W(\cH) ) + 3 ( F(\cH) - R(\cH) ) \,.
\eeq
\end{proposition}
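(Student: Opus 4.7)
The plan is to apply the Gaussian bounds \eqref{propa_bound}--\eqref{deriv_bound} slice by slice and reduce to a standard multi-scale Abelian estimate, then reorganize via Abel summation over the Gallavotti-Nicolò scale decomposition.

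First I would use \eqref{propa_bound} on every propagator of $\cG$, and for each wave-function counter-term vertex apply \eqref{deriv_bound} with two Lie derivatives (coming from the Laplacian) on one of its incident propagators. Each such Laplacian acting on a line of scale $i_e$ supplies an extra factor $M^{2 i_e}$, which is precisely what produces the $+2W(\cH)$ shift in $\omega$. Collecting these bounds one obtains
\begin{equation*}
|\cA_{\cG,\mu}| \; \leq \; K^{L(\cG)} \prod_{e \in L(\cG)} M^{(7 + 2 w_e)\, i_e} \int \prod_{e} \extd h_e \prod_{f \in F(\cG)} e^{-\delta\, M^{i_f}\, |H_f|} \cdot (\text{external factors}) \,,
\end{equation*}
where $w_e \in \{0,1\}$ flags wave-function counter-terms, $i_f = \min_{e \in f} i_e$, and $H_f = \overrightarrow{\prod_{e \in f}} {h_e}^{\epsilon_{ef}}$.

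The second step is to pick a spanning tree $\cT$ of $\cG$ compatible with the GN tree: since the high subgraphs are pairwise nested or line-disjoint, one may arrange $\cT \cap \cG_i^{(k)}$ to be a spanning tree of every $\cG_i^{(k)}$. After gauge-fixing $h_e = \one$ for $e \in \cT$, the face Gaussians become non-degenerate in the remaining loop variables and force each $h_e$ to lie in a ball of radius $\sim M^{-i_e}$ around $\one$. Thanks to the colored structure (no tadfaces), the linearization $h_e = \exp(X_e)$, $X_e \in \su(2)$, yields $|H_f| \geq c\, \big|\sum_e \epsilon_{ef} X_e \big| + O(|X|^2)$ for a uniform $c > 0$, and the resulting Lie-algebra Gaussian is dominated by its Abelian counterpart. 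The latter is controlled by the rank $R$ of $\epsilon_{ef}$: integrating $L - |\cT|$ loop variables against $R$ independent face constraints yields a gain bounded by a product over scales of the form $M^{-3 R(\cG_i^{(k)})\cdot (\mathrm{scale})}$.

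The final step is an Abel summation in $\mu$. Writing $i_e = \sum_{i\geq 1} \mathbf{1}[i_e \geq i]$ gives $\sum_e i_e = \sum_{i,k} L(\cG_i^{(k)})$, $\sum_f i_f = \sum_{i,k} F(\cG_i^{(k)})$ and $\sum_e w_e i_e = \sum_{i,k} W(\cG_i^{(k)})$; the analogous identity $\sum_i R(\cG_i) = \sum_{i,k} R(\cG_i^{(k)})$ follows from the block-diagonal structure of $\epsilon_{ef}$ over face-connected components. Combining the line, face, counter-term and rank contributions rebuilds $\prod_{i,k} M^{\omega(\cG_i^{(k)})}$ with $\omega = -2(L - W) + 3(F - R)$, as required.

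The main obstacle I expect is the second step: rigorously showing that the non-Abelian $SU(2)$ integral is dominated by the Abelianized one \emph{globally}, not merely on a neighborhood of $\one$. This requires cutting the integration domain using the Gaussian prefactor to isolate the linearizable regime from an exponentially suppressed tail, and tracking the Jacobians of $h_e = \exp(X_e)$. The face-connectedness of each $\cG_i^{(k)}$ and the absence of tadfaces are essential here, ensuring that every $h_e$ is effectively constrained by at least one Gaussian and that $R(\cG_i^{(k)})$ correctly counts the independent localizations produced by the integration.
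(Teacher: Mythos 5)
Your slicing of the propagators, the GN-compatible choice of trees, the Abel resummation identities, and the observation that the two Lie derivatives of a wave-function counter-term cancel the $M^{-2i}$ of its extra line (neutrality of $W$) all match the paper's argument. The genuine gap is in your second step, and your proposed fix for it would fail. After gauge-fixing along a tree, the face Gaussians $\e^{-\delta M^{i}|H_f|}$ do \emph{not} force the remaining $h_e$ into balls of radius $\sim M^{-i}$ around $\one$: they are unsuppressed on a neighborhood of the \emph{whole set of flat connections} of the $2$-complex of internal faces, and for a non-contractible high subgraph this set contains points far from the identity (non-trivial representations of the fundamental group of the face complex). At such configurations the integrand is of order one, so there is no "exponentially suppressed tail" to cut away, and the linearization $|H_f|\geq c\,|\sum_e \epsilon_{ef}X_e|+O(|X|^2)$ is only valid in the chart around the trivial connection. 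Your argument therefore only bounds the contribution of the trivial saddle, not the full $\SU(2)$ integral.

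The paper closes exactly this hole by importing the exact power-counting of Bonzom--Smerlak for such $BF$-type integrals: each (non-singular) flat connection $\phi$ contributes a scaling governed by the rank of the twisted boundary map $\delta^{1}_{\phi}$, and the key inequality $\mathrm{rk}\,\delta^{1}_{\phi}\geq 3\,\mathrm{rk}\,\epsilon_{ef}$ guarantees that every such contribution is bounded by $M^{-3R(\cG_i^{(k)})\,i}$, i.e.\ by the Abelian estimate; the bound is then propagated recursively from the leaves to the root of the GN tree. This is precisely why the Abelian degree is only an upper bound in general, and is optimal exactly for contractible subgraphs (where the trivial connection is the unique flat one), as remarked after the proposition. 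To repair your proof you would need either to invoke this twisted-rank result or to supply an equivalent uniform treatment of all flat-connection components, not just the one at $h_e=\one$.
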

\begin{proof}
Let us first assume $W(\cG) = 0$. In this case, we follow and adapt the proof of Abelian power-counting of \cite{COR}, about which we refer for additional details. 
We first integrate the $g$ variables in an optimal way, as was done in \cite{bgriv}. In each face $f$, a maximal tree of lines $T_f$ is
chosen to perform $g$ integrations. Optimality is ensured by requiring the trees $T_f$ to be compatible with the abstract GN tree, in the sense that $T_f \cap G_{i}^{(k)}$ has to be a tree itself, for any $f$ and $G_{i}^{(k)}$. This yields:
\bes\label{step1}
\vert \cA_{\cG, \mu} \vert &\leq& K^{L(\cG)}  \prod_{i \in \mathbb{N}} \prod_{ k \in \llbracket 1 , k(i) \rrbracket } M^{- 2 L( \cG_{i}^{(k)} ) + 3 F( \cG_{i}^{(k)} )} \\
&& \times \int [\extd h]^{L(\cG)} \prod_{f} \e^{- \delta M^{i(f)} \vert \overrightarrow{\prod_e} h_e^{\epsilon_{ef}} \vert}\,,
\ees
where $i(f) = \min \{ i_e \vert e \in f \}$.

\
The main difference with \cite{COR} is that variables are non-commuting, which prevents us from easily integrating out these variables. We can however rely on the methods developed in \cite{valentinmatteo}, which provide an exact power-counting theorem for $BF$ spin foam models. In particular, one can show that for any $2$-complex with $E$ edges and $F$ faces, the expression
\beq
\int [d g_e ]^E \exp \left( - \Lambda \sum_f | \prod_{e \in f} g_e^{\epsilon_{ef}}| \right)
\eeq
scales as $\Lambda^{- \rm{rk} \, \delta^{1}_{\phi}}$ when $\Lambda \to 0$. $\delta^{1}_{\phi}$ is the twisted boundary map associated to a (non-singular) flat connection $\phi$\footnote{The explicit construction of this map can be found in \cite{valentinmatteo}. With the notations of the present paper, it is defined as 
\bes
\delta^{1}_\phi : \quad E \otimes su(2) &\rightarrow& F \otimes su(2) \\
												 e \otimes X &\mapsto & \sum_f \epsilon_{ef} f \otimes Ad_{P \phi (v_e, v_f)} (X)
\ees
where $P \phi (v_e, v_f)$ is a path from a reference vertex $v_e$ in the edge $e$ to a reference vertex $v_f$ in the face $f$. The adjoint action encodes parallel transport with respect to $\phi$, and is full rank.}, which takes the non-commutativity of the group into account. Remarkably, this boundary map verifies:
\beq
\rm{rk} \, \delta^1_\phi \geq 3 \rm{rk} \, \epsilon_{ef}\,.
\eeq
As a result, the contribution of the closed face of a $\cG_i^{(k)}$ can be bounded by
\beq
\int [\extd g_e ]^{L(\cG_i^{(k)})} \exp \left( - M^{i(f)} \underset{f \in F(\cG_i^{(k)})}{\sum} | \prod_{e \in f} g_e^{\epsilon_{ef}}| \right) \leq K_1^{L(\cG_i^{(k)})} M^{- 3 R(\cG_i^{(k)}) i}\,.
\eeq
The power-counting (\ref{fund}) is recovered by recursively applying this bound, from the leaves to the root of the GN tree. 

\

The $W(\cG) \neq 0$ case is an immediate consequence of the $W(\cG) = 0$ one. Indeed, one just needs to understand how the insertion of a wave-function counter-term in a graph $\cG$ affects its amplitude $\cA_\cG$. While it adds one line to $\cG$, it does not change its number of faces, nor their connectivity structure, hence the rank $R$ is not modified either. The line being created is responsible for an additional $M^{-2 i}$ factor in the power-counting, with $i$ its scale. On the other hand, it is acted upon by a Laplace operator, that is two derivatives, which according to (\ref{deriv_bound}) generate an $M^{2 i}$. The two contributions cancel out, which shows that wave-function counter-terms are neutral to power-counting. The $L$ contribution to $\omega$ has therefore to be compensated by a $W$ term with the opposite sign.
\end{proof}

Notice that all the steps in the derivation of the bound are optimal, in the sense that we could find lower bounds with the same structure, except for the last integrations of face contributions. In this last step we discarded the fine effects of the noncommutative nature of $\SU(2)$, encoded in the rank of $\delta^{1}_{\phi}$. Remark however that no such effect is present for a contractible $\cG_{i}^{(k)}$, since the $2$-complex formed by its internal faces is simply connected \cite{valentinmatteo}. Indeed, such a subgraph supports a unique flat connection (the trivial one), which means that the integrand in equation (\ref{step1}) can be linearized around $h_e = \one$, showing the equivalence between Abelian and non-Abelian power-countings in this case. 
Since melonic subgraphs are contractible, this confirms our previous claim: the Abelian power-counting exactly captures the divergences of the set of models studied in this paper.

\subsection{Contraction of high melonic subgraphs}\label{sec:contraction}

We close this section with a discussion of the key ingredients entering the renormalization of this model, by explaining how local approximations to high melonic subgraphs are extracted from high slices to lower slices of the amplitudes. A full account of the renormalization procedure, including rigorous finiteness results, will be detailed in the next and final section. 
 
\

\begin{figure}[h]
\begin{center}
\includegraphics[scale=0.6]{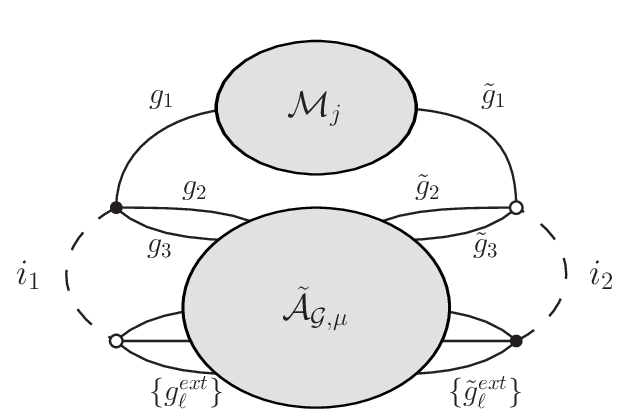}
\caption{A graph with a high melonic subgraph $\cM_j$: in this drawing, the grey bulb labeled $\cM_j$ represents the smallest part of the colored extension of $\cM_j$ which contains all its lines. The lines labeled by the scales $i_1$ and $i_2$ are the external legs of $\cM_j$.}
\label{contract_melon}
\end{center}
\end{figure}
Let us consider a non-vacuum graph with scale attribution $( \cG , \mu )$, containing a melonic high subgraph $\cM_j \subset \cG$ at scale $j$. For the convenience of the reader, we first focus on the case $F_{ext}(\cM_j) = 1$, which encompasses all the $2$-point divergent subgraphs, therefore all the degree $2$ subgraphs. We also first assume that no wave-function counter-terms is present in $\cM_j$.

\subsubsection{Divergent subgraphs with a single external face, and no wave-function counter-terms}

Since $F_{ext} (\cM_j) = 1$, $N_{ext} ( \cM_j )$ contains two external propagators, labeled by external variables $\{ g_\ell^{ext} \,, \ell = 1 , \ldots , 3 \}$ and $\{ \tilde{g}_\ell^{ext} \,, \ell = 1 , \ldots , 3 \}$, and scales $i_1 < j$ and $i_2 < j$ respectively. We can assume (without loss of generality) that the melonic subgraph $\cM_j$ is inserted on a color line of color $\ell = 1$. The amplitude of $\cG$, pictured in Figure \ref{contract_melon}, takes the form:
\bes
\cA_{\cG , \mu} &=& \int [\extd g_\ell ]^3 [\extd \tilde{g}_\ell ]^3 [\extd g_\ell^{ext} ]^3 [\extd \tilde{g}_\ell^{ext} ]^3 \widetilde{\cA}_{\cG , \mu}  ( g_2 , g_3 ; \tilde{g}_2 , \tilde{g}_3 ; \{ g_\ell^{ext} \} ; \{ \tilde{g}_\ell^{ext} \} ) \nn \\
&& \times C_{i_1} ( g_1^{ext} , g_2^{ext} , g_3^{ext} ; g_1 , g_2 , g_3 ) \cM_j ( g_1 , \tilde{g}_1 ) C_{i_2} ( \tilde{g}_1 , \tilde{g}_2 , \tilde{g}_3 , g_1^{ext} , g_2^{ext} , g_3^{ext} ) \, .
\ees
The idea is then to approximate the value of $\cA_{\cG , \mu}$ by an amplitude associated to the contracted graph $\cG / \cM_j$. This can be realized by "moving" one of the two external propagators towards the other. In practice, we can use the interpolation\footnote{$X_g$ denotes the Lie algebra element with the smallest norm such that $\e^{X_g} = g$.}
\bes
g_1 (t) = \tilde{g}_1 e^{t X_{\tilde{g}_1^{\inv} g_1}} \,, \qquad t \in [0 , 1]\, 
\ees 
and define:
\bes\label{sep_scales}
\cA_{\cG , \mu} (t) &=& \int [\extd g_\ell ]^3 [\extd \tilde{g}_\ell ]^3 [\extd g_\ell^{ext} ]^3 [\extd \tilde{g}_\ell^{ext} ]^3 \widetilde{\cA}_{\cG , \mu}  ( g_2, g_3 ; \tilde{g}_2 , \tilde{g}_3 ; \{ g_\ell^{ext} \} ; \{ \tilde{g}_\ell^{ext} \} ) \nn \\
&& \times C_{i_1} ( g_1^{ext} , g_2^{ext} , g_3^{ext} ; g_1 (t) , g_2 , g_3 ) \cM_j ( g_1 , \tilde{g}_1 ) C_{i_2} ( \tilde{g}_1 , \tilde{g}_2 , \tilde{g}_3 , \tilde{g}_1^{ext} , \tilde{g}_2^{ext} , \tilde{g}_3^{ext} ) \,.
\ees
This formula together with a Taylor expansion allows to approximate $\cA_\cG = \cA_\cG (1)$ by $\cA_\cG (0)$ and its derivatives. The order at which the approximation should be pushed is determined by the degree of divergence $\omega(\cM_j )$ of $\cM_j$ and the power-counting theorem: we should use the lowest order ensuring that the remainder in the Taylor expansion has a convergent power-counting. Roughly speaking each derivative in $t$ decreases the degree of divergence by $1$, therefore the Taylor expansion needs to be performed up to order $\omega(\cM_j )$:
\beq\label{taylor}
\cA_{\cG , \mu} = \cA_{\cG , \mu} (1) = \cA_{\cG , \mu} (0) + \sum_{k = 1}^{\omega(\cM_j )} \frac{1}{k !} \cA_{\cG , \mu}^{(k)} (0) + \int_{0}^{1} \extd t \, \frac{(1 - t)^{\omega(\cM_j )}}{\omega(\cM_j ) !} \cA_{\cG , \mu}^{(\omega(\cM_j ) + 1)} (t) .
\eeq 

\

Before analyzing further the form of each of these terms, we point out a few interesting properties verified by the function $\cM_j$. First, since by definition the variables $g_1$ and $\tilde{g}_1$ are boundary variables for a same face (and because the heat-kernel is a central function), $\cM_j( g_1 , \tilde{g}_1)$ can only depend on $\tilde{g}_1^{\inv} g_1$. From now on, we therefore use the notation:
\beq
\cM_j( g_1 , \tilde{g}_1) = \cM_j ( \tilde{g}_1^{\inv} g_1 )\,.
\eeq
We can then prove the following lemma.
\begin{lemma}
\begin{enumerate}[(i)]
\item $\cM_j$ is invariant under inversion:
\beq
\forall g \in \SU(2) \, , \qquad \cM_j (g^{\inv}) = \cM_j (g) \,. 
\eeq 
\item $\cM_j$ is central:
\beq
\forall g \, , h \in \SU(2) \, , \qquad \cM_j ( h g h^{\inv}) = \cM_j (g) \,.
\eeq
\end{enumerate}
\end{lemma}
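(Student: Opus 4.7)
The plan is to prove (ii) first by a measure-preserving conjugation of all the internal Haar variables, and then to deduce (i) from (ii) by exploiting the fact that in $\SU(2)$ every element is conjugate to its inverse. From the amplitude formula (\ref{ampl}), and using the centrality of the heat kernel that was already invoked to reduce the external face contribution to $K_{\alpha(f_{ext})}(u H_{ext})$ with $u = \tilde{g}_1^{\inv} g_1$, one writes
\[
\cM_j(u) \;=\; \int \prod_{e \in L(\cM_j)} \extd h_e \; K_{\alpha(f_{ext})}(u H_{ext}) \prod_{f \in F(\cM_j)} K_{\alpha(f)}(H_f)\,,
\]
where $H_{ext}$ and the $H_f$ are words in the internal variables $h_e$ along the external and internal faces, respectively (Schwinger-parameter integrations and mass factors, being irrelevant here, are kept implicit).

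For (ii), I would apply the change of variables $h_e \mapsto h h_e h^{\inv}$ on every internal line $e \in L(\cM_j)$. Bi-invariance of the Haar measure preserves $\prod_e \extd h_e$, while every face holonomy transforms by the same conjugation, $H \mapsto h H h^{\inv}$, simply because conjugation commutes with taking products and inverses. Centrality of each heat kernel then leaves the internal face contributions $K_{\alpha(f)}(H_f)$ invariant, while the external one becomes $K_{\alpha(f_{ext})}(u h H_{ext} h^{\inv}) = K_{\alpha(f_{ext})}((h^{\inv} u h) H_{ext})$ by one further use of centrality. Comparison with the original integral immediately yields $\cM_j(h^{\inv} u h) = \cM_j(u)$ for every $h \in \SU(2)$, which is (ii).

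For (i), I would invoke the strong reality of $\SU(2)$: writing $u = \cos(\theta/2)\, \one + i \sin(\theta/2)\, \hat{n} \cdot \vec{\sigma}$, the $\SU(2)$-lift of any $\pi$-rotation about an axis orthogonal to $\hat{n}$ provides an element $h$ with $h u h^{\inv} = u^{\inv}$; (i) then follows at once from (ii), as $\cM_j(u^{\inv}) = \cM_j(h u h^{\inv}) = \cM_j(u)$. The only point deserving any care in (ii) is that the change of variables acts purely on the internal integration variables, leaving the external inputs $g_1$ and $\tilde{g}_1$ untouched; this is manifest from the displayed expression, since $g_1, \tilde{g}_1$ enter only through the prefactor $u$ of the external face kernel. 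I do not anticipate any genuine obstacle: the result is essentially a bookkeeping consequence of centrality of the heat kernel and bi-invariance of the Haar measure, combined with the strong reality of $\SU(2)$.
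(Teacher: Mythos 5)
Your proof is correct, but it takes a genuinely different route from the paper. The paper proceeds by induction on the melonic structure: a rosette of $\cM_j$ is viewed as an elementary melon dressed by two strictly smaller melonic insertions $m^{(2)}, m^{(3)}$, and the invariances under inversion and conjugation of the heat kernel and of the Haar measure are propagated through the recursion. You instead argue globally: writing $\cM_j(u)$ as a lattice-gauge-theory integral with a single external-face kernel $K_{\alpha(f_{ext})}(u H_{ext})$, you perform the simultaneous change of variables $h_e \mapsto h h_e h^{\inv}$ on all internal lines, which conjugates every face holonomy, and centrality of the heat kernels plus invariance of the Haar measure give (ii) in one stroke; (i) then follows from (ii) and the ambivalence of $\SU(2)$ (every element is conjugate to its inverse, via the lift of a $\pi$-rotation about an orthogonal axis). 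Your argument is shorter, does not use melonicity at all (it applies to any subgraph amplitude with a single external face depending only on $\tilde{g}_1^{\inv} g_1$), and cleanly isolates what is being used: bi-invariance of the Haar measure, centrality of $K_\alpha$, and, for (i), a group-theoretic fact specific to $\SU(2)$ (which is harmless here, though note the paper's inductive route gets (i) from the general identity $K_\alpha(g^{\inv})=K_\alpha(g)$ and so would carry over verbatim to groups that are not ambivalent). The paper's recursive decomposition, on the other hand, is the same structural device reused elsewhere (e.g.\ for the factorization over external faces), which is presumably why the authors phrase the proof that way.
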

\begin{proof}
We can proceed by induction on the number of lines $\widetilde{L}$ of a rosette of $\cM_j$. 

\
When $\widetilde{L} = 1$, $\cM_j$ can be cast as an integral over a single Schwinger parameter $\alpha$ of an integrand of the form:
\beq
\int \extd h \,  K_\alpha ( \tilde{g}_1^{\inv}  g_1 h) \, (K_\alpha ( h ))^2 \,.
\eeq
By invariance of the heat kernels and the Haar measure under inversion and conjugation, the invariance of $\cM_j$ immediately follows. 

\
Suppose now that $\widetilde{L} \geq 2$. A rosette of $\cM_j$ can be thought of as an elementary melon decorated with two melonic insertions of size strictly smaller than $\widetilde{L}$ (at least one of them being non-empty). We therefore have:
\beq
\cM_j ( \tilde{g}_1^{\inv} g_1 ) = \int [ \extd g_2 \extd \tilde{g}_2 \extd g_3 \extd \tilde{g}_3 ] \, m^{(2)} ( g_2^{\inv} g_2 ) \, m^{(3)} ( g_3^{\inv} g_3 ) \int \extd \alpha \int \extd h \, K_\alpha ( \tilde{g}_1^{\inv}  g_1 h) K_\alpha ( \tilde{g}_2^{\inv}  g_2 h ) K_\alpha ( \tilde{g}_3^{\inv}  g_3 h) \,,
\eeq
in which we did not specify the integration domain of $\alpha$, since it does not play any role here. $m^{(2)}$ and $m^{(3)}$ are associated to melonic subgraphs of size strictly smaller than $L$, we can therefore assume that they are invariant under conjugation and inversion\footnote{If one $m^{(i)}$ is an empty melon, then $m^{(i)} (g) = \delta(g)$, and is trivially invariant.}. Using again the invariance of the heat kernels and the Haar measure, we immediately conclude that $\cM_j$ itself is invariant. 
   
\end{proof}

\

We now come back to (\ref{taylor}). The degree of divergence being bounded by $2$, it contains terms $\cA_{\cG , \mu}^{(k)} (0)$ with $k \leq 2$. We now show that $\cA_{\cG , \mu} (0)$ gives mass counter-terms, $\cA_{\cG , \mu}^{(1)} (0)$ is identically zero, and $\cA_{\cG , \mu}^{(2)} (0)$ implies wave-function counter-terms. This is stated in the following proposition.

\begin{proposition}
\begin{enumerate}[(i)]
\item $\cA_{\cG , \mu} (0)$ is proportional to the amplitude of the contracted graph $\cG / \cM_j$, with the same scale attribution:
\beq
\cA_{\cG , \mu} (0) = \left( \int \extd g \cM_j (g) \right) \cA_{\cG / \cM_j , \mu} \,.
\eeq
\item Due to the symmetries of $\cM_j$, $\cA_{\cG , \mu}^{(1)} (0)$ vanishes: 
\beq
\cA_{\cG , \mu}^{(1)} (0) = 0\,.
\eeq
\item $\cA_{\cG , \mu}^{(2)}$ is proportional to an amplitude in which a Laplace operator has been inserted in place of $\cM_j$:
\bes\label{wf_ct}
\cA_{\cG , \mu}^{(2)} (0) &=& \left( \frac{1}{3} \int \extd g \cM_j (g) | X_g | ^{2} \right) \nn \\ 
&& \times \int [\extd g_\ell ]^3 [\extd \tilde{g}_\ell ]^3 \int [\extd g_\ell^{ext} ]^3 [\extd \tilde{g}_\ell^{ext} ]^3 \widetilde{\cA}_{\cG , \mu}  ( g_2 , g_3 ; \tilde{g}_2 , \tilde{g}_3 ; \{ g_\ell^{ext} \} ; \{ \tilde{g}_\ell^{ext} \} ) \nn \\
&& \times \left( \Delta_{\tilde{g}_1} C_{i_1} ( g_1^{ext} , g_2^{ext} , g_3^{ext} ; \tilde{g}_1 , g_2 , g_3 ) \right) C_{i_2} ( \tilde{g}_1 , \tilde{g}_2 , \tilde{g}_3 , \tilde{g}_1^{ext} , \tilde{g}_2^{ext} , \tilde{g}_3^{ext} )  \,.
\ees
\end{enumerate}
\end{proposition}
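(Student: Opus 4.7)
The plan is to prove the three statements in turn, exploiting the fact that the only $t$-dependence in $\cA_{\cG , \mu}(t)$ sits in the external propagator $C_{i_1}$ through the variable $g_1 (t) = \tilde{g}_1 \, e^{t X}$, with $X \equiv X_{\tilde{g}_1^{\inv} g_1}$, and that the remaining structure factorizes a clean integral over the single variable $g \equiv \tilde{g}_1^{\inv} g_1$ against the central, inversion-invariant function $\cM_j$.

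For statement (i), I would set $t = 0$ directly. Then $g_1 (0) = \tilde{g}_1$, so $C_{i_1}$ no longer depends on $g_1$ through the insertion. Performing the change of variables $g_1 \mapsto g \equiv \tilde{g}_1^{\inv} g_1$ (using left-invariance of the Haar measure) decouples the integral over $g$ from the rest. Since $\cM_j (g_1 , \tilde{g}_1) = \cM_j ( g )$, the $g$-integral factorizes as the scalar $\int \extd g \, \cM_j (g)$, and what remains is precisely the amplitude $\cA_{\cG / \cM_j , \mu}$ of the contracted graph (with the two external propagators glued across the single external face of $\cM_j$), since contracting $\cM_j$ amounts to identifying $g_1$ with $\tilde{g}_1$ and erasing the internal structure.

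For statement (ii), one Lie derivative in $t$ at $t=0$ brings down $\cL_{X , \tilde{g}_1} C_{i_1}$, which is linear in the Lie algebra element $X = X_g$. Pulling the $g$-integral outside as above, the vanishing of $\cA_{\cG , \mu}^{(1)}(0)$ reduces to the identity
\begin{equation}
\int \extd g \, \cM_j (g) \, X_g^a = 0 \qquad (a = 1,2,3)\,,
\end{equation}
which follows immediately from the lemma: under the change of variable $g \mapsto g^{\inv}$ the measure is preserved and $\cM_j$ is invariant, while $X_{g^{\inv}} = - X_g$, so the integrand is odd.

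For statement (iii), the second derivative at $t=0$ produces the quadratic form $X^a X^b \, \cL_{a,\tilde{g}_1} \cL_{b, \tilde{g}_1} C_{i_1}$, and factorizing the $g$-integral leaves the tensor $M^{ab} \equiv \int \extd g \, \cM_j (g) \, X_g^a X_g^b$. Using centrality of $\cM_j$, i.e.\ invariance under $g \mapsto h g h^{\inv}$, together with $X_{h g h^{\inv}} = \mathrm{Ad}_h X_g$, one finds that $M^{ab}$ is $\mathrm{Ad}$-invariant on $\su(2)$; since the adjoint representation is irreducible, Schur's lemma forces $M^{ab} = c \, \delta^{ab}$, with the constant fixed by tracing: $c = \tfrac{1}{3} \int \extd g \, \cM_j (g) \, |X_g|^2$. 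Substituting and recognizing $\delta^{ab} \cL_a \cL_b = \Delta$ yields exactly (\ref{wf_ct}). The only mildly delicate point is bookkeeping the sign and coefficient conventions for the Laplacian and ensuring that the Lie derivatives commute at coincident points (they do on $\SU(2)$ up to curvature terms which drop out thanks to the symmetric tensor $\delta^{ab}$); beyond that, the argument is purely algebraic and does not require any estimate.
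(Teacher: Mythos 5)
Your proposal is correct and follows essentially the same route as the paper: the change of variables $g_1 \mapsto \tilde{g}_1^{\inv} g_1$ to factor out $\int \extd g \, \cM_j(g)$, the inversion invariance of $\cM_j$ (oddness of $X_g$) for the vanishing of the first-order term, and conjugation invariance for the second-order term. The only cosmetic difference is in (iii), where you package the identity $\int \extd g\, \cM_j(g) X_g^a X_g^b = \tfrac{1}{3}\left(\int \extd g\, \cM_j(g) |X_g|^2\right)\delta^{ab}$ via $\mathrm{Ad}$-invariance and Schur's lemma, while the paper obtains the same result by cancelling the off-diagonal entries with an explicit conjugating element and equalizing the diagonal ones by conjugation invariance.
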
 
\begin{proof}

\begin{enumerate}[(i)]
\item One immediately has:
\bes
\cA_{\cG , \mu} (0) &=& \int [\extd g_\ell ]^3 [\extd \tilde{g}_\ell ]^3 [\extd g_\ell^{ext} ]^3 [\extd \tilde{g}_\ell^{ext} ]^3 \widetilde{\cA}_{\cG , \mu}  ( g_2 , g_3 ; \tilde{g}_2 , \tilde{g}_3 ; \{ g_\ell^{ext} \} ; \{ \tilde{g}_\ell^{ext} \} )\nn \\
&& \times C_{i_1} ( g_1^{ext} , g_2^{ext} , g_3^{ext} ; \tilde{g}_1 , g_2 , g_3 ) \cM_j ( \tilde{g}_1^{\inv} g_1  ) C_{i_2} ( \tilde{g}_1 , \tilde{g}_2 , \tilde{g}_3 , \tilde{g}_1^{ext} , \tilde{g}_2^{ext} , \tilde{g}_3^{ext} ) \nn \\
&=& \int [\extd g_\ell ]^3 [\extd \tilde{g}_\ell ]^3  [\extd g_\ell^{ext} ]^3 [\extd \tilde{g}_\ell^{ext} ]^3 \widetilde{\cA}_{\cG , \mu}  ( g_2 , g_3 ; \tilde{g}_2 , \tilde{g}_3 ; \{ g_\ell^{ext} \} ; \{ \tilde{g}_\ell^{ext} \} ) \nn \\
&& \times C_{i_1} ( g_1^{ext} , g_2^{ext} , g_3^{ext} ; \tilde{g}_1 , g_2 , g_3 ) \cM_j ( g_1 ) C_{i_2} ( \tilde{g}_1 , \tilde{g}_2 , \tilde{g}_3 , \tilde{g}_1^{ext} , \tilde{g}_2^{ext} , \tilde{g}_3^{ext} ) \nn \\
&=& \left( \int \extd g \cM_j (g) \right) \cA_{\cG / \cM_j , \mu} \,,
\ees
where from the first to the second line we made the change of variable $g_1 \to \tilde{g}_1 g_1$.
\item For $\cA_{\cG , \mu}^{(1)} (0)$, a similar change of variables yields:
\bes
\cA_{\cG , \mu}^{(1)} (0) &=& \int [\extd g_\ell ]^2 [\extd \tilde{g}_\ell ]^3 [\extd g_\ell^{ext} ]^3 [\extd \tilde{g}_\ell^{ext} ]^3 \widetilde{\cA}_{\cG , \mu}  ( g_2 , g_3 ; \tilde{g}_2 , \tilde{g}_3 ; \{ g_\ell^{ext} \}; \{ \tilde{g}_\ell^{ext} \} ) \nn \\
&& \times \left( \int \extd g  \, \cM_j (g) \, \cL_{X_g , \tilde{g}_1} \, C_{i_1} ( g_1^{ext} , g_2^{ext} , g_3^{ext} ; \tilde{g}_1 , g_2 , g_3 )  \right) C_{i_2} ( \tilde{g}_1 , \tilde{g}_2 , \tilde{g}_3 , \tilde{g}_1^{ext} , \tilde{g}_2^{ext} , \tilde{g}_3^{ext} ) \,. \nn 
\ees
But by invariance of $\cM_j$ under inversion, one also has
\beq
\int \extd g  \, \cM_j (g) \, \cL_{X_g} = - \int \extd g  \, \cM_j (g) \, \cL_{X_g} \; \Rightarrow \; \int \extd g  \, \cM_j (g) \, \cL_{X_g} = 0\,,
\eeq 
hence $\cA_{\cG , \mu}^{(1)} (0) = 0$. 

\item Finally, $\cA_{\cG , \mu}^{(2)} (0)$ can be expressed as:
\bes
\cA_{\cG , \mu}^{(2)} (0) &=& \int [\extd g_\ell ]^2 [\extd \tilde{g}_\ell ]^3 [\extd g_\ell^{ext} ]^3 [\extd \tilde{g}_\ell^{ext} ]^3 \widetilde{\cA}_{\cG , \mu}  ( g_2 , g_3 ; \tilde{g}_2 , \tilde{g}_3 ; \{ g_\ell^{ext} \} ; \{ \tilde{g}_\ell^{ext} \} ) \nn \\
&& \times \left( \int \extd g  \, \cM_j (g) \, (\cL_{X_g , \tilde{g}_1})^{2} \, C_{i_1} ( g_1^{ext} , g_2^{ext} , g_3^{ext} ; \tilde{g}_1 , g_2 , g_3 )  \right) C_{i_2} ( \tilde{g}_1 , \tilde{g}_2 , \tilde{g}_3 , \tilde{g}_1^{ext} , \tilde{g}_2^{ext} , \tilde{g}_3^{ext} ) \,. \nn 
\ees
We can decompose the operator $\int \extd g  \, \cM_j (g) \, (\cL_{X_g})^{2}$ into its diagonal and off-diagonal parts with respect to an orthonormal basis $\{ \tau_k \,, k = 1\,, \ldots , 3\}$ in $\su(2)$. The off-diagonal part writes
\beq
\sum_{k \neq l} \int \extd g \, \cM_j (g ) \, X^{k}_g X^{l}_g \, \cL_{\tau_k} \cL_{\tau_l}
\eeq
and can be shown to vanish. Indeed, let us fix $k \neq l$, and $h \in \SU(2)$ such that:
\beq
X^{k}_{h g h^{\inv}} = X^{l}_{g} \; \; ; \qquad X^{l}_{h g h^{\inv}} = - X^{k}_{g} \,.  
\eeq
It follows from the invariance of $\cM_j$ under conjugation that 
\beq
\int \extd g \, \cM_j (g ) \, X^{k}_g X^{l}_g = - \int \extd g \, \cM_j (g ) \, X^{k}_g X^{l}_g
\; \Rightarrow \; \int \extd g \, \cM_j (g ) \, X^{k}_g X^{l}_g = 0\,. 
\eeq
Hence all off-diagonal terms vanish. One is therefore left with the diagonal ones, which contribute in the following way:
\beq
\int \extd g  \, \cM_j (g) \, (\cL_{X_g})^{2} = \sum_{k=1}^{3} \int \extd g  \, \cM_j (g) \, ( X^{k}_g )^{2} (\cL_{\tau_k})^{2} \,.
\eeq
Again, by invariance under conjugation, $\int \extd g  \, \cM_j (g) \, ( X^{k}_g )^{2}$ does not depend on $k$. This implies:
\bes
\int \extd g  \, \cM_j (g) \, (\cL_{X_g})^{2} &=& \int \extd g  \, \cM_j (g) \, ( X^{1}_g )^{2} \sum_{k=1}^{3} (\cL_{\tau_k})^{2} \\
&=& \left( \frac{1}{3} \int \extd g  \, \cM_j (g) \, ( X^{k}_g )^{2} \right) \Delta \,. 
\ees
\end{enumerate}
\end{proof}

\subsubsection{Additional external faces and wave-function counter-terms}

Let us first say a word about how the previous results generalize to more external faces, still assuming the absence of wave-function counter-terms. According to Corollary \ref{faces_legs}, the only two possibilities are $F_{ext}(\cM_j) = 2$ or $F_{ext}(\cM_j) = 3$, and in both cases $N \geq 4$. Incidentally, $\omega( \cM_j ) = 0$ or $1$. Moreover, since all the faces have the same color, we always have $N_{ext} (\cM_j) = 2 F_{ext}(\cM_j)$. One defines $\cA_{\cG , \mu} (t)$ by interpolating between the end variables of the external faces, which consist of $F_{ext}(\cM_j)$ pairs of variables, with one variable per propagator in $N_{ext} (\cM_j)$. Assuming their color to be $1$, for instance, the amplitude $\cA_{\cG , \mu} (t)$ can be written as:
\bes
\cA_{\cG , \mu} (t) &=& \int [\extd g_\ell^{k} \extd \tilde{g}_\ell^{k} ] [\extd g_\ell^{ext, k} \extd \tilde{g}_\ell^{ext, k} ]  \widetilde{\cA}_{\cG , \mu}  ( g_2^{k} , g_3^{k} ; \tilde{g}_2^{k} , \tilde{g}_2^{k} ; \{ g_\ell^{ext,k} \} ; \{ \tilde{g}_\ell^{ext,k} \} ) \\
&&\prod_{k = 1}^{F_{ext}(\cM_j)} C_{i_k} ( g_1^{ext, k} , g_2^{ext, k} , g_3^{ext , k} ; g_1^{k} (t) , g_2^{k} , g_3^{k} )  \cM_j ( \{ g_1^{k} , \tilde{g}_1^{k} \} ) C_{i_k '} ( \tilde{g}_1^{k} , \tilde{g}_2^{k} , \tilde{g}_3^{k} , \tilde{g}_1^{ext, k} , \tilde{g}_2^{ext, k} , \tilde{g}_3^{ext, k} ) \,, \nn
\ees
with 
\beq
g_1^{k} (t) = \tilde{g}_1^k e^{t X_{(\tilde{g}_1^{k})^\inv g_1^{k}}} \,, \qquad t \in [0 , 1]\,. 
\eeq
Moreover, we know that under a spanning tree contraction, the external faces of $\cM_j$ get disconnected. This means that the function $\cM_j$ can be factorized as a product
\beq
\cM_j ( \{ g_1^{k} , \tilde{g}_1^{k} \} ) = \prod_{k = 1}^{F_{ext} (\cM_j)} \cM_j^{(k)} ( g_1^{k} , \tilde{g}_1^{k} ) \,,
\eeq
such that each $\cM_j^{(k)}$ verifies all the invariances discussed in the previous paragraph. Thus, the part of the integrand of $\cA_{\cG , \mu} (t)$ relevant to $\cM_j$ is factorized into $k$ terms similar to the integrand appearing in the $F_{ext} = 1$ case. It is then immediate to conclude that all the properties which were proven in the previous paragraph hold in general. Indeed, the Taylor expansions to check are up to order $0$ or $1$ at most. The zeroth order of a product is trivially the product of the zeroth orders. As for the first order, it cancels out since the derivative of each one of the $k$ terms is $0$ at $t=0$. 

\

The effect of wave-function counter-terms is even easier to understand. Indeed, they essentially amount to insertions of Laplace operators. But the heat-kernel at time $\alpha$ verifies
\beq
\Delta K_\alpha = \frac{\extd K_\alpha}{\extd \alpha}\,,
\eeq
therefore all the invariances of $K_\alpha$ on which the previous demonstrations rely also apply to $\Delta K_\alpha$. 

\

All in all, the conclusions drawn in the previous paragraph hold for all non-vacuum high divergent subgraphs $\cM_j$.

\subsubsection{Notations and finiteness of the remainders}
\label{sec:remainders}

In the remainder of this paper, it will be convenient to use the following notations for the local part of the Taylor expansions above:
\beq
\tau_{\cM_j} \cA_{\cG , \mu} = \sum_{k = 0}^{\omega(\cM_j )} \frac{1}{k !} \cA_{\cG , \mu}^{(k)} (0) \,.
\eeq
$\tau_{\cM_j}$ projects the full amplitude $\cA_{\cG , \mu}$ onto effectively local contributions which take into account the relevant contributions of the subgraph $\cM_j \subset \cG$. To confirm that this is indeed the case, one needs to prove that in the remainder 
\beq
R_{\cM_j} \cA_{\cG , \mu} \equiv \int_{0}^{1} \extd t \, \frac{(1 - t)^{\omega(\cM_j )}}{\omega(\cM_j ) !} \cA_{\cG , \mu}^{(\omega(\cM_j ) + 1)} (t) \, ,
\eeq
the (non-local) part associated to $\cM_j$ is power-counting convergent. According to (\ref{sep_scales}), we have:
\bes
R_{\cM_j} \cA_{\cG , \mu} &=& \int_{0}^{1} \extd t \, \frac{(1 - t)^{\omega(\cM_j )}}{\omega(\cM_j ) !} \int [\extd g_\ell ]^3 [\extd \tilde{g}_\ell ]^3 [\extd g_\ell^{ext} ]^3 [\extd \tilde{g}_\ell^{ext} ]^3 \widetilde{\cA}_{\cG , \mu}  ( g_2 , g_3 ; \tilde{g}_2 , \tilde{g}_3 ; \{ g_\ell^{ext} \} ; \{ \tilde{g}_\ell^{ext} \} ) \nn \\
&&\times (\cL_{X_{\tilde{g_1}^{\inv} g_1 } , g_1 (t) })^{\omega(\cM_j) + 1} C_{i_1} ( g_1^{ext} , g_2^{ext} , g_3^{ext} ; g_1 (t) , g_2 , g_3 ) \nn \\
&& \times \cM_j ( g_1 , \tilde{g}_1 ) C_{i_2} ( \tilde{g}_1 , \tilde{g}_2 , \tilde{g}_3 , \tilde{g}_1^{ext} , \tilde{g}_2^{ext} , \tilde{g}_3^{ext} ) \,,
\ees
and therefore:
\bes\label{remainder_scales}
\vert R_{\cM_j} \cA_{\cG , \mu} \vert &\leq& \int_{0}^{1} \extd t \, \frac{\vert 1 - t \vert^{\omega(\cM_j )}}{\omega(\cM_j ) !} \int [\extd g_\ell ]^3 [\extd \tilde{g}_\ell ]^3 [\extd g_\ell^{ext} ]^3 [\extd \tilde{g}_\ell^{ext} ]^3 \widetilde{\cA}_{\cG , \mu}  ( g_2 , g_3 ; \tilde{g}_2 , \tilde{g}_3 ; \{ g_\ell^{ext} \} ; \{ \tilde{g}_\ell^{ext} \} ) \nn \\
&& \times \vert X_{\tilde{g_1}^{\inv} g_1} \vert^{\omega(\cM_j ) + 1} (\cL_{ \tilde{X}_{\tilde{g_1}^{\inv} g_1 } , g_1 (t) })^{\omega(\cM_j) + 1} C_{i_1} ( g_1^{ext} , g_2^{ext} , g_3^{ext} ; g_1 (t) , g_2 , g_3 ) \nn \\
&& \times \cM_j ( g_1 , \tilde{g}_1 ) C_{i_2} ( \tilde{g}_1 , \tilde{g}_2 , \tilde{g}_3 , \tilde{g}_1^{ext} , \tilde{g}_2^{ext} , \tilde{g}_3^{ext} ) \,,
\ees
where $\tilde{X}_{\tilde{g_1}^{\inv} g_1 }$ is the unit vector of direction $X_{\tilde{g_1}^{\inv} g_1 }$. We can now analyze how the power-counting of expression (\ref{remainder_scales}) differs from that of the amplitude $( \cG , \mu )$. There are two competing effects. The first is a loss of convergence due to the $\omega(\cM_j) + 1$ derivatives acting on $C_{i_1}$. According to (\ref{deriv_bound}), these contributions can be bounded by an additional $M^{( \omega( \cM_{j}) + 1 ) i_1 }$ term. This competes with the second effect, according to which the non-zero contributions of the integrand are concentrated in the region in which $\tilde{g_1}^{\inv} g_1$ is close to the identity. More precisely, the fact that $\cM_j$ contains only scales higher than $j$ imposes that 
\beq
\vert X_{\tilde{g_1}^{\inv} g_1} \vert \leq K M^{- j}
\eeq
where the integrand is relevant. The first line of \eqref{remainder_scales} therefore contributes to the power-counting with a term bounded by $M^{- (\omega( \cM_{j}) + 1 )) j}$. And since by definition $j > i_1$, one concludes that the degree of divergence of the remainder is bounded by:
\beq
\omega ( \cM_j ) + (\omega ( \cM_j ) + 1 ) ( i_1 - j ) \leq - 1 \,.  
\eeq

\section{Renormalization at all orders}\label{sec:finiteness}
\label{finiteness}

We conclude this paper by establishing a BPHZ theorem for the renormalized series. As in other kinds of field theories, this proof relies on forest formulas, and a careful separation between its high, divergent, and quasi-local parts from additional useless finite contributions.

\

We begin with a (standard) discussion about the compared merits of the renormalized expansion on the one hand, and the effective expansion on the other hand. 

So far we have discussed the renormalization of our model in the spirit of the latter, where each renormalization step (one for each slice) generates effective local couplings at lower scales. It perfectly fits Wilson's conception of renormalization:
in this setting, one starts with a theory with UV cut-off $\Lambda = M^{- 2 \rho}$, and tries to understand the physics in the IR, whose independence from UV physics is ensured by the separation of scales with respect to the cut-off. In order to compute physical processes involving external scales $i_{IR} < \rho$, one can integrate out all the fluctuations in the shell $i_{IR} < i \leq \rho$, resulting in an effective theory at scale $i_{IR}$. 

Because our model is renormalizable, we know that the main contributions in this integration are associated to quasi-local divergent subgraphs, therefore the effective theory can be approximated by a local theory of the same form as the bare one. According to Wilson's renormalization group treatment, in order to better handle the fact that only the high parts of divergent subgraphs contribute to this approximation, one should proceed in individual and iterated steps $i \to i - 1$ instead of integrating out the whole shell $i_{IR} < i \leq \rho$ at once. At each step one can absorb the ultimately divergent contributions (when the cut-off will be subsequently removed) into new effective coupling constants. This procedure, then, naturally generates one effective coupling constant per renormalizable interaction and per scale, as opposed to a single renormalized coupling per interaction in the renormalized expansion. This might look like a severe drawback, but on the other hand a main advantage is that the finiteness of the effective amplitudes becomes clear: the Taylor expansions of the previous section together with the finiteness of the remainder guarantee that all divergences are tamed. In particular, there is no problem of overlapping divergences since high subgraphs at a given scale cannot overlap.   

\

If one wants to be able to work with single renormalized couplings in the Lagrangian, one has to resort to a cruder picture in which the whole renormalization trajectory is approximated by a unique integration step from $\rho$ to $i_{IR}$. The price to pay is that one has no way anymore to isolate  the high (truly divergent) contributions of divergent subgraphs, which will result in additional finite contributions to the renormalized amplitudes. These contributions can build up over scales, explaining the appearance of renormalons, i.e. amplitudes which grow as a factorial of the number of vertices. This should be contrasted with the effective approach, in which amplitudes grow at most exponentially in the number of vertices. 

While they are not a big issue in perturbative expansions at low orders, renormalons are very problematic in non-perturbative approaches to quantum field theory such as the constructive program. 
They may be all the more problematic in TGFTs, if one expects continuum spacetime physics to show up in a regime dominated by large graphs, and thus to depend on non-perturbative effects\footnote{Here, we only mean non-perturbative in the sense of the perturbative expansion for small coupling constants, around the TGFT Fock vacuum, that we considered in this paper.}. And this seems in turn unavoidable if one interprets the perturbative expansion we deal with here as an expansion around the 'no spacetime' vacuum (corresponding to the TGFT Fock vacuum). 

A second related drawback of the renormalized expansion is the problem of overlapping divergences. Here again, the effective expansion appears to be very helpful. Not only overlapping divergences do not show up in this framework, but this also elucidates their treatment in the renormalized expansion. Indeed, at each step in the trajectory of the renormalization group, divergences are indexed by disconnected subgraphs. When one iterates the process, from high to lower scales, one finds that the divergent subgraphs of a given amplitude $\cA_{\cG , \mu}$ which contribute organize themselves into a forest. This is obvious once we understood that these graphs are high, and therefore correspond to nodes of the Gallavotti-Nicol\`o tree of $(\cG , \mu)$. 
In order to pack all these contributions into renormalized couplings for the whole trajectory of the renormalized group, it is therefore necessary to index the counter-terms by all the possible forests of divergent subgraphs (irrespectively of them being high or not), called Zimmermann's forests. Seen from this perspective, it is only when unpacking the renormalized amplitudes by appropriately decomposing them over scale attributions that one makes transparent why and how the Zimmermann's forest formula cures all divergences. Here again, the situation in TGFTs with respect to usual QFTs would suggest to resort to the effective expansion: due to the finer notion of connectedness which indexes the divergent subgraphs (face-connectedness), overlapping divergences are enhanced, the internal structure of the vertices being an additional source of difficulties. This is for instance manifest in super-renormalizable examples of the type \cite{COR}, in which overlapping contributions already enter the renormalization of tadpoles (leading to the notion of \lq melonic Wick ordering\rq). 

\

Despite the two generic drawbacks of the renormalized series, we choose a conservative approach in the following, and decide to outline in some details the proof of finiteness of the usual renormalized amplitudes. We will however start with a sketch of the recursive definition of the effective coupling constants. Since vertex-connectedness lies at the core of the Wilsonian effective expansion, we cannot take full advantage of face-connectedness in this context. This is similar to ordinary quantum field theories, where $1$-particle reducible graphs need to be taken into account in the effective expansion but can be dispensed with in the renormalized expansion. This is the main motivation for resorting to the renormalized expansion, where counter-terms are indexed by forests of divergent subgraphs. We will then decompose the amplitudes over scales and check that all contributions from high divergent subgraphs are correctly cured by the appropriate counter-terms. We will finally perform the sum over scale attributions, showing why the result is finite, and how useless counter-terms can build up to form renormalons.

\subsection{Effective and renormalized expansions}

As briefly explained before, the effective expansion is a reshuffling of the bare theory (with cut-off $\Lambda = M^{-2 \rho}$), in terms of recursively defined effective coupling constants. We therefore start from the connected Schwinger functions decomposed over scale attributions compatible with the cut-off:
\beq\label{bare_s}
\cS_N^\rho = \sum_{\cG , \mu | \mu \leq \rho} \frac{1}{s(\cG)} \left( \prod_{b \in \cB} ( - t_{b}^{\rho} )^{n_b (\cG)} \right) \cA_{\cG , \mu } \,.
\eeq 
In this formula, the sum runs over connected graphs, and $b$ spans all possible interactions, including mass and wave-function counter-terms. Starting from the highest scale $\rho$, we want to construct a set of $\rho + 1$ effective coupling constants per interaction $b$, called $t_{b , i}^{\rho}$ with $0 \leq i \leq \rho$. They will be formal power series in the bare coupling constants $t_{b , \rho}^{\rho} \equiv t_{b}^{\rho}$, such that $t_{b , i}^{\rho}$ is obtained from $t_{b , i + 1}^{\rho}$ by adding to it all the counter-terms associated to high subgraphs at scale $i + 1$. In order to make this statement more precise, it is useful to define $i_b (\cG , \mu)$ as the scale of a vertex $b$ in a graph $(\cG , \mu)$ as:
\beq\label{scale_coupling}
i_b ( \cG , \mu ) \equiv \max \{ i_l (\mu) \vert l \in L_b (\cG) \} \,,
\eeq
where $L_b (\cG)$ is the set of lines of $\cG$ which are hooked to $b$. We aim at a re-writing of \eqref{bare_s} of the form:
\beq\label{s_eff}
\cS_N^\rho = \sum_{\cG , \mu | \mu \leq \rho} \frac{1}{s(\cG)} \left( \prod_{b \in \cB(\cG) } ( - t_{b , i_b ( \cG , \mu )}^{\rho} ) \right) \cA_{\cG , \mu}^{eff} \,,
\eeq
in which the bare coupling constants have been substituted by effective ones at the scale of the bubbles making a graph $(\cG , \mu)$, and the new effective amplitudes are free of divergences. Thanks to the multiscale analysis, we know exactly which face-connected subgraphs are responsible for the divergences of a bare amplitude $\cA_{\cG , \mu}$: they are the high divergent subgraphs, which is a subset of all the quasi-local subgraphs. Unfortunately, they cannot play the leading role in the effective expansion: the divergences in a slice $i+1$ must be packaged into vertex-connected components, and \emph{reabsorbed in effective vertices with external propagators at scales lower or equal to $i$}. This condition on the external scales makes it impossible to act on a face-connected divergent subgraph independently of what it is vertex-connected to. Our language is therefore not adapted to the effective expansion. In order to make this point clearer, \emph{let us assume for the moment that the divergent subgraphs, the GN tree and the $\tau$ contraction operators are defined on the basis of vertex-connectedness}. In this provisional acceptation of the terms, let us moreover call $D_\mu (\cG)$, the \textit{forest of high divergent subgraphs} of $(\cG , \mu)$. The effective amplitudes are then deduced from the bare ones by subtracting the local part of each high divergent subgraph \cite{Riv}:
\beq\label{a_eff}
\cA_{\cG , \mu}^{eff} = \prod_{m \in D_\mu (\cG)} ( 1 - \tau_m ) \cA_{\cG , \mu}\,.
\eeq
Finiteness of $\cA_{\cG , \mu}^{eff}$ in the limit of infinite cut-off is then guaranteed. 
In order to make this prescription consistent, we need to reabsorb contributions of the form
\beq
\tau_m \cA_{\cG , \mu}
\eeq
into the effective coupling constants. 
This can be made more precise by defining an inductive version of (\ref{s_eff}):
\beq\label{s_eff_i}
\cS_N^\rho = \sum_{\cG , \mu | \mu \leq \rho} \frac{1}{s(\cG)} \left( \prod_{b \in \cB(\cG)} ( - t_{b , \sup( i , i_b ( \cG , \mu ))}^{\rho} ) \right) \cA_{\cG , \mu}^{eff , i} \,,
\eeq
with
\beq\label{a_eff_i}
\cA_{\cG , \mu}^{eff , i} \equiv \prod_{m \in D_\mu^i (\cG)} ( 1 - \tau_m ) \cA_{\cG , \mu}
\eeq
and
\beq
D_\mu^i (\cG) \equiv \{ m \in D_{\mu} (\cG) \vert i_m > i \} \,.
\eeq
We now proceed to prove (\ref{s_eff_i}), by induction on $i$, which at the same time will provide the recursive relation for the effective coupling constants. For $i = \rho$, (\ref{s_eff_i}) coincides with the bare expansion (\ref{bare_s}), and therefore holds true. Assuming that it holds at rank $i + 1$, let us then see how to prove it at rank $i$. The difference between $\cA_{\cG , \mu}^{eff , i}$ and $\cA_{\cG , \mu}^{eff , i + 1}$ amounts to counter-terms in $D_{\mu}^{i} (\cG) \setminus D_{\mu}^{i + 1} (\cG) = \{ M \in D_{\mu} (\cG) \vert i_M = i + 1 \}$, hence:
\beq
\cA_{\cG , \mu}^{eff , i} - \cA_{\cG , \mu}^{eff , i + 1} = \sum_{S \subset D_{\mu}^{i} (\cG) \setminus D_{\mu}^{i + 1} (\cG) \atop S \neq \emptyset} \prod_{M \in S} (- \tau_M ) \prod_{m \in D_{\mu}^{i + 1} (\cG)} ( 1 - \tau_m ) \cA_{\cG , \mu} .
\eeq
Adding and subtracting this quantity to $\cA_{\cG , \mu}^{eff , i + 1}$ in the equation (\ref{s_eff_i}) at rank $i+1$, one obtains a new equation which now involves $\cA_{\cG , \mu}^{eff , i}$ (thanks to the term added), together with a sum over subsets $S$ (due to the term subtracted). In condensed notations, this can be written as:
\beq\label{s_eff_intermediate}
\cS_N^\rho = \sum_{ (\cG , \mu , S) , \mu \leq \rho \atop S \subset D_{\mu}^{i} (\cG) \setminus D_{\mu}^{i + 1} (\cG) }  \frac{1}{s(\cG)} \left( \prod_{b \in \cB(\cG)} ( - t_{b , \sup( i + 1 , i_b ( \cG , \mu ))}^{\rho} ) \right) \cA_{\cG , \mu , S}^{eff , i} \,,
\eeq
where
\beq
\cA_{\cG , \mu , S}^{eff , i} \equiv - \prod_{M \in S} (- \tau_M ) \prod_{m \in D_{\mu}^{i + 1} (\cG)} ( 1 - \tau_m ) \cA_{\cG , \mu} 
\eeq
when $S \neq \emptyset$ and $\cA_{\cG , \mu , \emptyset}^{eff , i} \equiv \cA_{\cG , \mu}^{eff , i}$. The elements in a set $S$ being vertex-disjoint, we can contract them independently, and absorb the terms associated to $S \neq \emptyset$ into effective coupling constants at scale $i$. 

However, in order to correctly take wave function counter-terms into account, one needs to slightly generalize the notion of contraction previously defined for strictly tensorial interactions. While $\tau_M$ extracts amplitudes of contracted graphs times a pre-factor when $\omega(M) = 0$ or $1$, it is rather a sum of two terms when $\omega(M) = 2$: a zeroth order term proportional to a contracted amplitude, and a second order term proportional to a contracted amplitude supplemented with a Laplacian insertion as in (\ref{wf_ct}). In the latter case, one shall therefore decompose the operators as sums of two operators
\beq
\tau_M = \tau_M^{(0)} + \tau_M^{(2)}
\eeq
corresponding to the two types of counter-terms\footnote{Similarly, one defines $\tau_M \equiv \tau_M^{(0)}$ if $\omega(M) = 0$ or $1$.}. Developing these products, one ends up with a formula akin to (\ref{s_eff_intermediate}), provided that sets $S$ are generalized to
\beq
\hat{S} \equiv \{ (M , k_M) \vert M \in S, \, k_M \in \{ 0 , 2\}, \, k_M \leq \omega(M) \} \,,
\eeq
and that $\tau$ operators are replaced by $\tau_{\hat{M}} \equiv \tau_M^{(k_M)}$ for $\hat{M} = (M , k_M)$.
 Taking the scale attributions into account, we are lead to define the \textit{collapse} $\phi_i$, which sends triplets $( \cG , \mu , \hat{S} )$ with $S \subset D_{\mu}^{i} (\cG) \setminus D_{\mu}^{i + 1} (\cG)$ to its contracted version $(\cG' , \mu' , \emptyset)$. $\cG' \equiv \cG / \hat{S}$ is the graph obtained after the elements of $\hat{S}$ have been contracted, understood in a generalized sense: $\cG / (M , k_M)$ is equivalent to $\cG / M$ when $k_M = 0$ or $1$, and is a graph in which the $2$-point divergent subgraph $M$ has been replace by a Laplace operator if $k_M = 2$. As for $\mu'$, it is simply the restriction of $\mu$ to lines of $\cG'$. The bubbles of $\cG'$ are thought of as new effective interactions, obtained from contractions of vertex-connected graphs 
 . We can therefore factorize the sum in (\ref{s_eff_intermediate}) as:
\beq\label{s_eff_intermediate-2}
\cS_N^\rho = \sum_{\cG' , \mu' } \sum_{ (\cG , \mu , \hat{S}) , \mu \leq \rho \atop \phi_i (\cG , \mu , \hat{S}) = (\cG', \mu', \emptyset) }  \frac{1}{s(\cG)} \left( \prod_{b \in \cB(\cG)} ( - t_{b , \sup( i + 1 , i_b ( \cG , \mu ))}^{\rho} ) \right) \cA_{\cG , \mu , \hat{S}}^{eff , i} \,,
\eeq
with
\beq
\cA_{\cG , \mu , \hat{S}}^{eff , i} \equiv - \prod_{\hat{M} \in \hat{S}} (- \tau_{\hat{M}}) \prod_{m \in D_{\mu}^{i + 1} (\cG)} ( 1 - \tau_m ) \cA_{\cG , \mu} \,
\eeq
when $S \neq \emptyset$. In this last equation, one can act first with $\underset{\hat{M} \in \hat{S}}{\prod} (- \tau_{\hat{M}})$ on $\cA_{\cG , \mu}$. 
After reorganizing all the terms in (\ref{s_eff_intermediate-2}), it is easy to understand how the coupling constants at scale $i$ must be defined. For instance, assuming that $\cG$ has no quadratic divergences to avoid overloaded notations, one notices that (\ref{s_eff_intermediate-2}) reduces to (\ref{s_eff_i}) provided that:
\bes\label{rec}
\frac{1}{s(\cG')} \prod_{b' \in \cB(\cG')} ( - t_{b' , \sup( i , i_{b'} ( \cG' , \mu ))}^{\rho} ) &=& 
\sum_{ (\cG , \mu , S) , \mu \leq \rho \atop \phi_i (\cG , \mu , S) = (\cG', \mu', \emptyset) }  \frac{1}{s(\cG)} 
\prod_{b \in \cB(\cG)} ( - t_{b , \sup( i + 1 , i_b ( \cG , \mu ))}^{\rho} ) \nn \\
&& \prod_{m \in D_\mu^{i + 1} (\cG)} (1 - \tau_m ) \prod_{M \in S } (- \tau_{M}) \cA_{M , \mu} .
\ees
Exploiting the fact that the symmetry factor of a graph $g$, $s(g)$, is the number of permutations of its external legs leaving its colored structure invariant, we can write
\beq
s(\cG) = s(\cG') \prod_{M \in D_\mu^{i} (\cG) \setminus D_\mu^{i + 1} (\cG)} s(M)\,.
\eeq
It is a consequence of the factorization of the symmetry group associated to $\cG$ into a product of: the symmetry group associated to $\cG'$, and the symmetry groups associated to all the subgraphs $M$ inserted in $\cG'$.
We can therefore readily extract a solution for \eqref{rec}, in the form of a definition of the effective coupling constants at rank $i$:
\bes\label{scale_induction}
- t_{b , i}^{\rho} &=& - t_{b , i + 1}^{\rho} - \sum_{ (\cH , \mu , \{M\}) , \mu \leq \rho \atop  \phi_i (\cH , \mu , \{M\}) = ( b , \mu , \emptyset)} \frac{1}{s(\cH)} \left( \prod_{b' \in \cB(\cH)} ( - t_{b' , i_{b'} ( \cH , \mu )}^{\rho} )  \right) \nn \\
&\times& \; \left( \prod_{m \in D_{\mu} (\cH) \setminus \{M\}} ( 1 - \tau_m ) \right) (- \tau_{M} ) \, \cA_{M , \mu}\,.
\ees
This concludes the proof of the existence of the effective expansion when vertex-connectedness is used to organize the counter-terms. Had we relied on face-connectedness instead, equation (\ref{rec}) would have had coupling constants at scale $i+1$ also on the left-hand side, which would have made the whole scheme inconsistent with definition (\ref{scale_coupling}). 

\
By construction, $\{ t_{b , \rho}^{\rho} \, , \, b \in \cB \}$ are interpreted as the bare coupling constants. Accordingly, the renormalized constants are to be found at the other end of the scale ladder, namely in the last infrared slice, which corresponds to external legs. This is compatible with a renormalized coupling being defined as the full amputated function corresponding to the type of interaction considered. It can be checked that the latter amounts to set 
\beq
t_{b , ren}^{\rho} \equiv t_{b , -1}^{\rho} \,,
\eeq
and we could look for yet another reshuffling of the Schwinger functions, this time as multi-series in $\{ t_{b , ren}^{\rho} \}$. 

However, we follow a different strategy for the renormalized expansion, and close the vertex-connected parenthesis. Divergent graphs and contraction operators are now again understood in the face-connected sense we advocate in this paper. 
The natural induction with respect to scales (\ref{scale_induction})
is not available anymore, but can be partially encapsulated into the definition of counter-terms according to an induction with respect to the number of vertices in a diagram. This is nothing but the well-known Bogoliubov induction, which provides the infinite set of counter-terms to be added to the bare Lagrangian. In our case, the induction takes the form:
\beq\label{bogo}
c_\cG = \sum_{\{ g_1, \ldots , g_k \} } \prod_{m \in S} ( - \tau_m ) \cA_{m / \{ g \}} \prod_{i = 1}^{k} c_{g_i} \,, 
\eeq
where $\cG$ is a vertex-connected graph with all its face-connected components $m \in S$ divergent, $c_\cG$ its associated counter-term, and $\{ g_1 , \ldots , g_k \}$ runs over all possible families of disjoint vertex-connected divergent subgraphs of $\cG$, for which counter-terms $\{ c_{g_i} \}$ have been defined at an earlier stage of the induction. Note also that $\cA_{m / \{ g \}}$ is a short-hand notation for the part of the amplitude associated to $m$, once the $g_i$'s it contains have been contracted. Each of these counter-terms will contribute to the renormalization of a coupling constant (or several when quadratically divergent subgraphs are present). More precisely, one has:
\beq
t_b^\rho = t_{b , ren}^{\rho} + \sum_{n = 1}^{+ \infty} c_n^{b} (t_{b , ren}^{\rho})^n \,,
\eeq
where $c_n^b$ is the sum of all the counter-terms $c_\cG$ at order $n$ of the type $b$ \footnote{The same subtlety as in the previous discussion occurs for quadratically divergent contributions: one has to split the counter-terms $c_\cG$ into mass and wave-function contributions. We kept this step implicit here in order to lighten the notations.}.

It is then a well-known fact that a (formal) perturbative expansion in these new variables generates renormalized amplitudes expressed by Zimmermann's forest formula. The forests appearing in this formula can be called \textit{inclusion forests}, since they are sets of subgraphs $\cF$ with specific inclusion properties: for any $h_1 , h_2 \in \cF$, either $h_1$ and $h_2$ are line-disjoint, or one is included into the other. In this model, the relevant forests are inclusion forests of vertex-connected subgraphs with all their face-connected components divergent. Since each of the graphs in the forests is acted upon by a product of contraction operators $( - \tau_{m} )$, one for each face-connected component, and since in addition face-connectedness is a finer notion than vertex-connectedness, one can actually work with inclusion forests of face-connected subgraphs. Moreover, one needs to strengthen their definition by emphasizing face-disjointness rather than line-disjointness. To avoid any terminology confusion with the usual notion of inclusion forest, we call this new type of forests \textit{strong inclusion forests}.
 
\begin{definition}
Let $\cH \subset \cG$ be a subgraph.
A \textit{strong inclusion forest} $\cF$ of $\cH$ is a set of non-empty and face-connected subgraphs of $\cH$, such that: 
\begin{enumerate}[(i)]
\item for any $h_1 , h_2 \in \cF$, either $h_1$ and $h_2$ are line-disjoint, or one is included into the other; 
\item any line-disjoint $h_1 , \ldots , h_k \in \cF$ are also face-disjoint.
\end{enumerate} 
\end{definition}

A few remarks are in order. First, a strong inclusion forest $\cF$ is always an inclusion forest (condition (i)), hence the nomenclature. 
Second, it is important to understand that the Zimmermann forests relevant to our model are \textit{strong} inclusion forests. To this effect, notice for instance that if $g_1, \ldots ,  g_k \subset \cG$ appear in a same term of the Bogoliubov recursion (\ref{bogo}) for some intermediate subgraph $\cH \subset \cG$, then they form $k$ distinct face-connected components in $\cH$. The existence of such a subgraph is equivalent to the face-disjointness of $g_1 , \ldots ,  g_k$. Third, we point out that this strong notion of inclusion forest was already hinted at in \cite{COR}, through the specific case of \textit{meloforests}, which are 
forests of face-connected melopoles. In the wider context of the present paper, we modify slightly this terminology, and call \textit{meloforest} any strong inclusion forest of \textit{melonic} subgraphs. 
Finally, we simply call \textit{divergent forest} a strong inclusion forest of divergent subgraphs, and note $\cF_{D} (\cG)$ the set of divergent forests of a graph $\cG$ (including the empty forest). In the $\SU(2)$, $d=3$ model, divergent forests are also meloforests, but the converse is not true.

\

In this language, the renormalized amplitudes are related to the bare ones through:
\beq\label{a_ren}
\cA_\cG^{ren} = \left( \sum_{\cF \in \cF_D (\cG)} \prod_{m \in \cF} \left( - \tau_{m} \right) 
\right) \cA_\cG \,.
\eeq 

\

In order to prove the finiteness of the renormalized amplitudes, one should rely on the refined understanding of the divergences provided by the multi-scale expansion. To this effect, we will expand equation (\ref{a_ren}) over scales. For fixed scale attribution, contraction operators acting on high divergent subgraphs will provide a convergent power-counting. The sum over scales will finally be achieved thanks to an adapted classification of divergent forests, which is the purpose of the next section.

%

\subsection{Classification of forests}

Before discussing the classification in details, we point out an intriguing property of this model. In light of Proposition \ref{curiosity}, we notice that the melonic subgraphs of a given non-vacuum graph $\cG$ organize themselves into an inclusion forest. It would be therefore tempting to conjecture that they also form a strong inclusion forest (i.e. a meloforest). However, we can actually find examples of overlapping melonic subgraphs, showing that this is incorrect (see Figure \ref{overlap}). Still, and again by Proposition \ref{curiosity}, we notice that the union of two melonic subgraphs cannot be itself melonic, hence cannot be divergent. Therefore, if we restrict our attention to divergent forests, we can actually prove that the previous conjecture hold.

\begin{figure}[h]
\begin{center}
\includegraphics[scale=0.5]{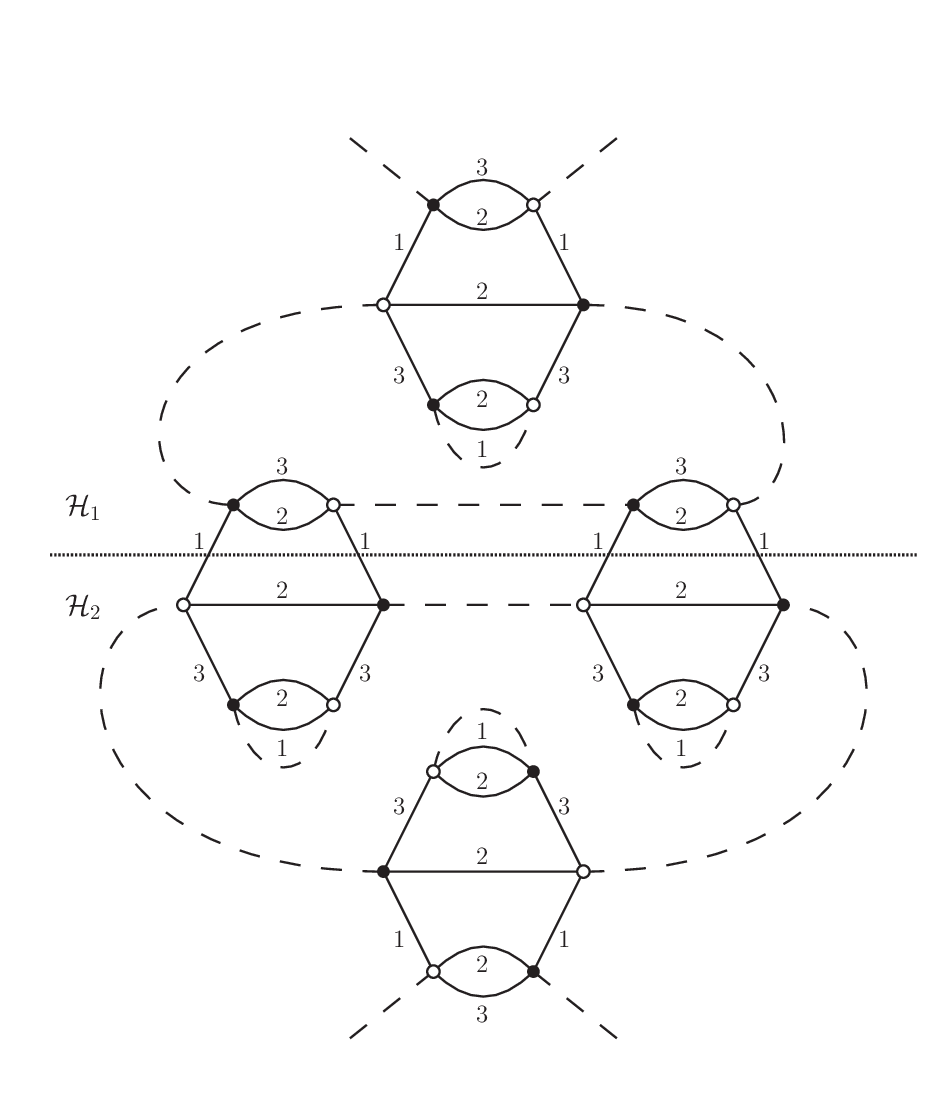}
\caption{Two melonic subgraphs $\cH_1$ and $\cH_2$ ($\cH_2$ being even divergent) which are face-connected in their union.}
\label{overlap}
\end{center}
\end{figure}

\begin{proposition}
Let $\cG$ be a non-vacuum graph. The set of divergent subgraphs of $\cG$ is a strong inclusion forest. We denote it $D(\cG)$.
\end{proposition}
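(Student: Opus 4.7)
The plan is to reduce the statement to Proposition~\ref{curiosity}, after first observing that in this $\SU(2)$, $d=D=3$ model every element of $D(\cG)$ is non-vacuum and melonic. This follows from the divergence budget: Table~\ref{div} together with the bound $\omega(\cH) \leq -N(\cH)/2$ for non-vacuum, non-melonic, face-connected $\cH$ (immediate from $\rho \leq -1$ and the expression $\omega = 3 - N/2 - 2n_2 - n_4 + 3\rho$) forces any divergent subgraph to satisfy $\rho = 0$, hence to be melonic; non-vacuum is automatic because $\cG$ is connected and every proper connected subgraph carries external legs.

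Condition (i) of the strong inclusion forest definition is then immediate from Proposition~\ref{curiosity}(i): any two $h_1, h_2 \in D(\cG)$ are melonic, hence line-disjoint or one is nested in the other.

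For condition (ii), I would argue by contradiction. Given pairwise line-disjoint $h_1, \ldots, h_k \in D(\cG)$ that fail to form $k$ face-connected components in their union $\cH = h_1 \cup \cdots \cup h_k$, I would extract a minimal subcollection whose union $\cH'$ is itself face-connected, with $m \geq 2$ members. If $\cH'$ is melonic, the ``moreover'' clause of Proposition~\ref{curiosity} (equivalently (ii) for $m=2$) applies directly: line-disjoint melonic pieces cannot be face-connected in a melonic union, contradiction. If $\cH'$ is non-melonic, I would adapt the external-face analysis underlying that ``moreover'' clause, using Corollary~\ref{faces_legs} to bound the number of external faces of each divergent member $h_{i_j}$ by $N(h_{i_j})/2$. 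Face-connectedness of $\cH'$ then forces enough external faces of the $h_{i_j}$ to become internal to $\cH'$ that the latter is left with no external faces, hence is vacuum, contradicting the non-vacuum nature of $\cG$.

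The main obstacle is closing condition (ii) in the non-melonic case, since Proposition~\ref{curiosity}'s ``moreover'' clause is stated under melonicity of the union. The resolution is to upgrade the external-face argument of that proof using the extra information that each $h_{i_j}$ is not merely melonic but divergent: Corollary~\ref{faces_legs} severely restricts the external-face structure of each $h_{i_j}$, so that face-connectedness of $\cH'$ must absorb the entire external face data, yielding a vacuum conclusion for $\cH'$ as in the original ``moreover'' argument, whether or not $\cH'$ itself is melonic.
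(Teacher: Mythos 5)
Your reduction of condition (i) to Proposition \ref{curiosity}(i), and your treatment of condition (ii) when the face-connected union $\cH'$ happens to be melonic, both match the paper. The gap is exactly where you locate it, and your proposed resolution does not close it. The external-face argument behind the ``moreover'' clause of Proposition \ref{curiosity} needs two ingredients that are unavailable once you drop melonicity of the union: first, that contracting a spanning tree $\cT_1$ of a member $h_1$ (chosen to avoid a line of a given external face $f_1$) leaves $\cH'/\cT_1$ face-connected — in the paper this comes from Proposition \ref{face_rosette}, extended to forests, applied to the \emph{melonic union}; second, that the rosette $h_1/\cT_1$ has a single external face, so that face-connection of the contracted union forces precisely that chosen $f_1$ to close up, and hence, $f_1$ being arbitrary, \emph{all} external faces of every member to become internal. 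Without melonicity of $\cH'$ the first ingredient simply fails (a connecting face may run through $h_1$ only via tree lines and the connection can be lost under contraction), and Corollary \ref{faces_legs} is of no help: bounding the number of external faces of each divergent member by $N/2$ does not prevent a configuration in which only a few external faces pair up into internal faces of $\cH'$, which already suffices for face-connectedness while leaving $\cH'$ non-vacuum. So ``face-connectedness forces enough external faces to become internal that $\cH'$ is vacuum'' is an assertion, not an argument.

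The paper handles the non-melonic case by an entirely different, quantitative route, and this is the idea missing from your proposal. Since the members are line-disjoint, their face-connection in the union can only be mediated by shared vertices; counting vertices of valency $2i$ shared by exactly $s$ members ($p_{2i}^{s}$) gives $N(\cH') = \sum_j N(h_j) - \sum_{i,s} 2i\,(s-1)p_{2i}^{s}$, while summing the divergence conditions $\omega(h_j)\geq 0$ bounds $\sum_j N(h_j)$. Combining the two yields a dichotomy: either $\sum_{i,s}(s-1)p_{2i}^{s}\geq k$, in which case $N(\cH')\leq 0$ and $\cH'$ is vacuum (contradicting that $\cG$ is non-vacuum), or the incidence structure between members and shared vertices has at most a tree's worth of links; strict inequality there contradicts vertex-connectedness of $\cH'$, and equality makes that abstract incidence graph a tree, so spanning trees of the $h_j$ assemble into a spanning tree $\cT$ of $\cH'$ with $\cH'/\cT$ a melopole — i.e.\ $\cH'$ is melonic after all, and Proposition \ref{curiosity} gives the contradiction. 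In other words, the non-melonic case is not handled by adapting the external-face absorption; it is shown never to occur, using the divergence ($\omega\geq 0$) of each member in an essential, quantitative way. Your proposal uses divergence only through Corollary \ref{faces_legs}, which is too weak for this purpose.
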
 
\begin{proof}
Thanks to proposition \ref{curiosity}, we already know that $D(\cG)$ is an inclusion forest. 
To conclude, we need to show that there exists no subset of line-disjoint subgraphs in $D(\cG)$ which are not also face-disjoint. If this would not be the case, we could certainly find line-disjoint subgraphs $\cH_1 , \ldots , \cH_k \in D(\cG)$ which are face-connected in their union. 
This face-connectedness is necessarily ensured by external faces of the $\cH_1 , \ldots , \cH_k$ which arrange together into internal faces of $\cH_1 \cup \dots \cup \cH_k$. Because their intersection is empty, this can only be achieved if some vertices of $\cH_1 \cup \dots \cup \cH_k$ are shared by several subgraphs $\cH_1 , \ldots , \cH_k$. Let us call $p_{2i}^s$ ($1 \leq i \leq 3$, $2 \leq s \leq 3$) the number of vertices of valency $2i$ which are shared by exactly $s$ subgraphs $\cH_k$. They can be related to the valency of $\cH_1, \ldots , \cH_k$ and $\cH_1 \cup \dots \cup \cH_k$ by the formula:
\beq
N (\cH_1 \cup \dots \cup \cH_k) = \sum_{j = 1}^{k} N (\cH_j) - \sum_{i =1}^3 \sum_{s = 2}^{3} (2 i) (s - 1) p_{2 i}^s \,.
\eeq
This just says that when summing all the individual valencies, one needs to subtract all the contributions of external legs of the connecting vertices, which have been over counted, in order to find the valency of the full subgraph. If a connecting vertex $v$ is connected to exactly $s$ subgraphs $\cH_k$, its external legs have been counted exactly $s - 1$ too many times. Furthermore, the conditions $\omega(\cH_j) \geq 0$ can be summed to yield\footnote{The $\rho$ contributions are all $0$ since $\cH_1, \ldots , \cH_k$ are non-vacuum and melonic}:
\beq
\sum_{j = 1}^k N(\cH_j) \leq 6 k - 4 \sum_{j = 1}^k n_2 (\cH_j) - 2 \sum_{j = 1}^k n_4 (\cH_j) \,. 
\eeq
Remarking that $\sum_{s = 2}^3 s p_{2i}^s \leq \sum_{j = 1}^k n_{2i} (\cH_j)$ for all $i$, we can finally deduce from the two previous inequalities that:
\beq
N(\cH_1 \cup \dots \cup \cH_k) \leq 6 k  - 6 \sum_{i = 1}^{3} \sum_{s = 2}^3 (s -1) p_{2 i}^{s} - 2 \sum_{s = 2}^3 p_{4}^{s} - 4 \sum_{s = 2}^3 p_{2}^{s}\,.
\eeq
We immediately notice that whenever
\beq
\sum_{i = 1}^{3} \sum_{s = 2}^3 (s -1) p_{2 i}^{s} \geq k \,,
\eeq
$\cH_1 \cup \dots \cup \cH_k$ is vacuum, which contradicts the hypothesis that $\cG$ is not. If the previous inequality is not verified, one has instead
\beq\label{tree_ineq}
\sum_{i = 1}^{3} \sum_{s = 2}^3 s p_{2 i}^{s} \leq k + \sum_{i = 1}^{3} \sum_{s = 2}^3 p_{2 i}^s - 1 \,.
\eeq
In order to understand the meaning of this inequality, let us introduced an abstract graph $G$: its nodes are the subgraphs $\cH_1 , \ldots , \cH_k$ and all the vertices shared by more than one subgraph; two nodes are linked by one line in $G$ if and only if one of them is a subgraph, and the other a vertex contained in this subgraph. In equation (\ref{tree_ineq}), on the left-hand side one finds the number of links in $G$, and on the right side its number of nodes minus $1$. Therefore, when the inequality is saturated $G$ is a tree, and when the inequality is strict it is not connected. The latter case is contradictory with our hypotheses. As for when $G$ is a tree, one can find spanning trees $\cT_1 \subset \cH_1 , \ldots , \cT_k \subset \cH_k$ such that there union $\cT \equiv \cT_1 \cup \dots \cup \cT_k$ is a spanning tree of $\cH_1 \cup \dots \cup \cH_k$. But in such a situation, $( \cH_1 \cup \dots \cup \cH_k ) / \cT = (\cH_1 / \cT_1 ) \cup \dots \cup ( \cH_k / \cT_k)$ would be a melopole, contradicting the fact that $\cH_1 \cup \dots \cup \cH_k$ cannot be melonic (see proposition \ref{curiosity}).
\end{proof}

At this stage, we tend to see this result as a curiosity of the specific model we are considering, and only a detailed study of other just renormalizable models of this type could confirm it to have a wider validity. What is sure is that it is by no means essential to the classification of forests. Still, it allows significant simplifications, which we will take advantage of in the following, in the notations and proofs, since the divergent forests of $\cG$ are exactly the subsets of $D(\cG)$.

\

Recall that $D_\mu (\cG)$ denotes the set of truly divergent subgraphs in $\cG$ for the scale attribution $\mu$. It is a divergent forest, as we already knew from the fact that, modulo $\cG$ itself, it consists exactly in the subgraphs appearing in the GN tree of $(\cG , \mu)$. Furthermore, we now know it to be a subforest of $D(\cG)$. We call its complementary part
\beq
I_\mu (\cG) \equiv D(\cG) \setminus D_\mu (\cG)
\eeq
the \textit{innofensive part} of $D (\cG)$ at scale $\mu$, since it is the set of divergent subgraphs of $\cG$ which do not appear in the GN tree of $(\cG , \mu)$, and therefore do not contribute to divergences at this scale. 

In this model, where the disjoint decomposition $D(\cG) = I_\mu (\cG) \cup D_\mu (\cG)$ involves three sets which are themselves divergent forests, the classification of forest is as trivial as saying that choosing a forest in $D(\cG)$ amounts to choosing a forest in $I_\mu (\cG)$ and a forest in $D_\mu (\cG)$, namely:
\beq
\cF_D (\cG) = \{ \cF_1 \cup \cF_2 \vert \cF_1 \subset I_\mu (\cG) \, , \cF_2 \subset D_\mu (\cG)\}\,.
\eeq
We can use this simple fact in the decomposition of equation (\ref{a_ren}) over scale attributions
\bes
\cA_\cG^{ren} &=& \sum_{\mu} \sum_{\cF \in \cF_D (\cG)} \prod_{m \in \cF} ( - \tau_{m} ) \cA_{\cG, \mu } \\
&=& \sum_{\mu} \sum_{\cF_1 \subset I_\mu (\cG)} \sum_{\cF_2 \subset D_\mu (\cG)} \prod_{m \in \cF_1 \cup \cF_2} ( - \tau_{m} ) \cA_{\cG, \mu }\,.
\ees
We then exchange the first two sums:
\beq
\cA_\cG^{ren} = \sum_{\cF_1 \subset D(\cG)} \, \sum_{\mu \vert \cF_1 \subset I_\mu (\cG)} \prod_{m \in \cF_1} ( - \tau_{m} ) \sum_{\cF_2 \subset D_\mu (\cG)} \prod_{h \in \cF_2} ( - \tau_{m} ) \cA_{\cG, \mu }\,.
\eeq
We can finally reorganize the contraction operators associated to graphs of $D_\mu (\cG)$ to obtain:
\bes
\cA_\cG^{ren} &=& \sum_{\cF \subset D(\cG)} \cA_{\cG , \cF}^{ren}\,,\\
\cA_{\cG , \cF}^{ren} &\equiv& \sum_{\mu | \cF \subset I_\mu (\cG)} \prod_{m \in \cF} (- \tau_m ) \prod_{h \in D_\mu (\cG)} (1 - \tau_h ) \cA_{\cG , \mu}\,.
\ees

This way of splitting the contributions of the different forests according to the scales is in phase with the multi-scale analysis. We shall explain in the next two paragraphs why $\cA_{\cG , \cF}^{ren}$ is convergent. To this effect, we first use the contraction operators indexed by elements of $D_\mu (\cG)$ to show that the renormalized power-counting is improved with respect to the bare one, in such a way that all divergent subgraphs become power-counting convergent. In a second step, we will explain how these decays can actually be used to perform the sum over scale attributions.

\subsection{Convergent power-counting for renormalized amplitudes}

We fix a divergent forest $\cF \in D(\cG)$ and a scale attribution $\mu$ such that $\cF \subset I_\mu (\cG)$. We want to find a multi-scale power-counting bound for 
\beq
\prod_{m \in \cF} (- \tau_m ) \prod_{h \in D_\mu (\cG)} (1 - \tau_h ) \cA_{\cG , \mu}\,.
\eeq
Since contraction operators commute, we are free to first act on $\cA_{\cG , \mu}$. In order to properly encode the two possible Taylor orders in $2$-point divergences, we should reintroduce the generalized notations $\hat{m}$ and $\tau_{\hat{m}}$, together with a generalized notion of divergent forest $\hat{\cF}$. Since the argument we are about to make is insensible to such subtleties, and its clarity would be somewhat affected by the heavy notations, we decide instead to assume that $\cF$ does not contain any quadratically divergent subgraph. It is easily understood that the action of the product of contraction operators disconnects parts of the amplitudes, yielding a product of pieces of the integrand integrated on their internal variables. The exact formula is
\beq
\prod_{m \in \cF} \tau_m \cA_{\cG , \mu} = \cA_{\cG / A_\cF (\cG) } \prod_{m \in \cF} \nu_\mu ( m / A_\cF (m) ) \,,
\eeq
where $A_\cF (m) \equiv \{ g \subset m \vert g \in \cF \}$ is the set of \textit{descendants} of $m$ in $\cF$, and $\nu_\mu ( m / A_\cF (m) )$ is the \textit{amputated amplitude}\footnote{We mean by that that the contributions of external faces are discarded.}
of $m$ contracted by its descendants. The power-counting of each subgraph appearing on the right-hand side of this formula is known, yielding:
\beq
\vert \prod_{m \in \cF} (- \tau_m ) \cA_{\cG , \mu} \vert
\leq K^{L(\cG)} \prod_{m \in \cF \cup \{ \cG \}} \prod_{(i , k)} M^{\omega[ (m / A_\cF (m))_i^{(k)}]} \,. 
\eeq
As expected, we see that the contraction operators associated to inoffensive forests does not improve the power-counting, and are in a sense useless.

On the other hand, we have also seen in section \ref{sec:remainders}, that $(1 - \tau_h )$ operators effectively render subgraphs $h \subset D_\mu (\cG)$ power-counting convergent. We can use this improved power-counting in each $m / A_\cF (m)$ to prove the following proposition:
\begin{proposition}
There exists a constant $K$, such that for any divergent forest $\cF \in D( \cG )$:
\beq\label{improved}
\vert \cA_{\cG , \cF}^{ren} \vert \leq K^{L(\cG)} \sum_{\mu \vert \cF \subset I_\mu (\cG)} \prod_{m \in \cF \cup \{ \cG \}} \prod_{(i , k)} M^{\omega'[ (m / A_\cF (m))_i^{(k)} ]} \, ,
\eeq
where
\beq 
\omega'[ ( m / A_\cF (m) )_i^{(k)} ] = \min \{ -1 , \, \omega[ (m / A_\cF (m))_i^{(k)} ] \}
\eeq
except when $m \in \cF$ and $(m / A_\cF (m))_i^{(k)} = m / A_\cF (m)$, in which case $\omega'[ m / A_\cF (m) ] = 0$.
\end{proposition}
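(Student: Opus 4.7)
The plan is to combine the factorization of the Taylor contractions $\tau_m$ for $m \in \cF$ with the improved convergence estimates of Section \ref{sec:remainders} applied to the high divergent subgraphs $h \in D_\mu(\cG)$. First I would use the strong inclusion forest property of $D(\cG)$ to organize $\cF \cup D_\mu(\cG)$ into a hierarchy: for any $h \in D_\mu(\cG)$ and $m \in \cF$, they are either line-disjoint or nested. This allows factorizing $\prod_{m \in \cF} (-\tau_m)$ as in the expression preceding the proposition, disconnecting the bare integrand into amputated contracted pieces $\nu_\mu(m/A_\cF(m))$ for $m \in \cF$, together with the outer contracted graph $\cA_{\cG/A_\cF(\cG),\mu}$. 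Since each $h \in D_\mu(\cG)$ lies entirely within exactly one of these pieces, the residual action of $\prod_h (1-\tau_h)$ factorizes as well over $m \in \cF \cup \{\cG\}$.

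Second, within each piece $m/A_\cF(m)$, I would apply the multi-scale bound (\ref{fund}) sharpened by the $(1-\tau_h)$ operators for the high divergent subgraphs of the piece. The crucial input is the remainder estimate of Section \ref{sec:remainders}: for each high divergent $h$ at internal scale $j$ with external scale $e_h < j$, the Taylor remainder carries an extra decay bounded by $M^{(\omega(h)+1)(e_h - j)}$, so that its net contribution to the power-counting is at most $M^{\omega(h) + (\omega(h)+1)(e_h - j)} \leq M^{-1}$. This is exactly what produces $\omega'[(m/A_\cF(m))_i^{(k)}] \leq -1$ for every non-trivial face-connected component $(m/A_\cF(m))_i^{(k)} \subsetneq m/A_\cF(m)$ arising in the GN decomposition of the piece.

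Third, the exception for the trivial outermost component $(m/A_\cF(m))_i^{(k)} = m/A_\cF(m)$ when $m \in \cF$ reflects the fact that $\tau_m$ extracts a local counter-term with no decay: up to bounded prefactors given by the moments of the melonic kernels computed in Section \ref{sec:contraction}, it simply produces an amputated amplitude of the contracted subgraph. Thus this component contributes at most $M^0 = 1$, which motivates the convention $\omega'[m/A_\cF(m)] = 0$. Assembling the per-piece bounds and taking the product over $m \in \cF \cup \{\cG\}$ then yields the claimed estimate, after summing over scale attributions $\mu$ compatible with $\cF \subset I_\mu(\cG)$.

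The main obstacle will be to justify cleanly that the improvement of Section \ref{sec:remainders} applies unchanged inside each contracted piece $m/A_\cF(m)$, where $h$ sits with some of its own descendants in $\cF$ already contracted away. One must verify that the external scales entering the decay factor $M^{(\omega(h)+1)(e_h - j)}$ are preserved by these inner contractions, i.e.\ that the external faces of $h \subset \cG$ map to external faces of $h$ viewed inside $m/A_\cF(m)$ of the same external scale --- which follows from the hereditary nature of face structures under contraction combined with the fact that $\cF \cap D_\mu(\cG) = \emptyset$. A secondary technical point is to propagate the decomposition $\tau_m = \tau_m^{(0)} + \tau_m^{(2)}$ through the bookkeeping for quadratically divergent two-point subgraphs, checking that wave-function counter-term insertions enter the bound neutrally, exactly as in the extension of the power-counting to graphs containing $W(\cG)$ insertions performed above.
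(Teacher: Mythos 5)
Your proposal follows essentially the same route as the paper's proof: the paper groups the useful operators $(1-\tau_h)$, $h \in D_\mu(\cG)$, by their ancestor $B_\cF(h)$ in $\cF \cup \{\cG\}$, factorizes the amplitude into the contracted graph $\cA_{\cG / A_\cF(\cG), \mu}$ times the amputated pieces $\nu_\mu(m / A_\cF(m))$, invokes the remainder bounds of Section \ref{sec:remainders} to get degree at most $-1$ for the high divergent subgraphs inside each piece, and assigns degree $0$ to the amputated roots $m / A_\cF(m)$. The only notable difference is that the paper sidesteps your second technical point by assuming, just before the proposition, that $\cF$ contains no quadratically divergent subgraph, relegating the $\tau_m^{(0)} + \tau_m^{(2)}$ bookkeeping to the generalized $\hat{m}$, $\tau_{\hat{m}}$ notation introduced earlier.
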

\begin{proof}
If $m$ is compatible with $\cF$ (i.e. $\cF \cup \{ m \}$ is also a strong inclusion forest), we denote by $B_\cF (m)$ the ancestor of $m$ in $\cF \cup \{ m \}$. This notion allows to decompose the product of useful contraction operators as
\beq
\prod_{h \in D_\mu (\cG)} (1 - \tau_h ) = \prod_{m \in \cF \cup \{ \cG\}} \prod_{h \in D_\mu (\cG) \atop B_\cF (h ) = m} (1 - \tau_h )\,.
\eeq
When multiplying this expression by $\underset{m \in \cF}{\prod} (- \tau_m )$, one obtains
\bes
\vert \prod_{m \in \cF} (- \tau_m ) \prod_{h \in D_\mu (\cG)} (1 - \tau_h ) \cA_{\cG , \mu} \vert &=& \left( \prod_{h \in D_\mu (\cG) \atop B_\cF ( h ) = \cG} (1 - \tau_{h / A_\cF (\cG)} ) \vert \cA_{\cG / A_\cF (\cG) , \mu} \vert \right) \nn \\
&& \times \left( \prod_{m \in \cF } \prod_{h \in D_\mu (\cG) \atop B_\cF (h ) = m} (1 - \tau_{h / A_\cF (m)} ) \, \vert \nu_\mu ( m / A_\cF (m) ) \vert \right)
\ees
We recognize in this formula all the useful contractions associated to high divergent subgraphs in each $m / A_\cF (m)$, for which the new degree is at most $-1$, except possibly for the roots $m = m / A_\cM (m)$\footnote{This root can indeed itself be divergent.} when $m \neq \cG$. But because the corresponding amplitudes are amputated, they contribute to the power-counting with a degree $0$. 
\end{proof}

\subsection{Sum over scale attributions}

The improved power-counting (\ref{improved}) allows to decompose renormalized amplitudes into fully convergent\footnote{We call fully convergent a graph whose face-connected subgraphs all have convergent power-counting.} pieces associated to the contracted subgraphs $m / A_\cF (m)$. We therefore decompose the task of summing over scale attributions into two steps: we will first recall how this can be performed maintaining a bound in $K^n$ for a fully convergent graph $\cG$; we will then explain how this generalizes to arbitrary renormalized amplitudes, the price to pay being possible factorial growths in $n$ due to contraction operators associated to the inoffensive forests $I_\mu (\cG)$.

\

Let $\cG$ be a fully convergent, vertex-connected, and non-vacuum graph. For any face-connected subgraph $\cH \subset \cG$ such that $\cF(\cH) \neq 0$, we have seen that
\beq
\omega (\cH ) \leq -\frac{N(\cH)}{2}\,.
\eeq
Moreover, $\omega(\cH) = -2$ and $N(\cH) \leq 10$ when $F(\cH) = 0$, therefore one can use a slower decay in $-N (\cH) / 5$ and write
\beq\label{cv_pc}
\cA_{\cG , \mu} \leq K^{L (\cG)} \prod_{(i , k)} M^{- N (\cG_i^{(k)}) / 5}
\eeq
for any scale attribution $\mu$. In order to extract a sufficient decay in $\mu$ from (\ref{cv_pc}), it is crucial to focus on the scales associated to the vertices of $\cG$. Let us therefore introduce $L_b (\cG)$ the set of external lines of a bubble $b \in \cB (\cG)$, and define:
\beq
i_b (\mu) = \sup_{l \in L_b (\cG)} i_l (\mu) \, , \qquad e_b (\mu) = \inf_{l \in L_b (\cG)} i_l (\mu)\,.
\eeq  
The main interest of these two scales lies in the two following facts: a) $b$ touches a high subgraph $\cG_i^{(k)}$ if and only if $i \leq i_b (\mu)$; b) moreover, $b$ is an external vertex of $\cG_i^{(k)}$ if and only if $e_b (\mu) < i \leq i_b (\mu)$. Accordingly, and because $b$ touches at most $6$ high subgraphs, one can distribute a fraction of the decay in the number of lines of high subgraphs to the vertices of $\cG$:
\beq
\prod_{(i , k)} M^{- N (\cG_i^{(k)}) / 5} \leq \prod_{(i , k)} \prod_{b \in \cB(\cG_i^{(k)}) \vert e_b (\mu) < i \leq i_b (\mu)} M^{- 1 / 30} \,.
\eeq
Exchanging the two products yields the interesting bound:
\beq
\cA_{\cG , \mu} \leq K^{L (\cG)} \prod_{b \in \cB (\cG)} \prod_{(i , k) \vert e_b (\mu) < i \leq i_b (\mu)} M^{- \frac{i_b (\mu) - e_b (\mu)}{30}}  \,.
\eeq
Finally, we can distribute the decays among all possible pairs of external legs of each vertex. Since there are at most $6 \times 5 / 2 = 15$ such pairs, we get:
\beq
\cA_{\cG , \mu} \leq K^{L (\cG)} \prod_{b \in \cB (\cG)} \prod_{(l, l') \in L_b (\cG)  \times L_b (\cG) } M^{- \frac{\vert i_l (\mu) - i_l' (\mu) \vert }{450}}  \,.
\eeq
This bound implies the finiteness of $\cA_\cG$. To see this, we can choose a total ordering of the lines $L(\cG) = \{ l_1 , \ldots , l_{L(\cG)} \}$ such that $l_1$ is hooked to an external vertex of $\cG$, and $\{ l_1 , \ldots , l_m \}$  is connected for any $m \leq L(\cG)$. This allows to construct a map $j'$ on the indices $2 \leq j \leq L(\cG)$, such that $1 \leq j'(j) < j$, and\footnote{By convention, one also defines $i_{l_{j'(1)}} = - 1$.}:
\beq
\prod_{b \in \cB (\cG)} \prod_{(l, l') \in L_b (\cG)  \times L_b (\cG) } M^{- \frac{\vert i_l (\mu) - i_l' (\mu) \vert }{450}} \leq \prod_{j = 1}^{L(\cG)} M^{- \vert i_{l_j} (\mu) - i_{l_{j'(j)}} (\mu) \vert / 450 } \,.
\eeq
The sum over $\mu = \{ i_{l_1}, \ldots , i_{l_{L(\cG)}} \}$ of such a sum is uniformly bounded by a constant to the power $L(\cG)$, which proves the following theorem:
\begin{theorem}
There exists a constant $K>0$ such that, for any fully convergent, vertex-connected, and non-vacuum graph $\cG$:
\beq
\cA_\cG \leq K^{L(\cG)}\,.
\eeq
\end{theorem}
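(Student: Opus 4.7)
The plan is to start from the per-scale bound derived in the paragraph immediately preceding the theorem,
\beq
\cA_{\cG, \mu} \leq K^{L(\cG)} \prod_{j=1}^{L(\cG)} M^{-\vert i_{l_j}(\mu) - i_{l_{j'(j)}}(\mu) \vert / 450}\,,
\eeq
and to show that summing this bound over all scale attributions $\mu = (i_{l_1}, \ldots, i_{l_{L(\cG)}}) \in \mathbb{N}^{L(\cG)}$ produces at most another factor of the form $K_1^{L(\cG)}$, yielding a final bound of the announced shape.

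First I would check that the data $(l_1, \ldots, l_{L(\cG)})$ and $j'$ used in the bound actually exist. Since $\cG$ is vertex-connected and non-vacuum, at least one vertex carries an external leg; choosing $l_1$ incident to such a vertex, one can then enumerate the remaining lines greedily so that each $l_j$ with $j \geq 2$ shares a vertex with some earlier $l_{j'(j)}$, $1 \leq j'(j) < j$. This realises $j'$ as the parent map of a rooted spanning tree on $L(\cG)$, with the convention $i_{l_{j'(1)}} := -1$ providing a harmless fictitious root.

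Second, I would evaluate $\sum_\mu$ of the right-hand side by iterating on $j$ in decreasing order, from $j = L(\cG)$ down to $j = 1$. The key observation is that in the product above the variable $i_{l_j}$ appears in exactly one factor indexed by $j$ (pairing it with its parent $i_{l_{j'(j)}}$) together with those factors indexed by its children $k$ (i.e. $k > j$ with $j'(k) = j$). By the time step $j$ is reached, all children scales have already been summed out, so only the factor indexed by $j$ still constrains $i_{l_j}$. The elementary estimate
\beq
\sum_{n \in \mathbb{N}} M^{-\vert n - p \vert / 450} \leq K_1
\eeq
with $K_1 > 0$ independent of $p \in \mathbb{Z}$ (splitting into two geometric series) then produces an extra factor $K_1$ and removes $i_{l_j}$ from the remaining expression. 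Iterating $L(\cG)$ times multiplies the initial prefactor $K^{L(\cG)}$ by $K_1^{L(\cG)}$, which proves the claim with constant $K \cdot K_1$.

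No serious obstacle is expected: the whole analytic subtlety has already been handled in the per-scale bound, and what remains is a standard tree summation argument. The only mild points requiring attention are the existence of the enumeration $(l_j, j'(j))$, which uses vertex-connectedness together with the presence of at least one external leg, and the uniformity of the constant $K_1$ in $p$, both of which are elementary.
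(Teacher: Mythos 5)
Your proposal is correct and takes essentially the same route as the paper: the paper derives exactly the per-scale bound you start from and then simply asserts that the sum over $\mu$ is ``uniformly bounded by a constant to the power $L(\cG)$'', while your decreasing-order (leaf-to-root) summation using the uniform estimate $\sum_{n \in \mathbb{N}} M^{-\vert n - p \vert /450} \leq K_1$ is precisely the standard justification of that assertion. Your verification that the ordering $(l_j)$ and the parent map $j'$ exist also matches the paper's construction (a connected enumeration starting from a line hooked to a vertex carrying an external leg, with the convention $i_{l_{j'(1)}} = -1$), so there is no gap.
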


\

We can apply the same reasoning to the general power-counting (\ref{improved}). Let us fix $\cF$ a divergent forest. The only difference is that graphs $g / A_\cF (g)$ do not have any decay associated to their external legs. One therefore gets one additional scale index to sum over per element of $\cF$. But we can bound them by the maximal scale $i_{max} (\mu)$ in $\mu$ and write:
\bes
\vert \cA_{\cG , \cF}^{ren} \vert &\leq& K^{L(\cG)} \sum_{\mu \vert \cF \subset I_\mu (\cG)} \prod_{m \in \cF \cup \{ \cG \}} \prod_{(i , k)} M^{\omega'[ (m / A_\cF (m))_i^{(k)} ]} \\
&\leq& {K_1}^{L(\cG)} \sum_{i_{max} (\mu) } (i_{max} (\mu))^{\vert \cF \vert} M^{\delta i_{max} (\mu)} \, , 
\ees
where $\delta > 0$ and $K_1 > 0$ are some constants, and $\vert \cF \vert$ is the cardinal of $\cF$. The last sum over $i_{max} (\mu)$ can finally be bounded by $|\cF|! K^{|\cF|}$ for some constant $K > 0$. The final sum on $\cF \subset D(\cG)$ can be absorbed into a redefinition of the constants, since the number of divergent forests is simply bounded by $2^{| D(\cG) |}$. This concludes the proof of the BPHZ theorem.

\begin{theorem}
For any vertex-connected and non-vacuum graph $\cG$, the renormalized amplitude $\cA_\cG^{ren}$ has a finite limit when the cut-off $\Lambda$ is sent to $0$. More precisely, there exists a constant $K>0$ such that the following uniform bound holds:
\beq
\vert \cA_\cG^{ren} \vert \leq K^{L(\cG)} \vert D(\cG) \vert ! 
\eeq
\end{theorem}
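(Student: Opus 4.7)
The plan is to assemble the pieces already laid out in the multi-scale analysis: the forest decomposition of $\cA_\cG^{ren}$, the improved power-counting of Proposition on $\cA_{\cG,\cF}^{ren}$, and the summation over scale attributions developed in the preceding paragraphs. Concretely, I would start from
\beq
\cA_\cG^{ren} \;=\; \sum_{\cF \subset D(\cG)} \cA_{\cG,\cF}^{ren}, \qquad
\cA_{\cG,\cF}^{ren} \;=\; \sum_{\mu \mid \cF \subset I_\mu(\cG)} \prod_{m \in \cF}(-\tau_m)\prod_{h \in D_\mu(\cG)}(1-\tau_h)\,\cA_{\cG,\mu},
\eeq
and apply the bound \eqref{improved} to each $\cA_{\cG,\cF}^{ren}$. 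The $(1-\tau_h)$ operators for $h \in D_\mu(\cG)$ turn every high face-connected piece of each contracted subgraph $m/A_\cF(m)$ into a power-counting-convergent object (degree $\le -1$), so that the only face-connected subgraphs of nontrivial degree $0$ are the roots $m/A_\cF(m)$ for $m \in \cF$, and these appear amputated (external face contributions discarded).

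Next I would sum over the scale attribution $\mu$ by the standard multi-scale technique recalled in the excerpt: on each amputated/contracted convergent piece, I would use the $F \ge 1$ power-counting $\omega \le -N/2$ (and the purely combinatorial $\omega = -2$, $N \le 10$ bound when $F = 0$) to distribute a slow decay, say $M^{-|i_l - i_{l'}|/450}$, across all pairs of external legs at each bubble. Ordering the lines along a spanning connectivity ordering $l_1,\dots,l_{L(\cG)}$ and telescoping the decays, this collapses the sum over the scale indices of lines that live below the maxima of ancestors in $\cF$ into a constant to the power $L(\cG)$, exactly as in the fully convergent case already treated.

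What remains are the scales associated with the roots of $\cF$, for which there is no leg decay available. Each such root is bounded trivially by the maximum scale $i_{max}(\mu)$, and this produces a polynomial factor $(i_{max}(\mu))^{|\cF|}$, to be summed against the geometric decay $M^{\delta\, i_{max}(\mu)}$ coming from the root $\cG/A_\cF(\cG)$. This sum contributes $|\cF|!\,K_2^{|\cF|}$, and then the outer sum over divergent forests $\cF \subset D(\cG)$ is at most $2^{|D(\cG)|} \le K_3^{L(\cG)}$ terms (since $|D(\cG)|$ is bounded by the number of face-connected subgraphs of $\cG$, itself exponential in $L(\cG)$). Absorbing all these constants into a single $K$ yields the announced uniform bound
\beq
|\cA_\cG^{ren}| \;\le\; K^{L(\cG)}\,|D(\cG)|!,
\eeq
and sending $\Lambda \to 0$ is then straightforward because the bound is uniform in the cut-off.

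The main obstacle I would expect to address carefully is bookkeeping for the quadratically divergent two-point subgraphs: the operator $\tau_m$ on such an $m$ splits as $\tau_m^{(0)} + \tau_m^{(2)}$, so strictly speaking the forest sum should be indexed by enriched forests $\hat\cF$ that record, for each quadratically divergent element, whether a mass or a wave-function counter-term is extracted. One has to check that the improved power-counting of Proposition still applies in this extended setting (the $\tau_m^{(2)}$ case producing a Laplacian insertion that, as shown in the multi-scale proposition, is neutral to the power-counting), and that the number of enrichments of a given $\cF$ is at most $2^{|\cF|}$, which is again absorbed into the $K^{L(\cG)}$ prefactor. Once this bookkeeping is in place, the remaining steps are mechanical applications of results proven earlier in the paper.
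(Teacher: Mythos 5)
Your proposal is correct and follows essentially the same route as the paper: the forest decomposition of $\cA_\cG^{ren}$ into the $\cA_{\cG,\cF}^{ren}$, the improved power-counting with degree $\leq -1$ everywhere except the amputated roots $m/A_\cF(m)$, the scale sum via leg-pair decays $M^{-|i_l-i_{l'}|/450}$, the bound of root scales by $i_{max}(\mu)$ yielding the $|\cF|!\,K^{|\cF|}$ factor, the $2^{|D(\cG)|}$ bound on the forest sum, and the $\hat\cF$ bookkeeping for quadratically divergent two-point subgraphs. The only quibble is your parenthetical justification of $2^{|D(\cG)|}\leq K_3^{L(\cG)}$: this needs $|D(\cG)|$ to be \emph{linear} in $L(\cG)$ (which follows since $D(\cG)$ is a strong inclusion forest, hence a laminar family of line subsets), not merely "exponential in $L(\cG)$" as you wrote.
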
 
While this theorem proves the renormalizability of the model, it does not preclude the existence of renormalons, since the uniform bound we could find is only factorial. However, we notice that such an unreasonable growth can only exist because of the contractions operators associated to subforest of $I_\mu (\cG)$. On the contrary, if we were to focus on the effective expansion, where no unnecessary counter-terms enter the definitions, one would find instead a uniform bound like the one for fully convergent graphs.

\section*{Conclusion}
\addcontentsline{toc}{section}{Conclusion}

Let us summarize the main achievements of this article. We focused our attention on a particular class of group field theories, named {\it tensorial group field theories} which are characterized by: a) an infinite set of interactions, labeled by colored bubbles, based on a tensorial symmetry principle; b) non-trivial propagators implementing a gauge invariance condition on the fields, supplemented with a Laplace operator which softly breaks the tensorial invariance of the interaction. The first ingredient is suggested by recent work on tensor models and characterizes also the effective theory obtained by integrating out fields in colored group field theories based on simplicial interactions. The gauge invariance condition turns the Feynman amplitudes into lattice gauge theories and is one of the two main ingredients of group field theories for gravity (the other being, in 4d, the so-called simplicity constraints). The Laplace operator launches the renormalization group flow as seems also to be produced by quantum corrections in simpler topological models with ultralocal propagators. The rank $d$ of the tensors, as well as the dimension $D$ of the compact group indexing the tensors, were in a first stage kept arbitrary. A detailed analysis of the power-counting of such models allowed to derive stringent restrictions on $d$ and $D$ in order to achieve renormalizability. In particular, it was shown that only five combinations of such parameters can potentially support (interacting) just-renormalizable models. Among these, only $(d , D) = (3 , 3)$ can be directly related to a spacetime theory, namely topological BF theory or 3d quantum gravity, with $G$ the symmetry group for Lorentzian or Riemannian spaces of dimension $d$. In particular, the case $(4,6)$, that would correspond to the 4d topological BF theory from which one obtains gravitational models by imposing simplicity constraints, is found to be non-renormalizable.  We then went on to study in detail this particular model, in the Riemannian case $G = \SU(2)$. In order to classify the divergences, proven to be all melonic, we used multi-scale techniques. The tensorial interactions were shown to be renormalizable up to order $6$, and to generate up to quadratically divergent subgraphs. The same multi-scale techniques could then be used to reabsorb divergences into tensorial effective coupling constants, as well as wave-function counter-terms, thus defining renormalized amplitudes. Computed as sums over particular types of Zimmermann forests, they could finally be proven finite at all orders of perturbation, which is the main result of this paper. Along the way, many useful technical results could be gathered about melonic subgraphs, which will certainly be relevant to future works such as $\beta$-functions calculations. Additionally, divergent forests were found to be unexpectedly rigid in their structure, which helped simplifying some aspects of the proof of renormalizability.

\

The present study provides a few lessons which in our opinion will have to be kept in mind in the construction and renormalization analysis of more elaborate models, in particular models for 4d quantum gravity. First, concerning TGFTs \textit{per se}, the message we would like to convey is that, in order to efficiently index the divergences, the most appropriate notion of connectedness is face-connectedness rather than vertex-connectedness. This is particularly true in models implementing the gauge invariance condition, in which the amplitudes are functions of holonomies around faces. The natural coarse-graining procedure in this situation is indeed to erase "high energy" faces rather than internal lines, and this can be consistently implemented in what we called tracial subgraphs. Face-connectedness also crucially enters the power-counting theorem of such models, through the rank of the incidence matrix between faces and lines of a given graph. While a proof of renormalizability can certainly be achieved with a notion of vertex-connected high divergent subgraphs only, the face-connected high divergent subgraphs we relied on in this paper capture the fine structure of the divergences, and henceforth avoid many redundancies in the renormalization. This is exemplified by the fact that divergent subgraphs in the sense of face-connectedness do not overlap (in non-vacuum graphs), but rather organize themselves into a strong inclusion forest. Had we worked in the coarser vertex-connectedness picture, overlapping divergences would have been generic, and redundant counter-terms would have been introduced. An intriguing question to ask, in this respect, is whether face-connectedness might prove more fundamental in simpler TGFTs as well, for example in the original model \cite{bgriv}, where no connection degrees of freedom are introduced. On the other hand, we also remarked that the usual vertex-connected divergent graphs remain at the root of the effective expansion, hence we cannot take full advantage of face-connectedness in this Wilsonian context. This suggests an interesting analogy between face-connected graphs in TGFTs and $1$-particle irreducible graphs in ordinary quantum field theory, which deserves further investigation.

\

Let us now turn to the hard question of the renormalizability of quantum gravity models in four dimensions. While we do not have any definitive statement to make on this issue, since, as we explained earlier, the analysis presented here does not immediately generalize to models involving simplicity constraints, as the latter GFTs, beside the fact that they are not in the class of models considered in this paper, are also not based on group manifolds as such, but rather submanifolds of the Lorentz group. Still, it seems to us that the three dimensional $\SU(2)$ model studied in this paper suggests to reconsider and improve the current spin foam models for quantum gravity in two essential ways, before attempting any complete study of renormalizability. The first concerns the much debated nature of scales in such models, and the definition of non-trivial propagators which decay in the UV. In particular, we think that the results of the present paper suggest  that, in general, semi-classical reasoning interpreting the large-$j$ limit of spin foam models (the Feynman amplitudes of GFTs) as the IR general relativistic limit should be taken with care. Indeed, perturbative divergences being associated to the same large-$j$ sector (or equivalently small Schwinger parameter $\alpha$), it actually plays the role of the UV in our TGFT setting. In the Wilsonian point of view, it is therefore only for boundary states with scales much lower than the cut-off that the theory retains some predictive power. This points in the direction of large boundary geometries having to be constructed as collections of many small cells rather than a few big ones. And in practice, this means that one will have to address the question of approximate effective schemes, in order to control such regimes with large numbers of particles. An intriguing possibility would be the occurrence of one or several phase transitions along the renormalization flow. This scenario might already start to be tested in the three dimensional case, the first step being the computation of $\beta$-functions. In any case, we need to understand how to choose non-trivial kernels for propagators in four dimensional models. While there are some hints \cite{josephvalentin} that the Laplace-Beltrami operator is naturally generated by the quantum dynamics, when no simplicity constraints are imposed, whether the same is true in the presence of simplicity constraints is unclear at present. We believe this is the first open question to address as far as the renormalizabilty of TGFTs for four dimensional quantum gravity is concerned. The second point which deserves similar attention is the possible interplay between tensorial invariance and simplicity constraints, as it is not immediately clear whether the geometric meaning and motivations for such constraints, as well as the details of their implementation, straightforwardly generalize to bubble interactions. If these two important questions can be elucidated, one might try to apply the techniques used in this paper to determine whether four dimensional TGFT models for quantum gravity with such simplicity constraints are renormalizable or not, and up to which order of interactions. 

\section*{Acknowledgements}
S.C. thanks Joseph Ben Geloun for interesting discussions, and for inviting him at the Perimeter Institute while this article was under completion. 
This work is partially supported by a Sofja Kovalevskaja Award by the A. von Humboldt Stiftung, which is gratefully acknowledged.


\appendix

\section*{Appendix}

\section{Heat Kernel}

Consider the $S_3$ representation of $SU(2)$, the identity $\one$ being at the north pole,
$H_0 = S_2$ being the equator and $- \one$ being the south pole. The north and south 
open hemispheres are noted respectively as $H_N$ and $H_S$. 

The heat kernel between two points $g$ and $g'$ is:
\beq\label{heat_k}
K_\alpha(g,g') = \sum_{j \in {\mathbb N}/2}(2j + 1)e^{-j(j+1)\alpha} \frac{\sin((2j + 1) \psi(g'g^{-1})}{\sin \psi(g'g^{-1})} \,,
\eeq
where $\psi(g) \in [0, \pi]$ is the class angle of $g \in SU(2)$, which is 0 at $\one$
and $\pi$ at $-\one$.
It is also the sum over Brownian paths in $SU(2)$ from $g$ to $g'$
$$
K_\alpha(g,g') = \int \extd P_{\alpha} (g,g')[\omega]
$$
where $\extd P_{\alpha} (g,g')[\omega]$ is the Wiener measure over Brownian paths $\omega$ going from
$g$ to $g'$ in time $\alpha$.

The northern heat kernel with Dirichlet boundary conditions, called
$K^{N,D}_\alpha(g,g')$  is the same integral, but in which the 
Brownian paths are constrained to lie entirely in $H_N$, except
possibly their end points $g$ and $g'$, which  are allowed to belong to the {\emph{closed}}
hemisphere $\bar H_N$. Obviously:
\beq\label{dirichlet}
K^{N,D}_\alpha(g,g')  \le K_\alpha(g,g')
\eeq
since there are less paths in the left hand side than in the right hand side.

From the Markovian character of the heat kernel 
$K_\alpha$ we have a convolution equation for 
with $g \in H_S$, in terms of the first hitting point $g'$
where the path visits the equatorial boundary: 

\beq\label{conv}
K_\alpha  (I, g)  = \int_0^{\alpha} \extd \alpha' \int_{g' \in H_0}  \extd g' K^{N,D}_{\alpha'} (I, g') K_{\alpha- \alpha'} (g', g) .
\eeq

\section{Proof of heat kernel bounds}

In order to prove lemma \ref{heat}, we first re-express the heat kernel on $\SU(2)$ in terms of the third Jacobi $\theta$-function
\beq
\theta_3 (z , t) = 1 + 2 \sum_{n = 1}^{+ \infty} \e^{{\rm i} \pi n^2 t} \cos (2 \pi n z)\,,
\eeq
defined for any $(z , t) \in \mathbb{C} \times \mathbb{R}$. We note $\theta_3'$ its derivative with respect to $z$:
\beq
\theta_3' (z , t) = - 4 \pi \sum_{n = 1}^{+ \infty} n \e^{{\rm i} \pi n^2 t} \sin (2 \pi n z)\,.
\eeq 
From equation (\ref{heat_k}), we deduce that:
\beq
K_{\alpha} (g) = \frac{- \e^{\alpha / 4}}{4 \pi \sin \psi(g)} \theta_3' \left( \frac{\psi(g)}{2 \pi} , \frac{i \alpha}{4 \pi} \right)\,. 
\eeq
The main interest of this expression is that $\theta_3$ transforms nicely under the modular group, and in particular\footnote{This is a consequence of the Poisson summation formula, so one might as well directly use this theorem instead of introducing $\theta$.}:
\beq
\theta_3 ( \frac{z}{t} , \frac{- 1}{t}) = \sqrt{- {\rm i} t } \e^{\frac{{\rm i} \pi z^2 }{t}} \theta_3 ( z , t)\,.
\eeq
Differentiation with respect to $z$ yields:
\beq
\theta_3' (z , t) = \frac{\e^{- \frac{{\rm i} \pi z^2 }{t}}}{t \sqrt{- {\rm i} t }} \left( \theta_3' \left( \frac{z}{t} , \frac{- 1}{t} \right) - 2 \pi {\rm i} z \theta_3 \left( \frac{z}{t} , \frac{- 1}{t} \right) \right)\,,
\eeq
and allows to express the heat kernel as
\beq
K_{\alpha} (g) = \frac{\e^{- \frac{ {\psi(g)}^2 }{\alpha}}}{ \alpha^{3/2} } 
\times \frac{ (4 \pi)^{1/2} {\rm i} \e^{\alpha / 4}}{ \sin \psi(g)} \left( \theta_3' \left( \frac{ - 2 {\rm i} \psi(g) }{\alpha} , \frac{4 {\rm i} \pi}{\alpha} \right) - {\rm i} \psi(g) \theta_3 \left(  \frac{ - 2 {\rm i} \psi(g) }{\alpha} , \frac{4 {\rm i} \pi}{\alpha} \right) \right) \,.
\eeq

Using the explicit expressions of $\theta_3$ and $\theta_3'$, we finally obtain:

\beq\label{nice_form}
K_{\alpha} (g) = K_{\alpha}^{0} (g) \frac{ \sqrt{4 \pi} \e^{\alpha / 4} \psi(g)}{\sin{\psi(g)}} F_\alpha (\psi(g))\,,
\eeq
where
\bes
K_{\alpha}^{0} (g) &\equiv& \frac{\e^{- \frac{ {\psi(g)}^2 }{\alpha}}}{ \alpha^{3/2} } \,, \\
F_{\alpha} (g) &\equiv&  1 + \sum_{n = 1}^{+ \infty} \e^{- 4 \pi^2 n^2 / \alpha} 
\left( 2 \cosh( \frac{4 \pi n \psi(g)}{\alpha} ) - \frac{4 \pi n}{\psi(g)} \sinh( \frac{4 \pi n \psi(g)}{\alpha}
) \right) \,.
\ees

\

This formula is suitable for investigating the behavior of $K_\alpha$ away from $- \one$. In particular, simple integral bounds on $F_\alpha$ allow to prove that:
\beq
K_\alpha (g) \underset{\alpha \to 0}{\sim} \frac{\e^{- \frac{ {\psi(g)}^2 }{\alpha}}}{ \alpha^{3/2} } 
\frac{ \sqrt{4 \pi} \psi(g)}{\sin{\psi(g)}} 
\eeq
uniformly on any compact $H$ such that $- \one \notin H$. We shall therefore first study the behavior of $K_\alpha$ and its derivatives on the fixed compact $H_{\frac{3 \pi}{4}} = \{ g \in \SU(2) \vert \psi(g) \leq \frac{3 \pi}{4}\}$. Relying on convolution properties of the heat kernel, we will then extend these results to all of $\SU(2)$. 

\

\noindent {\bf Bounds on $H_{\frac{3 \pi}{4}}$}

\
$K_{\alpha}^{0}$ is easy to analyze, as it is nothing but the flat version of $K_{\alpha}$. The function $\psi \mapsto \frac{\psi(g)}{\sin \psi(g)}$ is analytic on $H_{\frac{3 \pi}{4}}$, therefore its contributions to $K_\alpha$ and its derivatives will be uniformly bounded. The non trivial point of the proof consists in proving that $F_\alpha$ and all its derivatives are also uniformly bounded, by constants independent of $\alpha \in ] 0 , 1]$. By expanding the hyperbolic functions, we can first write:
\bes
F_\alpha (\psi) &=& 1 + \sum_{n = 1}^{+ \infty} \e^{- 4 \pi^2 n^2 / \alpha}
	\sum_{p = 0}^{\infty} a_p (n , \alpha) \psi^{2 p} ,  \\
a_p (n , \alpha) &\equiv& \frac{2}{(2 p) !} \left( \frac{4 \pi n}{ \alpha } \right)^{2 p} \left[ 1 - \frac{16 \pi^2 n^2}{(2 p + 1) \alpha} \right]	.
\ees
We can fix $0 < \epsilon < 1$, and find a constant $K_\epsilon$ such that:
\beq
\e^{- 4 \pi^2 n^2 / \alpha} |a_p (n , \alpha ) | \leq K_\epsilon \e^{- 4 \pi^2 n^2 (1 - \epsilon) / \alpha} \frac{2}{(2 p) !} \left( \frac{4 \pi n}{ \alpha } \right)^{2 p} \,.
\eeq
This implies the following bounds, for any $k \in \mathbb{N}$:
\beq
\vert F_\alpha^{(k)} (\psi) \vert \leq \frac{\partial^{k}}{\partial \psi^k} \left( 1 + 2 K_\epsilon \sum_{n = 1}^{+ \infty} \e^{- 4 \pi^2 n^2 (1 - \epsilon) / \alpha} \cosh ( \frac{4 \pi n \psi}{\alpha} ) \right) \,.
\eeq
When $n \geq 1$, we can use the fact that
\beq
\frac{\partial^{k}}{\partial \psi^k} \cosh ( \frac{4 \pi n \psi}{\alpha} ) \leq  ( \frac{4 \pi n}{\alpha} )^{k} \cosh ( \frac{4 \pi n \psi}{\alpha} ) \,,
\eeq
and the exponential decay in $n^2 / \alpha$ to deduce bounds without derivatives. All in all, we see that for any $\epsilon$, we can find constants $K_\epsilon^{(k)}$ such that:
\bes
\vert F_\alpha (\psi) \vert &\leq& 1 + K_\epsilon^{(0)} \sum_{n = 1}^{+ \infty} \e^{- 4 \pi^2 n^2 (1 - \epsilon) / \alpha} \cosh ( \frac{4 \pi n \psi}{\alpha} ) \,, \\
\vert F_\alpha^{(k)} (\psi) \vert &\leq& K_\epsilon^{(k)} \alpha^{\frac{- k}{2}} \sum_{n = 1}^{+ \infty} \e^{- 4 \pi^2 n^2 (1 - \epsilon) / \alpha} \cosh ( \frac{4 \pi n \psi}{\alpha} ) \,.
\ees
 Following \cite{bgriv}, let us assume that $\epsilon \leq \frac{5 \pi}{8}$, in order to ensure that the function $$
x \mapsto \e^{- 4 (1 - \epsilon) \pi^2 x^2 / \alpha} \cosh ( \frac{4 \pi x \psi} {\alpha} ) 
$$
decreases on $[1, + \infty [$ for any $\psi \in [ 0 , \frac{3 \pi}{4} ]$. This provides us with the following integral bound:
\beq
\sum_{n = 1}^{+ \infty} \e^{- 4 (1 - \epsilon) \pi^2 n^2 / \alpha} \cosh ( \frac{4 \pi n \psi}{\alpha} ) \leq \int_{1}^{ + \infty} \e^{- 4 (1 - \epsilon) \pi^2 x^2 / \alpha} \cosh ( \frac{4 \pi x \psi} {\alpha} ) \extd x \,.
\eeq
Putting the latter in Gaussian form yields an expression in terms of the error function $\rm{erfc}(x) \equiv \int_{x}^{+ \infty} \e^{- t^2} \extd t \leq \e^{- x^2}$:
\bes
\int_{1}^{ + \infty} \e^{- 4 (1 - \epsilon) \pi^2 x^2 / \alpha} \cosh ( \frac{4 \pi x \psi} {\alpha} ) \extd x 
&=& \frac{ \e^{\frac{\psi^2}{(1 - \epsilon) \alpha}}\sqrt{ \pi \alpha}}{8 \pi \sqrt{1 - \epsilon}} \left[ \rm{erfc}\left( 2 \pi \sqrt{\frac{1 - \epsilon}{\alpha}}\pi + \frac{\psi}{\sqrt{(1 - \epsilon) \alpha}}\right) \right. \nn \\
&& \qquad \left. + \; \rm{erfc}\left( 2 \pi \sqrt{\frac{1 - \epsilon}{\alpha}}\pi - \frac{\psi}{\sqrt{(1 - \epsilon) \alpha}}\right) \right] \\
&\leq& \frac{ \e^{\frac{\psi^2}{(1 - \epsilon) \alpha}}\sqrt{ \pi \alpha}}{8 \pi \sqrt{1 - \epsilon}} \left[ \e^{- (2 \pi \sqrt{\frac{1 - \epsilon}{\alpha}}\pi + \frac{\psi}{\sqrt{(1 - \epsilon) \alpha}})^{2}} \right. \nn \\
&& \qquad \left. + \; \e^{- (2 \pi \sqrt{\frac{1 - \epsilon}{\alpha}}\pi - \frac{\psi}{\sqrt{(1 - \epsilon) \alpha}})^{2}} \right] \\
&\leq& \frac{\sqrt{ \pi \alpha}}{8 \pi \sqrt{1 - \epsilon}} \e^{- 4 \pi^2 \frac{1 - \epsilon}{\alpha}}
 \left[ \e^{\frac{4 \pi \psi}{\alpha}} + \e^{- \frac{4 \pi \psi}{\alpha}} \right]\,.
\ees
The last expression is bounded by a constant independent of $\alpha$ and $\psi \in [0 , \frac{3 \pi}{4}]$ provided that $\epsilon \leq \frac{1}{4}$, which is an admissible choice. This concludes the proof of the existence of constants $K^{(k)}$ such that:
\bes\label{bound_F}
\vert F_\alpha (\psi) \vert &\leq& 1 + K^{(0)} \sqrt{\alpha}  \,, \\
\vert F_\alpha^{(k)} (\psi) \vert &\leq& K^{(k)} \alpha^{\frac{1 - k}{2}} \,,
\ees
on $H_{\frac{3 \pi}{4}}$. Using equation (\ref{nice_form}), it is then easy to prove that when $\alpha \in ] 0 , 1 ]$, $K_\alpha$ verifies the same type of bounds as $K_\alpha^0$ on $H_{\frac{3 \pi}{4}}$, therefore concluding the proof of lemma \ref{heat} on this subset.

%

\

\noindent {\bf Extension to $\SU(2)$}

\

Suppose that $g \in \SU(2) \setminus H_{3 \pi /4}$. We can use formula (\ref{conv}) and (\ref{dirichlet}) to write:
\beq
K_\alpha (g) \leq \int_0^{\alpha} \extd \alpha' \int_{g' \in H_0}  \extd g' K_{\alpha'} (g') K_{\alpha- \alpha'} (g'^{\inv} g) \,.
\eeq
This upper bound involves only heat kernels evaluated in $H_{3 \pi /4}$. Moreover, $K_{\alpha'} (g')$ does not depend on the particular value of $g' \in H_0$, the squared distance to $\one$ of the latter and of $g'^{\inv} g$ being bounded from below by a constant $c > 0$. 

From the discussion above, we know that there exists constants $\delta_1$ and $K_1$ such that:
\beq
K_\alpha (g) \leq K_1 \int_{g' \in H_0} \extd g' \int_0^{\alpha} \extd \alpha'  \frac{ \e^{- \delta_1 \vert g' \vert^2 / \alpha'}}{\alpha'^{3/2}} \frac{\e^{- \delta_1 \vert g'^{\inv} g \vert^2 / (\alpha- \alpha')}}{(\alpha- \alpha')^{3/2}} \,.
\eeq
To take care of the singularities in $\alpha' = 0$ and $\alpha' = \alpha$, we can decompose the integral over $\alpha'$ into two components: from $0$ to $\alpha/2$, and from $\alpha/2$ to $\alpha$. Each of these integrals can then be bounded independently, for instance:
\bes
\int_0^{\alpha / 2} \extd \alpha'  \frac{ \e^{- \delta_1 \vert g' \vert^2 / \alpha'}}{\alpha'^{3/2}} \frac{\e^{- \delta_1 \vert g'^{\inv} g \vert^2 / (\alpha- \alpha')}}{(\alpha- \alpha')^{3/2}}
&\leq& \int_0^{\alpha / 2} \extd \alpha' \frac{ \e^{- \delta_1 c / \alpha'}}{\alpha'^{3/2}} \frac{\e^{- \delta_1 c / \alpha}}{(\alpha / 2)^{3/2}} \\
&=& K_2 \frac{\e^{- \delta_1 c / \alpha}}{\alpha^{3/2}} \,.
\ees
We can bound the second integral in the same way, and therefore conclude that:
\beq\label{ext_bound}
K_{\alpha}(g) \leq K \frac{\e^{- \delta_2 / \alpha}}{\alpha^{3/2}} \leq K \frac{\e^{- \delta \vert g \vert^{2} / \alpha}}{\alpha^{3/2}}
\eeq
for some constants $K > 0$ and $\delta > 0$.

\

We can proceed in a similar way for the derivatives of $K_\alpha$. We fix $k \geq 1$, and a normalized Lie algebra element $X$. From (\ref{conv}), we deduce 
\bes
\vert (\cL_X)^n K_\alpha  (g) \vert  &=& \vert \int_0^{\alpha} \extd \alpha' \int_{H_0}  \extd g' K^{N,D}_{\alpha'} (g') (\cL_X)^k K_{\alpha- \alpha'} (g'^{\inv} g) \vert \\
&\leq& \int_0^{\alpha} \extd \alpha' \int_{H_0}  \extd g' K_{\alpha'} (g') \vert (\cL_X)^k K_{\alpha- \alpha'} (g'^{\inv} g) \vert \\
&\leq&  K_1 \int_{g' \in H_0} \extd g' \int_0^{\alpha} \extd \alpha'  \frac{ \e^{- \delta_1 \vert g' \vert^2 / \alpha'}}{\alpha'^{3/2}} \frac{\e^{- \delta_1 \vert g'^{\inv} g \vert^2 / ( \alpha- \alpha')}}{(\alpha- \alpha')^{(3 + k)/2}}\,,
\ees
for some constants $K_1$ and $\delta_1$. 
The same method as before allows to show that 
\beq
\vert (\cL_X)^k K_\alpha  (g) \vert \leq  K \frac{\e^{- \delta \vert g \vert / \sqrt{\alpha}}}{\alpha^{(3+ k)/2}}\,,
\eeq 
for some constants $K > 0$ and $\delta > 0$.


{}

\end{document}